\newtheorem{theorem}{Theorem}
\newtheorem{conjecture}[theorem]{Conjecture}
\newtheorem{proposition}[theorem]{Proposition}
\newtheorem{corollary}[theorem]{Corollary}
\newtheorem{definition}{Definition}
\newtheorem{lemma}[theorem]{Lemma}
\theoremstyle{definition}
\newtheorem*{remark}{Remark}
\newtheoremstyle{red}{}{}{\normalfont}{}{\color{red!80!black}\bfseries}{.}{ }{}
\theoremstyle{red}
\let\nc\newcommand
\nc{\NegW}{W_\tau}
\newcommand{\<}{\left\langle}
\renewcommand{\>}{\right\rangle}
\renewcommand{\bar}{\;\rule{0pt}{9.5pt}\right|\;}
\newcommand{\lset}{\left\{\left.}
\newcommand{\rset}{\right\}}
\newcommand{\lsetr}{\left\{\,}
\newcommand{\rsetr}{\right.\right\}}
\newcommand{\barr}{\;\rule{0pt}{9.5pt}\left|\;}
\DeclareMathOperator{\Tr}{Tr}
\DeclareMathOperator{\supp}{supp}
\nc{\Neg}{\textetr{n}}
\nc{\Rob}{\textetr{r}}
\nc{\h}{\textetr{h}}
\let\textleq\relax
\newcommand{\texteq}[1]{\stackrel{\mathclap{\scriptsize \mbox{#1}}}{=}}
\newcommand{\textleq}[1]{\stackrel{\mathclap{\scriptsize \mbox{#1}}}{\leq}}
\newcommand{\norm}[2]{\left\lVert#1\right\rVert_{\,#2}}
\newcommand{\proj}[1]{\ket{#1}\!\bra{#1}}
\newcommand{\ketbra}[1]{\ket{#1}\!\bra{#1}}
\newcommand{\ketbraa}[2]{\ket{#1}\!\bra{#2}}
\nc{\id}{\mathbbm{1}}
\newcommand{\HS}{\mathcal{HS}}
\newcommand{\B}{\mathcal{B}}
\newcommand{\D}{\mathcal{D}}
\renewcommand{\H}{\mathcal{H}}
\newcommand{\T}{\mathcal{T}}
\newcommand{\sa}{{\mathrm{sa}}}
\newcommand{\V}{\mathcal{V}}
\newcommand{\RR}{\mathbb{R}}
\newcommand{\CC}{\mathbb{C}}
\newcommand{\NN}{\mathbb{N}}
\newcommand{\R}{\mathrm{\scriptstyle R}}
\newcommand{\sumno}{\sum\nolimits}
\nc{\PPTP}{\mathrm{PPTP}}
\nc{\SEPP}{\mathrm{SEPP}}
\nc{\SP}{\mathrm{KP}}
\nc{\FP}{\mathrm{FP}}
\nc{\CPTP}{{\mathrm{CPTP}}}
\nc{\CP}{{\mathrm{CP}}}
\let\wt\widetilde
\nc{\Op}{\mathcal{O}}
\nc{\idc}{\mathrm{id}}
\nc{\ve}{\varepsilon}
\nc{\Omax}{\O_{\mathrm{max}}}
\newcommand\xxrightarrow[2][]{\mathrel{%
  \setbox2=\hbox{\stackon{\scriptstyle#1}{\scriptstyle#2}}%
  \stackunder[0pt]{%
    \xrightarrow{\makebox[\dimexpr\wd2\relax]{$\scriptstyle#2$}}%
  }{%
   \scriptstyle#1%
  }%
}}
\newcommand{\toO}{\xxrightarrow[\vphantom{a}\smash{{\raisebox{-.5pt}{\ssmall{$\O$}}}}]{}}
\newcommand{\toOmu}{\xxrightarrow[\vphantom{a}\smash{{\raisebox{-.5pt}{\ssmall{$\O_\mu$}}}}]{}}
\newcommand{\toOx}[1]{\xxrightarrow[\vphantom{a}\smash{{\raisebox{-.5pt}{\ssmall{$\O_{#1}$}}}}]{}}
\newcommand{\floor}[1]{\left\lfloor #1 \right\rfloor}
\newcommand{\ceil}[1]{\left\lceil #1 \right\rceil}
\renewenvironment{boxed}[1]%
	{\expandafter\ifstrequal\expandafter{#1}{orange}{\begin{tcolorbox}[colback=red!15,colframe=orange!15,breakable,enhanced]}{\begin{tcolorbox}[colback=white,colframe=gray!10,breakable,enhanced]}}%
	{\end{tcolorbox}}
\newcommand{\bb}{\begin{equation}\begin{aligned}\hspace{0pt}}
\newcommand{\bbb}{\begin{equation*}\begin{aligned}}
\newcommand{\ee}{\end{aligned}\end{equation}}
\newcommand{\eee}{\end{aligned}\end{equation*}}
\newcommand{\OO}{\mathbb{O}}
\newcommand{\FF}{\mathbb{F}}
\let\F\FF
\let\O\OO
\nc{\STAB}{{\FF_{\rm STAB}}}
\nc{\SEP}{{\FF_{\rm SEP}}}
\nc{\PPT}{{\FF_{\rm PPT}}}
\nc{\W}{{\FF_{\rm W}}}
\nc{\LOCC}{\OO_{\mathrm{LOCC}}}
\nc{\NE}{\OO_{\mathrm{NE}}}
\nc{\SEPO}{\OO_{\mathrm{SEP}}}
\nc{\PPTO}{\OO_{\mathrm{PPT}}}
\pgfplotsset{width=11cm,compat=1.9}
\def\l@subsection#1#2{}
\def\l@subsubsection#1#2{}
\let\epsilon\varepsilon
\newcommand{\loginf}[2][\mu]{L_{#1,\infty}(#2)}
\newcommand{\loginfd}[2][\mu]{L_{#1,\infty}^\circ(#2)}
\newcommand{\deff}[1]{\textbf{\emph{#1}}}
\begin{document}

\title{Functional analytic insights into irreversibility of quantum resources}

\author{Bartosz Regula}
\email{bartosz.regula@gmail.com}
\affiliation{Mathematical Quantum Information RIKEN Hakubi Research Team, RIKEN Cluster for Pioneering Research (CPR) and RIKEN Center for Quantum Computing (RQC), Wako, Saitama 351-0198, Japan}
\affiliation{Department of Physics, Graduate School of Science, The University of Tokyo, Bunkyo-ku, Tokyo 113-0033, Japan}

\author{Ludovico Lami}
\email{ludovico.lami@gmail.com}
\affiliation{Institut f\"{u}r Theoretische Physik und IQST, Universit\"{a}t Ulm, Albert-Einstein-Allee 11, D-89069 Ulm, Germany}
\affiliation{QuSoft, Science Park 123, 1098 XG Amsterdam, the Netherlands}
\affiliation{Korteweg-de Vries Institute for Mathematics, University of Amsterdam, Science Park 105-107, 1098 XG Amsterdam, the Netherlands}
\affiliation{Institute for Theoretical Physics, University of Amsterdam, Science Park 904, 1098 XH Amsterdam, the Netherlands}

%%%%%%%%%%%%%%%%%%%%%%%%%%%%%%%%%%%%%%%%%%%%%%%%%%%%%%%%%%%%%%%%%%%%

\begin{abstract}
We propose an approach to the study of quantum resource manipulation based on the basic observation that quantum channels which preserve certain sets of states are contractive with respect to the base norms induced by those sets. We forgo the usual physical assumptions on quantum dynamics: instead of enforcing complete positivity, trace preservation, or resource-theoretic considerations, we study transformation protocols as norm-contractive maps. This allows us to apply to this problem a technical toolset from functional and convex analysis, unifying previous approaches and introducing new families of bounds for the distillable resources and the resource cost, both one-shot and asymptotic. Since our expressions lend themselves naturally to single-letter forms, they can often be calculated in practice; by doing so, we demonstrate with examples that they can yield the best known bounds on quantities such as the entanglement cost. As applications, we not only give an alternative derivation of the recent result of [\href{https://www.nature.com/articles/s41567-022-01873-9}{Nat.\ Phys.\ 19, 184--189 (2023)}] which showed that entanglement theory is asymptotically irreversible, but also provide the quantities introduced in that work with explicit operational meaning in the context of entanglement distillation through a variation of the hypothesis testing relative entropy. Besides entanglement, we reveal a new irreversible quantum resource: through improved bounds for state transformations in the resource theory of magic-state quantum computation, we show that there exist qutrit magic states that cannot be reversibly interconverted under stabiliser protocols.
\end{abstract}

\maketitle

\tableofcontents

\section{Introduction}

The practical use of various physical features of quantum mechanical systems to enhance tasks such as computation, communication, or information processing is underlain by our ability to transform these resources efficiently. To understand the limits of such transformations has therefore been one of the most pressing problems of quantum information science. The ultimate extent of resource conversion is characterised by asymptotic transformation rates --- that is, the question of how well we can transform quantum systems when allowed to use an unbounded number of their copies. The study of these rates forms the foundation of quantum information theory, embodied by problems such as the calculation of quantum channel capacities~\cite{lloyd_1997, shor_2002, devetak_2005-1} or the distillation~\cite{bennett_1996-1} and dilution~\cite{bennett_1996} of quantum entanglement. Unfortunately, an exact understanding of the best achievable rates is often intractable, and the available bounds are, in many cases, quite loose. The problem is exacerbated by the wide variety of different settings and quantum resources beyond entanglement, each seemingly requiring specialised and unique approaches to fully characterise.

The unified framework of \emph{quantum resource theories}~\cite{horodecki_2012,chitambar_2019} was conceived to understand the common features of the various resources encountered in the processing of quantum systems, and has already provided significant insights into both asymptotic~\cite{brandao_2010-1, horodecki_2012, brandao_2015, kuroiwa_2020, ferrari_2020, regula_2021-1} and single-shot~\cite{gour_2017, takagi_2019, liu_2019, regula_2020, fang_2020, regula_2021-1} state transformations. Its remarkable generality has allowed for the consolidated description of many physical phenomena in a broad, axiomatic manner under a comprehensive mathematical formalism~\cite{coecke_2016, delrio_2015, gonda_2019}. 
However, this same generality can also be a drawback: in order to remain general, one can only make broadly applicable assumptions on the considered resources, which means that specific features of different resources might remain unaccounted for. This could suggest that it is not in our best interest to strive for such general approaches, and instead more tailored methods should be employed to obtain better results. We hereby put this premise into question by introducing an axiomatic framework that is strictly more general than previous approaches, yet it yields an improvement over known bounds for asymptotic resource transformations and reveals features of quantum resource manipulation that prior methods were unable to %do.
uncover.

\subsection{Quantum resources and their manipulation}\label{sec:intro_qres}

A quantum resource theory is a mathematical model of operational constraints encountered in manipulating quantum systems, stemming from either practical or fundamental limitations. It is specified by a set $\F$ of \emph{free states}, comprising certain density operators on a family of quantum systems, and a set $\O$ of \emph{free operations}, i.e.\ state-to-state transformations that connect those systems. Operationally, $\F$ and $\O$ represent states and operations that are relatively inexpensive to construct or implement in the laboratory; together, they form the basic ingredients of a quantum resource theory. In order to be physically realisable, any transformation in $\O$ should be a quantum channel, i.e.\ a completely positive and trace preserving linear map. In the resource theory of quantum entanglement~\cite{horodecki_2009}, for example, $\F$ is taken to be the set of separable states over bipartite quantum systems~\cite{werner_1989}, while historically $\O$ is usually taken to be the set of transformation implementable via local operations assisted by classical communication (LOCC)~\cite{chitambar_2014}, although other choices of operations have also found fruitful applications~\cite{rains_1999-1,audenaert_2003,brandao_2010,wang_2016,lami_2021-1}.

One of the central problems of quantum resource theories is to characterise the ultimate asymptotic transformation rates that are achievable by means of the free operations. In particular, one is interested in two opposite processes: on the one hand, that of \emph{distillation}~\cite{bennett_1996-1}, which consists of taking many copies of a given state $\rho$ and transforming them, via free operations, into as many copies as possible of some standard unit of resource $\phi$ (typically a pure state); and on the other hand, that of \emph{dilution}~\cite{bennett_1996}, in which we take as few copies as possible of $\phi$ and attempt to transform them into copies of $\rho$. In both cases, we allow for an asymptotically small error in the transformation, as quantified by the trace norm. The largest ratio between output copies of $\phi$ and input copies of $\rho$ in distillation is called the \emph{distillable resource} of $\rho$ under operations in $\O$, and denoted by $R_{d,\O}(\rho)$. Conversely, the smallest ratio between input copies of $\phi$ and output copies of $\rho$ in dilution is called the \emph{resource cost} of $\rho$ under operations in $\O$, and denoted by $R_{c,\O}(\rho)$.

For %many sets $\O$ whose construction stems from physical considerations,
well-behaved sets of operations $\OO$ that are not unphysically permissive, variants of the `no-free-lunch' inequality $R_{c,\O}(\rho) \geq R_{d,\O}(\rho)$ can be shown to hold for all $\rho$. Notably, the inequality can be strict, and this fact is at the heart of the phenomenon of resource irreversibility~\cite{vidal_2001, yang_2005, wang_2017-1, lami_2021-1, lami_2022}. In the context of entanglement theory, when $\phi$ is chosen to be the `entanglement bit' (or \emph{ebit}), the above two measures turn into the distillable entanglement and the entanglement cost~\cite{horodecki_2009}; irreversibility in the context of entanglement manipulation under LOCC is a long-known fact~\cite{vidal_2001, yang_2005}, and even stronger variants of it were shown recently~\cite{wang_2017-1,lami_2021-1}.

More generally, computing either $R_{c,\O}(\rho)$ or $R_{d,\O}(\rho)$ for a given state $\rho$ is typically a challenging task. While upper bounds on $R_{c,\O}(\rho)$ or lower bounds on $R_{d,\O}(\rho)$ are relatively easy to obtain by the explicit construction of resource manipulation protocols, lower bounding $R_{c,\O}(\rho)$ or upper bounding $R_{d,\O}(\rho)$ requires a control over \emph{all} possible protocols --- in general, a highly non-trivial task.

Since the physical meaning and origin of $\O$ typically depends heavily on the specific type of quantum resource under examination,  
an approach often taken in the operational characterisation of quantum resources is to employ the natural axiom sometimes referred to as the `golden rule' of quantum resource theories~\cite{chitambar_2019}: a free operation should never generate the given resource, in the sense that if $\sigma \in \F$, then $\Lambda(\sigma) \in \F$. We denote by $\Omax \subseteq \CPTP$ the maximal set of all channels that satisfy this condition. This assumption is intended to be a weak and non-restrictive condition that any physical class of free transformations ought to obey, allowing one to describe all such operations in a single framework.

Beyond mathematical convenience, is such an approach physically meaningful? Take, as an example, the resource theory of entanglement: the operations $\Omax$ --- known as non-entangling or separability-preserving maps~\cite{brandao_2008-1, brandao_2010} --- include all naturally `free' classes of maps such as local operations and classical communication as well as all separable operations, but also transformations such as the swap channel, which can be considered costly to realise in practice because they require quantum communication. Thus, in practical scenarios, one is typically concerned with some physically motivated set of operations of interest $\O \subseteq \Omax$. It then follows immediately that --- for any possible choice of $\O$ --- the rate at which a resource can be distilled can only increase under $\Omax$, i.e.\ $R_{d,\Omax}(\rho) \geq R_{d,\O}(\rho)$, while the cost can only decrease, i.e.\ $R_{c,\Omax}(\rho) \leq R_{c,\O}(\rho)$. Importantly, this immediately gives
\begin{equation}\begin{aligned}\label{eq:shortcut}
	R_{c,\O}(\rho) \geq R_{c,\Omax}(\rho) \geq R_{d,\Omax}(\rho) \geq R_{d,\O}(\rho)
\end{aligned}\end{equation}
as long as the operations $\Omax$ satisfy the property that the distillable resource does not exceed the resource cost. The latter is a natural condition that is generally satisfied in all settings of physical interest, up to a possible appearance of some multiplicative normalisation factors~\cite{kuroiwa_2020, takagi_2021}.
We thus see immediately that studying the relation between resource distillation and dilution under $\Omax$ can immediately shed light on the corresponding relations under any physical class of operations $\O$. Notably, this is in spite of the fact that the operations in $\Omax$ may or may not be physical themselves.

\subsection{Basic idea}\label{sec:intuition}

We notice that a curious property has emerged in Eq.~\eqref{eq:shortcut}: although the operations $\O_{\max}$ are a relaxation of the physically relevant class $\O$, distillation protocols under $\O_{\max}$ provide potentially \emph{better} bounds on $R_{c,\O}$ than distillation protocols under $\O$ would. What if we took this insight further?

In this work, we thus take the unorthodox approach of improving on %the above shortcut strategy
asymptotic transformation bounds by \emph{relaxing} the constraints on quantum resource manipulation \emph{even more} --- concretely, by considering a strictly larger set of maps than $\Omax$. But if $\Omax$ was intended to be the largest possible class of quantum channels relevant in the given resource theory, then how could we hope to define even larger classes? Our main idea may seem quite radical: first, we give up %(i)~
the physicality of the allowed maps, replacing complete positivity and trace preservation with contractivity with respect to the trace norm, expressed in formula as $\norm{\Lambda(X)}{1}\leq \norm{X}{1}$ for all operators $X$; furthermore, we renounce also the golden rule of resource non-generation, replacing it with contractivity with respect to a suitably chosen %`resource gauge' 
norm --- or, more generally, a gauge function --- $\norm{\cdot}{\mu}$, %expressed in formula as 
imposing that $\norm{\Lambda(X)}{\mu}\leq \norm{X}{\mu}$ for all $X$. The resulting enlarged set of free operations will be denoted by $\O_\mu \supset \Omax \supseteq \O$.

Following the basic idea outlined in Sec.~\ref{sec:intro_qres}, we will then aim to find bounds between the resource cost and distillable resource of the form $R_{c,\O}(\rho) \geq R_{c,\O_\mu}(\rho) \geq R_{d,\O_\mu}(\rho) \geq R_{d,\O}(\rho)$. We reemphasise our main point that, regardless of the unphysicality of $\O_\mu$, the quantity $R_{d,\O_\mu}(\rho)$ can give a better lower bound on the physically relevant quantity $R_{c,\O}(\rho)$ than $R_{d,\O}(\rho)$ would. Coupled with the fact that any ansatz of a distillation protocol in $\O_\mu$ provides a lower bound for $R_{d,\O_\mu}$, this can serve as a new, potentially stronger class of computable lower bounds for resource cost. 
An analogous statement holds for distillation: $R_{c,\O_\mu}(\rho)$ leads to a better upper bound on $R_{d,\O}(\rho)$ than $R_{c,\O}(\rho)$, and it is in turn straightforward to estimate from above.

In short, the starting point of this work is the realisation that going beyond physically allowed quantum channels and into the realm of maps that are not completely positive can yield improved bounds on operational asymptotic resource quantifiers. 
This programme will prove successful in several ways: most notably, the asymptotic bounds we obtain can go beyond all previously known bounds, such as ones based on the quantum relative entropy~\cite{horodecki_2001,horodecki_2012}, revealing features and limitations of quantum resources that remained completely out of reach of  previous theoretical approaches. 
One of the reasons behind the success of our methods is that by forgoing difficult-to-handle constraints such as complete positivity and resource non-generation, and focusing instead on simpler gauge-function--based constraints such as contractivity, we can employ a wealth of functional and convex analytic tools to tackle the problem. A by-product of this approach is that our results will hold for virtually every convex quantum resource theory.

\subsection{Technical hurdles and main contributions \label{sec:main_results}}

The above idealised picture, although appealing, contains %a substantial 
some potential pitfalls in passing to the asymptotic limit.

First, for larger sets of operations such as $\O_\mu$, it is typically not at all easy to prove the no-free-lunch inequality $R_{c,\O_\mu}(\rho) \geq R_{d,\O_\mu}(\rho)$ that underpins this whole approach. This difficulty is primarily due to a potential asymptotic inequivalence of the trace norm $\norm{\cdot}{1}$ and the resource gauge $\norm{\cdot}{\mu}$.
Second, even if we could circumvent the above obstacle, we %are
would still be left with the problem of lower bounding the asymptotic distillation rate $R_{d,\O_\mu}$ with a quantity that can actually be evaluated in practice. 
However, because the involved norms are typically not multiplicative under tensor products, this may require the evaluation of a regularised multi-copy quantity, which are typically not computable.
And finally, we have to be careful not to take a bound that is too loose --- we want to benefit from working in the extended framework of `unphysical' operations $\O_\mu$, and not merely recover previously known restrictions.

Overcoming these difficulties can be regarded as our main technical contribution. We will achieve this by identifying a novel %type of
variant of hypothesis testing relative entropy, dubbed the \emph{emancipated hypothesis testing relative entropy} (see %~\eqref{eq:Etruscan_D}
Section~\ref{subsec:oneshot_distillation}), which we show to play a central role in characterising distillation under operations in $\O_\mu$. Our key insight is that this quantity obeys a particular type of continuity expressed by the `$\ve$-$\delta$ lemma' below (Lemma~\ref{lem:epsdelta}). By using this observation, we are able to provide a %slightly weakened version 
variant of the no-free-lunch relation $R_{c,\O_\mu}(\rho) \geq R_{d,\O_\mu}(\rho)$ that allows for straightforward lower bounding of the involved rates through suitable choices of convex gauge functions.

The results we will obtain in this way 
compete with state-of-the-art bounds on general transformation rates, recovering and improving on many of the leading asymptotic and non-asymptotic bounds for the manipulation of important quantum resources. We first demonstrate the effectiveness of our techniques by applying them to quantum entanglement~\cite{horodecki_2009}, in which context we provide in particular a number of lower bounds on the entanglement cost of any state that are efficiently computable via semidefinite programs. Obtaining such bounds is a notoriously thorny problem, especially since Hastings's counterexample~\cite{hastings_2009} has disproved the additivity conjectures~\cite{shor_2005} that would have allowed for a significant simplification, leaving us with intractable regularised expressions for asymptotic quantities. We then apply our toolset to the theory of magic states~\cite{veitch_2014, howard_2017} (non-stabiliser quantum computation), where our results immediately yield a number of bounds for the rates of transformations both in the case of many-qudit as well as many-qubit systems. We show that our results can strictly improve on previously known bounds in this setting, including ones based on the regularised relative entropy of magic~\cite{veitch_2014} and one-shot variants thereof~\cite{wang_2020}.

Crucially, the bounds established in our work are strong enough to reveal the \emph{irreversibility} of quantum resources, that is, the inability to reversibly interconvert two quantum states under the constraints of a given resource theory. This phenomenon --- although well known in some restricted settings such as the theory of entanglement under local operations and classical communication --- has been famously difficult to characterise in general~\cite{OpenProblemArxiv, brandao_2010-1, brandao_2008-1}. A pertinent aspect of the problem is that, if a given resource can be shown to be irreversible under all choices of free operations, then this entails that there cannot exist a single, unique monotone  that completely determines asymptotic convertibility of quantum states, which would have mirrored the fundamental role of entropy in many axiomatic approaches to thermodynamics~\cite{giles_1964, lieb_1999, Horodecki2002, vedral_2002, brandao_2015}. 
Only recently was it shown that such a strong irreversibility materialises for entanglement theory~\cite{lami_2021-1}: there exist no free operations that can make entanglement reversible, and so this theory cannot be governed by a single entropic quantity.
Our results can be used not only to recover the main finding of~\cite{lami_2021-1} in a somewhat different manner, but also to endow the monotones introduced in~\cite{lami_2021-1} with a precise operational meaning as the distillable entanglement under linear maps that contract the entanglement negativity.

Beyond entanglement theory, our general approach uncovers a new irreversible resource: %the theory of magic-state quantum computation~\cite{veitch_2014,howard_2017}, which
we show the theory of magic-state quantum computation to be irreversible under all stabiliser operations, i.e.\ under all operations that can be implemented by using only Clifford unitaries, Pauli measurements, and ancillary stabiliser states. 
We do so by exemplifying a pair of qutrit pure magic states that cannot be converted reversibly using general types of free operations that include all stabiliser protocols. 
These results showcase the breadth and generality of our approach, obtained without detriment of its effectiveness --- indeed, the strength of our bounds compared to all previously known approaches allowed us to substantially advance our understanding of the problem of resource irreversibility, which no previous restrictions were able to do. 

The structure and main findings of our paper are as follows.

\begin{itemize}
    \item In Section~\ref{sec:prelim}, we set up the setting of our work and introduce all of the relevant concepts in the description of quantum resource theories and their manipulation. 
    \item In Section~\ref{sec:oneshot}, we introduce tight bounds on resource conversion in the one-shot setting, characterising exactly the restrictions on achievable regimes in resource transformations (Theorems~\ref{thm:oneshot_dilution} and~\ref{thm:oneshot_distillation}). We broadly divide our results between bounds that apply to the task of distillation (purification) as well as more general bounds, applicable in particular to the task of dilution (reverse of distillation). Our description of distillation protocols in Theorem~\ref{thm:oneshot_distillation} is enabled by a %new 
    generalisation of the hypothesis testing relative entropy that we introduce in Section~\ref{subsec:oneshot_distillation}. The key insights here are precise trade-offs between the performance of resource manipulation protocols and the values of related gauge-function--based resource quantifiers (Lemma~\ref{lem:epsdelta}), which will later form the foundation of our asymptotic bounds.
    \item Section~\ref{sec:asymptotic} is concerned with the establishment of our main results: a variety of bounds on the asymptotic rates of transformations between quantum states. First, in Section~\ref{subsec:regularised} we introduce our strongest bounds that have potential to improve on previously known limitations for resource conversion (Theorems~\ref{thm:cost_lowerbound} and~\ref{thm:dist_upperbound}). These bounds require the evaluation of asymptotic regularised quantities, which may be very difficult to compute in full generality. In Section~\ref{subsec:singleletter} we therefore propose a method to bound them by single-letter quantities, which yields a number of efficiently computable bounds on transformation rates that can be evaluated as convex or even semidefinite programs (notably, Corollary~\ref{cor:single_letter_gamma_bound}).  Section~\ref{subsec:mixtures} is then concerned with showing that one can, nevertheless, estimate the regularised quantities of Section~\ref{subsec:regularised} in some cases, allowing for the evaluation of our bounds when the single-letter approaches are not good enough. Finally, Section~\ref{subsec:positivity} discusses how appropriate choices of gauges in our expressions can be used to recover the best known efficiently computable bounds on resource distillation in the literature.
    \item Aiming to give more practical insight into the abstract resource-theoretic methods of the previous sections, Section~\ref{sec:apps} provides general pointers as to how our results can be applied to specific theories in practice and what such applications require.
    \item In Section~\ref{sec:app_ent}, we explicitly apply our results to the theory of entanglement manipulation. We discuss the relation between our bounds and known entanglement measures. We provide a class of new lower bounds on entanglement cost based on the reshuffling criterion for separability (Corollary~\ref{cor:reshuffled_lower_bound}). We finally show how the irreversibility of entanglement manipulation can be %shown by using our results.
    retrieved with a somewhat different technique using our new results (Proposition~\ref{prop:entanglement_irrev}).
    \item Finally, Section~\ref{sec:app_magic} is concerned with the theory of magic-state quantum computation. We show how to apply our results in detail and provide several new bounds on the asymptotic transformation rates. 
    We establish, in particular, the irreversibility of the theory of magic for many-qudit systems (Theorem~\ref{thm:magic_irreversibility}), and discuss a conjecture related to an analogous irreversibility also in the case of many-qubit magic.
\end{itemize}

%%%%%%%%%%%%%%%%%%%%%%%%%%%%%%%%%%%%%%%%%%%%%%%%%%%%%%%%%%%%%%%%%%%%
%%%%%%%%%%%%%%%%%%%%%%%%%%%%%%%%%%%%%%%%%%%%%%%%%%%%%%%%%%%%%%%%%%%%
%%%%%%%%%%%%%%%%%%%%%%%%%%%%%%%%%%%%%%%%%%%%%%%%%%%%%%%%%%%%%%%%%%%%
\section{Preliminaries}\label{sec:prelim}
%%%%%%%%%%%%%%%%%%%%%%%%%%%%%%%%%%%%%%%%%%%%%%%%%%%%%%%%%%%%%%%%%%%%
%%%%%%%%%%%%%%%%%%%%%%%%%%%%%%%%%%%%%%%%%%%%%%%%%%%%%%%%%%%%%%%%%%%%
%%%%%%%%%%%%%%%%%%%%%%%%%%%%%%%%%%%%%%%%%%%%%%%%%%%%%%%%%%%%%%%%%%%%

Our considerations will take place in the space of self-adjoint operators acting on a separable Hilbert space. Specifically, let $\H$ be a separable Hilbert space, finite or infinite dimensional. We use $\T(\H)$ to denote the Banach space of all trace-class operators on $\H$, and $\B(\H)$ the dual space of all bounded operators on $\H$. For any $Y \in \B(\H)$ and $X \in \T(\H)$, we use the notation $\< Y, X \> \coloneqq \Tr Y^\dagger X$, which corresponds to the Hilbert--Schmidt inner product when $Y$ is also a Hilbert--Schmidt operator. The restrictions of $\T(\H)$ and $\B(\H)$ to self-adjoint operators are denoted as $\T_\sa(\H)$ and $\B_\sa(\H)$, respectively, but note that --- unless otherwise specified --- all of the operators we consider in this work will be self-adjoint, and we will sometimes not remark this explicitly. The sets $\T_+(\H)$ and $\B_+(\H)$ are the cones of positive self-adjoint operators in the respective space. Finally, $\D(\H)$ %denotes
stands for the set of all density operators, that is, elements of $\T_+(\H)$ normalised to have unit trace. For any $\ket\psi \in \H$, we sometimes use the shorthand $\psi \coloneqq \proj{\psi} \in \T_+(\H)$.

Given two Hilbert spaces $\H$, $\H'$, we use $\CP(\H \to \H')$ to denote the set of all completely positive linear maps from $\T_\sa(\H)$ to $\T_\sa(\H')$; for simplicity, we will often omit the spaces and simply write $\CP$. The set of quantum channels (completely positive and trace preserving maps) will be denoted by $\CPTP$. All logarithms are to base 2.

\subsection{Asymptotic and non-asymptotic resource conversion}

The general approach to quantum resource theories begins with the identification of a particular set $\FF \subseteq \D(\H)$ of so-called \deff{free states}. They are typically understood as the states that are easy to prepare or available `for free' within the physical constraints of the given resource theory, without incurring additional costs. 
We will in particular be concerned with transformations of quantum states acting one space $\H_{\rm in}$ into states acting on a possibly different space $\H_{\rm out}$. 
The second ingredient is then the identification of a set of \deff{free operations} $\OO$, which are the transformations allowed in the given restricted setting. 
Since our approach will require flexibility in this choice, for now we make no assumptions about the structure or properties of such maps, only assuming them to be some subset of linear maps. 
Formally, a definition of a resource theory can be made as follows.
\begin{definition}
Given two Hilbert spaces $\H_{\rm in}$ and $\H_{\rm out}$, a \textbf{resource theory} is %defined by:
a triple $(\FF,\FF',\OO)$, where:
\begin{enumerate}[(i)]
\item %the sequences of sets of free states 
$\FF = (\FF_n)_n$ and $\FF' = (\FF'_m)_m$ are sequences of sets of free states with $\FF_n \subseteq \D(\H_{\rm in}^{\otimes n})$ and $\FF'_m \subseteq \D(\H_{\rm out}^{\otimes m})$;
\item $\OO = (\OO_{(n,m)})_{n,m}$ is a double sequence of sets of free operations with $\OO_{(n,m)} \subseteq \mathcal{L}\big(\T_{\rm sa}(\H_{\rm in}^{\otimes n}) \to \T_{\rm sa}(\H_{\rm out}^{\otimes m})\big)$.
\end{enumerate}
With a slight abuse of notation, when no ambiguity arise we often use $\FF$ and $\OO$ to refer to any member of the corresponding family. For example, instead of $\rho^{\otimes n} \in \FF_n$ we will occasionally just write $\rho^{\otimes n} \in \FF$.
\end{definition}

The problem that we will focus on in particular concerns the asymptotic conversion of quantum systems. Let us define the rate of transforming a state $\rho\in \D\big(\H_{\mathrm{in}}\big)$ into another state $\sigma\in \D\big(\H_{\mathrm{out}}\big)$ % with $\H_{\mathrm{in}}$ and $\H_{\mathrm{out}}$ not necessarily the same, 
by means of %some class of free %with the operations $\O$ as
the designated class of free operations $\O$ as
\begin{equation}\begin{aligned} \label{eq:rate}
 r (\rho \toO \sigma) \coloneqq \sup \lsetr \liminf_{n\to\infty} \frac{m_n}{n} \barr %(m_n)_n \in \NN,\; 
 \lim_{n \to \infty} \inf_{\Lambda_n \in \O} \norm{\Lambda_n\left(\rho^{\otimes n}\right) - \sigma^{\otimes m_n}}{1} = 0 \rsetr,
\end{aligned}\end{equation}
where the supremum is over all sequences $(m_n)_n$ of positive integers. Here, each operation $\Lambda_n$ is a map $\T_\sa\!\left(\H_{\mathrm{in}}^{\otimes n}\right) \to \T_\sa\!\left(\H_{\mathrm{out}}^{\otimes m_n}\right)$.
As no ambiguity arises here, we simply use the notation $\OO$ without an explicit reference to the underlying spaces, again implicitly assuming that each space $\T_\sa(\H^{\otimes n})$ has its own corresponding set of free states $\FF$ in the given resource theory.

In practical manipulation of quantum resources, one often identifies a \deff{reference state} $\phi$ --- often assumed to be a highly resourceful state, or some state that allows for the efficient use of a given resource --- and treats it %this state
as a standard `unit' of the given resource.
The most important tasks to study are then resource distillation (asymptotic conversion into $\phi$) and resource dilution (asymptotic conversion from $\phi$). Having fixed a choice of $\phi$, we define the \deff{distillable resource} $R_{d,\O}$ and the \deff{resource cost} $R_{c,\O}$ as
\begin{equation}\begin{aligned}\label{eq:dist_cost_def_rates}
	R_{d,\O} (\rho) \coloneqq r(\rho \toO \phi), \qquad R_{c,\O} (\rho) \coloneqq r(\phi \toO \rho)^{-1}.
\end{aligned}\end{equation}

To understand such conversion more precisely, we can first consider non-asymptotic transformations, where the number of copies of an input state is taken to be finite and we fix some allowed error in the conversion. We thus define the \deff{one-shot $\boldsymbol{\ve}$-error distillable resource} $R^{(1),\ve}_{d,\O}$ and \deff{one-shot $\boldsymbol{\ve}$-error resource cost} $R^{(1),\ve}_{c,\O}$ as
\begin{equation}\begin{aligned}
	R^{(1),\ve}_{d,\O} (\rho) &\coloneqq \sup \lsetr m \in \NN \barr \inf_{\Lambda \in \O} \,\frac12 \norm{\Lambda(\rho) - \phi^{\otimes m}}{1} \leq \ve \rsetr\\
	R^{(1),\ve}_{c,\O} (\rho) &\coloneqq \inf \lsetr m \in \NN \barr \inf_{\Lambda \in \O} \,\frac12 \norm{\Lambda\left(\phi^{\otimes m}\right) - \rho}{1} \leq \ve \rsetr.
\end{aligned}\end{equation}
We then have
\begin{equation}\begin{aligned}
	R_{d,\O} (\rho) &= \lim_{\ve \to 0}\, \liminf_{n \to \infty}\, \frac{1}{n} R^{(1),\ve}_{d,\O} \left(\rho^{\otimes n}\right),\\
	R_{c,\O} (\rho) &= \lim_{\ve \to 0}\, \limsup_{n \to \infty}\, \frac{1}{n} R^{(1),\ve}_{c,\O} \left(\rho^{\otimes n}\right).
\end{aligned}\end{equation}
The operation that involves the limit in $n$ in the above expressions is called a \deff{regularisation}. From the technical standpoint, it constitutes the main hurdle to the calculation of the distillable resource and of the resource cost. Since the disproof of the additivity conjectures~\cite{shor_2005} by Hastings~\cite{hastings_2009}, we know that in general it cannot be omitted.

We will also consider zero-error resource manipulation, where no error whatsoever is allowed in the transformations, even at the one-shot level; specifically,
\begin{equation}\begin{aligned}
	R^{(1), \mathrm{exact}}_{d,\O} (\rho) &\coloneqq \sup \lsetr m \in \NN \barr \inf_{\Lambda \in \O} \norm{\Lambda(\rho) - \phi^{\otimes m}}{1} = 0 \rsetr,\\
	R^{(1), \mathrm{exact}}_{c,\O} (\rho) &\coloneqq \inf \lsetr m \in \NN \barr \inf_{\Lambda \in \O} \norm{\Lambda\left(\phi^{\otimes m}\right) - \rho}{1} = 0 \rsetr,
\end{aligned}\end{equation}
and
\begin{equation}\begin{aligned}
	R^{\mathrm{exact}}_{d,\O} (\rho) &= \liminf_{n \to \infty} \,\frac{1}{n} R^{(1),\mathrm{exact}}_{d,\O} \left(\rho^{\otimes n}\right),\\
	R^{\mathrm{exact}}_{c,\O} (\rho) &= \limsup_{n \to \infty} \,\frac{1}{n} R^{(1),\mathrm{exact}}_{c,\O} \left(\rho^{\otimes n}\right).
\end{aligned}\end{equation}

\subsection{Gauge-based free operations}

\deff{Resource monotones} (or \deff{resource measures}) are functions $M: \D(\H) \to \RR_+ \cup \{+\infty\}$ that satisfy $M(\rho) \geq M(\Lambda(\rho))$ for any free operation $\Lambda \in \O$. This includes monotones based on the relative entropy and other distance measures, or functions based on various norms induced by the set $\FF$. 
%such as the base norm induced by the set $\FF$. 
Any such monotone identifies a class of operations which includes all maps $\OO$, but may be strictly larger: namely, the quantum channels contracting the given measure, that is,
\begin{equation}\begin{aligned}
	\O \subseteq \lset \Lambda : \T_\sa(\H_{\mathrm{in}}) \to \T_\sa(\H_{\mathrm{out}}) \bar \Lambda \in \CPTP,\; M(\Lambda(\rho)) \leq  M(\rho) \; \forall \rho \in \D(\H_{\mathrm{in}}) \rset.
\end{aligned}\end{equation}
We will in particular specialise to the case when $M(\rho) = \norm{\rho}{\mu}$ is given by the value of some norm on $\T_\sa(\H)$ or, more generally, a convex gauge function (Minkowski functional)~\cite{rockafellar_1970,bourbaki_2003}.  In order to ensure a proper normalisation of this quantity, we will also assume that its value is at least one for any quantum state. 

\begin{definition}\label{def:gauge}
Given a Hilbert space $\H$, a \deff{resource gauge} $\norm{\cdot}{\mu} : \T_{\sa}(\H) \to \RR_+ \cup \{+\infty\}$ is defined to be any absolutely homogeneous function (i.e., $\norm{t X}{\mu} = |t| \norm{X}{\mu}$) that satisfies the triangle inequality and furthermore obeys the normalisation constraint $\norm{\rho}{\mu} \geq 1$ for all states $\rho \in \D(\H)$.
\end{definition}
Here we use the evocative notation $\norm{\cdot}{\mu}$ because the considered gauges will typically be norms or at least seminorms in all finite-dimensional spaces --- only in the fully general case of infinite-dimensional theories will they diverge to infinity for some states~\cite{lami_2021}. We remark that any gauge function satisfying the constraints of Definition~\ref{def:gauge} is convex. 

Although we do not explicitly require this in the definition, the resource gauges studied in this work will be induced by the underlying constraints of some resource theory. An intuitive and often encountered example of a gauge-based resource monotone is the entanglement negativity~\cite{vidal_2002-1,plenio_2005}: for any bipartite quantum state, the value of $\norm{\rho^\Gamma}{1}$ where $\Gamma$ denotes partial transposition (see Section~\ref{sec:app_ent}) is a useful gauge of how entangled a given state is. Such gauge-based measures can be defined in any finite-dimensional convex resource theory~\cite{regula_2018}, as we will describe shortly, and the restriction to these functions will make it easier to treat also non-positive operators in the same formalism.

Let us follow the line of reasoning based on contractivity further. When acting on a general operator $X \in \T_\sa(\H)$, any map $\Lambda \in \CPTP$ acts as a contraction in trace norm: $\norm{\Lambda(X)}{1} \leq \norm{X}{1}$. Relaxing the assumption of complete positivity and assuming only this contractivity property, we end up with a general set of maps defined as follows.

\begin{definition}
Given two Hilbert spaces $\H_{\rm in}$ and $\H_{\rm out}$ and %a resource seminorm 
the sequences of resource gauges $\norm{\cdot}{\mu}$
defined in each space $ \T_\sa(\H_{\rm in}^{\otimes n})$ and $\T_\sa(\H_{\rm out}^{\otimes m})$, %\ludo{(more precisely, two sequences of seminorms)}, 
the set of \textbf{gauge-based free operations} $\OO_\mu$ is defined by
\begin{equation}\begin{aligned}
	\O_\mu &\coloneqq \lset \Lambda : \T_\sa(\H_{\mathrm{in}}) \to \T_\sa(\H_{\mathrm{out}}) \bar \norm{\Lambda(X)}{1} \leq \norm{X}{1} ,\; \norm{\Lambda(X)}{\mu} \leq \norm{X}{\mu} \; \forall X \; \T_\sa(\H_{\mathrm{in}}) \rset.
\end{aligned}\end{equation}
\end{definition}

When $\norm{\cdot}{\mu}$ is chosen to be a gauge which contracts under all free operations $\Omax$, we necessarily have that $\O_\mu \supset \Omax$ as desired.
Different choices of the gauge function $\norm{\cdot}{\mu}$ will lead to operations with different properties, with the basic underlying idea being the contractivity of the chosen norm. 

For any choice of $\norm{\cdot}{\mu}$, we define the function $\norm{\cdot}{\mu}^\circ : \B_\sa(\H) \to \RR_+ \cup \{+\infty\}$ as
\begin{equation}
    \norm{\cdot}{\mu}^\circ = \sup \lset \< \cdot, W \> \bar \norm{W}{\mu} \leq 1 \rset.
\end{equation}
This is the gauge dual to $\norm{\cdot}{\mu}$ or, more generally, the support function of the set $\lset W \bar \norm{W}{\mu} \leq 1 \rset$. The crucial property is the Cauchy--Schwarz inequality: $\< Q, X \> \leq \norm{Q}{\mu}^\circ \norm{W}{\mu}$ for any $Q \in \B_\sa(\H)$, $W \in \T_\sa(\H)$, which follows directly from the definition.

%%%%%%%%%%%%%%%%%%%%%%%%%%%%%%%%%%%%%%%%%%%%%%%%%%%%%%%%%%%%%%%%%%%%
%%%%%%%%%%%%%%%%%%%%%%%%%%%%%%%%%%%%%%%%%%%%%%%%%%%%%%%%%%%%%%%%%%%%
%%%%%%%%%%%%%%%%%%%%%%%%%%%%%%%%%%%%%%%%%%%%%%%%%%%%%%%%%%%%%%%%%%%%
\section{One-shot resource manipulation}\label{sec:oneshot}
%%%%%%%%%%%%%%%%%%%%%%%%%%%%%%%%%%%%%%%%%%%%%%%%%%%%%%%%%%%%%%%%%%%%
%%%%%%%%%%%%%%%%%%%%%%%%%%%%%%%%%%%%%%%%%%%%%%%%%%%%%%%%%%%%%%%%%%%%
%%%%%%%%%%%%%%%%%%%%%%%%%%%%%%%%%%%%%%%%%%%%%%%%%%%%%%%%%%%%%%%%%%%%

We first derive general conditions on the possibility of transforming a given state into, or from, a given reference state in the one-shot setting.

Given a state $\rho$, we define its $\ve$-ball in trace norm as $B_\ve(\rho) \coloneqq \lset X \bar \norm{X}{1} \leq 1, \; \norm{X - \rho}{1} \leq \ve \rset$. We re-emphasise that the operators $X$ allowed here are not assumed to be positive, but we always take them to be self-adjoint. Note that the trace distance in quantum information theory is typically defined as $\frac12 \norm{X-\rho}{1}$; %, but this
such a definition is motivated by the Helstrom--Holevo theorem~\cite{HELSTROM,Holevo1976}, which however applies only when both $X$ and $\rho$ have unit trace. Since $X$ is not a state, we employ here the more relevant distance $\norm{X-\rho}{1}$, although we do explicitly account for the factor of $\frac12$ in the definitions of all operational quantities (cf.\ Section~\ref{sec:prelim}) for consistency with other results.

\subsection{One-shot dilution} \label{subsec:oneshot_dilution}

\begin{boxed}{white}
\begin{theorem}\label{thm:oneshot_dilution}
Let $\phi \in \D(\H_{\rm in})$ and $\rho \in \D(\H_{\rm out})$ be two quantum states. If there exists a map $\Lambda \in \O_\mu$ such that $\norm{\Lambda(\phi) - \rho}{1} \leq \ve$, then
\begin{equation}\begin{aligned}\label{eq:dilution_thm_eq1}
	\norm{\phi}{\mu} \geq \inf_{X \in B_\ve(\rho)} \norm{X}{\mu}.
\end{aligned}\end{equation}
Conversely, if
\begin{equation}\begin{aligned}\label{eq:dilution_thm_eq2}
	\frac{1}{\norm{\phi}{\mu}^{\circ}} > \inf_{X \in B_\ve(\rho)} \norm{X}{\mu},
\end{aligned}\end{equation}
then there exists a map $\Lambda \in \O_\mu$ such that $\norm{\Lambda(\phi) - \rho}{1} \leq \ve$. 

If the gauge $\norm{\cdot}{\mu}$ is lower semicontinuous in the weak* topology, then the infimum in~\eqref{eq:dilution_thm_eq2} is achieved, and the strict inequality can be replaced with a non-strict one. Precisely, in this case a sufficient condition for the existence of a map $\Lambda \in \O_\mu$ with $\norm{\Lambda(\phi) - \rho}{1} \leq \ve$ is
\begin{equation}\begin{aligned}\label{eq:dilution_thm_semicont}
	\frac{1}{\norm{\phi}{\mu}^{\circ}} \geq \min_{X \in B_\ve(\rho)} \norm{X}{\mu}.
\end{aligned}\end{equation}
\end{theorem}
\end{boxed}

The weak* topology in the above statement is induced on the Banach space of trace class operators by thinking of it as the dual to the Banach space of compact operators, equipped with the operator norm. Equivalently, it can be defined as the initial topology induced by the functions $\T(\H) \ni X\mapsto \Tr KX$, where $K$ is an arbitrary compact operator on $\H$.

Before proving the theorem, let us discuss its assumptions and immediate consequences. The inequality in~\eqref{eq:dilution_thm_eq1} will be of most interest to us, as it establishes a general bound that applies to \emph{any} manipulation protocol; this includes, in particular, any subset of the operations $\O_\mu$, and thus also physically relevant classes of quantum channels.

Although the theorem is stated for general input states, we have given the input state the suggestive name $\phi$ because the most important application of Theorem~\ref{thm:oneshot_dilution} for us will be in resource dilution --- where, we recall, the input is a reference pure state $\phi$. Comparing~\eqref{eq:dilution_thm_eq1} and~\eqref{eq:dilution_thm_eq2} we see that, in order to give necessary and sufficient conditions for resource dilution under $\O_\mu$, we need to understand precisely the relation between the gauge $\norm{\phi}{\mu}$ and its dual $1/\norm{\phi}{\mu}^{\circ}$. We will see that this relation forms the foundation of many of the results of this work. 

Crucially, the bounds of the theorem can indeed be tight. To see this, for the sake of this discussion let us assume the weak* lower semicontinuity of the gauge $\norm{\cdot}{\mu}$, which will indeed be satisfied in all cases of interest to us in this work. Now, if a given pure state satisfies $\norm{\phi}{\mu} = 1/\norm{\phi}{\mu}^{\circ}$, then the converse and achievability parts of Theorem~\ref{thm:oneshot_dilution} match, and we obtain a necessary and sufficient condition for approximate one-shot conversion from this state into any other state under the maps $\OO_\mu$. If we further assume that
\begin{equation}\begin{aligned}\label{eq:assumption_mult_strict}
	\norm{\phi^{\otimes m}}{\mu} = \frac{1}{\norm{\phi^{\otimes m}}{\mu}^{\circ}} = \left(\frac{1}{\norm{\phi}{\mu}^{\circ}}\right)^m = \norm{\phi}{\mu}^m,
\end{aligned}\end{equation}
then we can in fact give an exact expression for one-shot $\ve$-error resource cost of any state:
\begin{equation}\begin{aligned}
R^{(1),\ve}_{c,\O_\mu} (\rho) = \ceil{ \frac{ \inf_{X \in \B_{2\ve}(\rho)} \log \norm{X}{\mu} }{ \log \norm{\phi}{\mu} } }.
%\ludo{\floor{ \frac{ \inf_{X \in \B_{2\ve}(\rho)} \log \norm{X}{\mu} }{ \log \norm{\phi}{\mu} } } + 1\,.}
\end{aligned}\end{equation}
Is an assumption such as~\eqref{eq:assumption_mult_strict} physically justifiable? Indeed, there are important cases of resource theories in which it holds true, notably the theory where we choose $\norm{X}{\mu} = \norm{X^\Gamma}{1}$ (resource theory of entanglement with non-positive partial transpose) and $\phi$ is a maximally entangled pure state. Of particular importance is the fact that, for $\ve=0$, we get
\begin{equation}\begin{aligned}
	R^{(1),\mathrm{exact}}_{c,\O_\mu} (\rho) &= 
    \ceil{ \frac{ \log \norm{\rho}{\mu} }{ \log \norm{\phi}{\mu} }} ,
    %\ludo{\floor{ \frac{ \log \norm{\rho}{\mu} }{ \log \norm{\phi}{\mu} }} +1\,,}
\end{aligned}\end{equation}
which gives an exact operational meaning to the quantity $\norm{\rho}{\mu}$ --- for non-positive partial transpose, this %yields a new meaning to 
endows the well-known logarithmic negativity~\cite{vidal_2002,plenio_2005} with a new operational meaning as the one-shot zero-error entanglement cost under maps which contract the negativity.

More generally, we might not get a strict equality between the gauge $\norm{\cdot}{\mu}$ and its dual; nevertheless, as long as we can ensure that $\norm{\phi^{\otimes m}}{\mu}$ is of the same order in $m$ as $\frac{1}{\norm{\phi^{\otimes m}}{\mu}^{\circ}}$, i.e.
\begin{equation}\begin{aligned}
	\norm{\phi^{\otimes m}}{\mu} = \Theta\left(\frac{1}{\norm{\phi^{\otimes m}}{\mu}^{\circ}}\right)
\end{aligned}\end{equation}
%(as functions of $m$),
in asymptotic `big O' notation, then this will suffice to allow us to exactly understand the \emph{asymptotic} properties of resource dilution under $\OO_\mu$. %Such a requirement is more easy to find
Generally speaking, this latter requirement is more realistic and easier to meet in applications --- it holds, for instance, in the resource theory of entanglement, where $\norm{X}{\mu}$ denotes the base norm with respect to separable states. %See
We refer the reader to the forthcoming Section~\ref{sec:app_ent} for a detailed discussion on the applications to entanglement theory.

\begin{proof}[Proof of Theorem~\ref{thm:oneshot_dilution}]
As $\norm{\cdot}{\mu}$ is a monotone under $\O_\mu$ by definition, for any protocol $\Lambda \in \O_\mu$ such that $\Lambda(\phi) = X'$ with $X' \in B_\ve(\rho)$, we have
\begin{equation}\begin{aligned}
	\norm{\phi}{\mu} \geq \norm{X'}{\mu} \geq \inf_{X \in B_\ve(\rho)}\norm{X}{\mu}.
\end{aligned}\end{equation}
On the other hand, take any operator $X \in B_\ve(\rho)$ and define the map
\begin{equation}\begin{aligned}
	\Lambda(Z) = \< Z , \phi\>\, X.
\end{aligned}\end{equation}
For any $Z \in \T_{\sa}(\H)$, by the Cauchy--Schwarz inequality for $\norm{\cdot}{\mu}$ it holds that
\begin{equation}\begin{aligned}
	\norm{\Lambda(Z)}{\mu} &= \left|\< Z, \phi \>\right| \,\norm{X}{\mu} \leq \norm{Z}{\mu} \norm{\phi}{\mu}^\circ \norm{X}{\mu}
%	&= \norm{Z}{\mu} \left|\< \frac{Z}{\norm{Z}{\mu}}, \phi \>\right| \, \norm{X}{\mu}\\
\end{aligned}\end{equation}
and similarly
\begin{equation}\begin{aligned}
	\norm{\Lambda(Z)}{1} &\leq \norm{Z}{1} \norm{\phi}{\infty} \norm{X}{1} \leq \norm{Z}{1}.
\end{aligned}\end{equation}
What this means is that, as long $\norm{X}{\mu} \leq 1/\norm{\phi}{\mu}^\circ$, the operation satisfies $\Lambda \in \O_\mu$ and performs the desired transformation $\Lambda(\phi) = X$. Optimising over $X \in B_\ve(\rho)$ shows the achievability of the stated bound.

Note that the trace norm ball $B_\ve(X)$ is weak* compact, because it is a rescaled and translated version of the unit ball of the trace norm, which is in turn weak* compact by the Banach--Alaoglu theorem. Hence, the minimisation on the right-hand side of~\eqref{eq:dilution_thm_semicont} is well defined, as it involves a lower semicontinuous function minimised over a weak*-compact set.
\end{proof}

\subsection{One-shot distillation} \label{subsec:oneshot_distillation}

In order to characterise one-shot resource distillation, we will use a quantity based on the hypothesis testing relative entropy $D^\ve_H$. Recall that, for two states $\rho$ and $\sigma$, their hypothesis testing relative entropy is given by~\cite{buscemi_2010, wang_2012}
\begin{equation}\begin{aligned}
	D_H^\ve(\rho \| \sigma) = - \log \min \lset \< Q, \sigma\> \bar 0 \leq Q \leq \id,\; \< Q, \rho\> \geq 1-\ve \rset.
\end{aligned}\end{equation}

As we will shortly see explicitly, the fact that the minimisation here is achieved follows from the weak*-compactness of the considered set, itself a consequence of the Banach--Alaoglu theorem (see the proof of Lemma~\ref{lem:etruscan_dual}). We now define a %generalisation 
variant of this quantity, dubbed the \deff{emancipated hypothesis testing relative entropy}, as follows:\footnote{The symbol $\h$ is the letter %`he'
`h' in the Etruscan alphabet. The Etruscans were an ancient and somewhat mysterious pre-Roman people inhabiting today’s Tuscany --- which is named after them. Their language is lost, except for the alphabet and a few words. Emperor Claudius, perhaps one of %the 
its last speakers, %of the Etruscan language, 
had written a treatise on Etruscan history, which has unfortunately not survived through the centuries. 
}
\begin{equation}\begin{aligned}
	D_\h^\ve(\rho \| X) \coloneqq - \log \min \lset \< Q, X\> \bar -\id \leq Q \leq \id,\; \<Q, \rho\> \geq 1-\ve \rset,
\label{eq:Etruscan_D}
\end{aligned}\end{equation}
where we take $\log(x) = -\infty$ when $x \leq 0$ for consistency. Here we do not stipulate any physical meaning of this quantity a priori --- it is difficult to interpret it as `hypothesis testing' since we are no longer assuming that $Q$ forms a part of a valid POVM, but it will nevertheless be crucial in the description of resource distillation.
For convenience, we also define the non-logarithmic variant
\begin{equation}\begin{aligned}
	d_\h^\ve(\rho \| X) &\coloneqq  2^{D_\h^\ve(\rho \| X)}\\
&= \left( \min \lset \< Q ,X \>_+ \bar -\id \leq Q \leq \id,\; \< Q, \rho \> \geq 1-\ve \rset \right)^{-1}
\end{aligned}\end{equation}
where $x_+ \coloneqq \max \{ x, 0 \}$, and $1/0 = \infty$. %, so that $D^\ve_\h(\rho \| X) = \log d_\h^\ve(\rho \| X)$.

We note that $D_\h^\ve$ can alternatively be expressed as a function of the standard hypothesis testing relative entropy $D_H^\ve$: re-parametrising $Q\coloneqq 2Q'-\id$ with $Q'\in [0,\id]$ in the definition of $D_\h^\ve$, we have that
\begin{equation}\begin{aligned}
	d_\h^\ve(\rho \| X)^{-1} = 2 \, d_H^{\ve/2}(\rho \| X)^{-1} - \Tr X,
\end{aligned}\end{equation}
where $d_H^\ve(\rho \| X) \coloneqq 2^{D_H^\ve(\rho \| X)}$.

We now use this quantity to study the conditions for the existence of one-shot distillation protocols.

\begin{boxed}{white}
\begin{theorem}
Let $\rho \in \D(\H_{\rm in})$ be any state and $\psi \in \D(\H_{\rm out})$ be pure. If there exists a map $\Lambda \in \O_\mu$ such that $\norm{\Lambda(\rho)- \phi}{1} \leq \ve$, then
\begin{equation}\begin{aligned}
	\frac{1}{\norm{\phi}{\mu}^{\circ}} \leq \inf_{\norm{Z}{\mu} \leq 1} d_\h^\ve(\rho \| Z).
\end{aligned}\end{equation}
Conversely, if
\begin{equation}\begin{aligned}
	\norm{\phi}{\mu} \leq \inf_{\norm{Z}{\mu} \leq 1} d_\h^\ve(\rho \| Z),
\end{aligned}\end{equation}
then there exists a map $\Lambda \in \O_\mu$ such that $\norm{\Lambda(\rho)- \phi}{1} \leq \ve$.
\label{thm:oneshot_distillation}
\end{theorem}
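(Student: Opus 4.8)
The plan is to mirror the two-part structure of the dilution result (Theorem~\ref{thm:oneshot_dilution}): the same rank-one ansatz and Cauchy--Schwarz inequality reappear, but now the argument passes through the dual (adjoint) map and the variational form of $d_\h^\ve$ coming from its definition~\eqref{eq:Etruscan_D}, namely $d_\h^\ve(\rho\|Z)^{-1}=\inf\{\langle Q,Z\rangle : -\id\leq Q\leq\id,\ \langle Q,\rho\rangle\geq 1-\ve\}$.

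For the necessary direction I would argue by dualising the protocol. Given $\Lambda\in\O_\mu$ with $\norm{\Lambda(\rho)-\phi}{1}\leq\ve$, write $\Lambda^\dagger$ for the adjoint map, $\langle\Lambda^\dagger(Y),X\rangle=\langle Y,\Lambda(X)\rangle$, and set $Q\coloneqq\Lambda^\dagger(\phi)$. The key observation is that $Q$ is feasible for the optimisation defining $d_\h^\ve(\rho\|Z)^{-1}$: the $\norm{\cdot}{1}$-contractivity of $\Lambda$ dualises to $\norm{\cdot}{\infty}$-contractivity of $\Lambda^\dagger$, so $\norm{Q}{\infty}\leq\norm{\phi}{\infty}=1$, i.e.\ $-\id\leq Q\leq\id$; and $\langle Q,\rho\rangle=\langle\phi,\Lambda(\rho)\rangle\geq\langle\phi,\phi\rangle-\norm{\phi}{\infty}\norm{\Lambda(\rho)-\phi}{1}\geq 1-\ve$. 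Feasibility then gives, for every $Z$ with $\norm{Z}{\mu}\leq 1$,
\[ d_\h^\ve(\rho\|Z)^{-1}\leq\langle Q,Z\rangle=\langle\phi,\Lambda(Z)\rangle\leq\norm{\phi}{\mu}^\circ\,\norm{\Lambda(Z)}{\mu}\leq\norm{\phi}{\mu}^\circ, \]
using Cauchy--Schwarz for $\norm{\cdot}{\mu}$ and the $\norm{\cdot}{\mu}$-contractivity of $\Lambda$. Taking the infimum over such $Z$ delivers $1/\norm{\phi}{\mu}^\circ\leq\inf_{\norm{Z}{\mu}\leq1}d_\h^\ve(\rho\|Z)$.

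For achievability I would first rewrite the hypothesis in dual form. Using the variational expression above, $\norm{\phi}{\mu}\leq\inf_{\norm{Z}{\mu}\leq1}d_\h^\ve(\rho\|Z)$ is equivalent to $\sup_{\norm{Z}{\mu}\leq1}\inf_{Q}\langle Q,Z\rangle\leq 1/\norm{\phi}{\mu}$, with the inner infimum over the feasible set $\{-\id\leq Q\leq\id,\ \langle Q,\rho\rangle\geq1-\ve\}$. The crux is to exchange the two optimisations: the objective is bilinear, the $Z$-domain is convex, and the $Q$-domain is convex and operator-norm bounded, so a Sion-type minimax theorem yields $\inf_{Q}\sup_{\norm{Z}{\mu}\leq1}\langle Q,Z\rangle=\inf_{Q}\norm{Q}{\mu}^\circ\leq 1/\norm{\phi}{\mu}$, producing a feasible $Q^*$ with $\norm{Q^*}{\mu}^\circ\leq 1/\norm{\phi}{\mu}$. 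With this $Q^*$ I would use the rank-one ansatz $\Lambda(Z)\coloneqq\langle Q^*,Z\rangle\,\phi$, exactly parallel to the dilution proof: $\norm{\cdot}{1}$-contractivity follows from $\norm{Q^*}{\infty}\leq1$, while $\norm{\Lambda(Z)}{\mu}=|\langle Q^*,Z\rangle|\,\norm{\phi}{\mu}\leq\norm{Q^*}{\mu}^\circ\norm{\phi}{\mu}\,\norm{Z}{\mu}\leq\norm{Z}{\mu}$ gives $\norm{\cdot}{\mu}$-contractivity, so $\Lambda\in\O_\mu$; and since $\langle Q^*,\rho\rangle\in[1-\ve,1]$ one has $\Lambda(\rho)=\langle Q^*,\rho\rangle\phi$ with $\norm{\Lambda(\rho)-\phi}{1}=|1-\langle Q^*,\rho\rangle|\leq\ve$.

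I expect the minimax exchange to be the main obstacle and the step most needing care. In finite dimension the feasible $Q$-set is compact and the infimum is attained, so Sion's theorem applies directly; in the separable infinite-dimensional setting one must justify the swap and the existence of $Q^*$, for instance through weak-$*$ compactness of the operator-norm ball, or by extracting an $\eta$-optimal $Q^*$ and a limiting map while checking that $\O_\mu$-membership survives the limit. A secondary bookkeeping point is the convention $\log x=-\infty$ for $x\leq 0$: when $\inf_Q\langle Q,Z\rangle\leq 0$ the value $d_\h^\ve(\rho\|Z)^{-1}$ should be read as $0$, which is harmless here because the relevant suprema over $Z$ already include $Z=0$ and are therefore nonnegative.
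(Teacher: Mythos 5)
Your proposal is correct and follows essentially the same route as the paper: the necessity direction via the adjoint map $Q=\Lambda^\dagger(\phi)$ (feasible for the variational problem, with operator-norm contractivity of $\Lambda^\dagger$ and the Cauchy--Schwarz bound $\langle\phi,\Lambda(Z)\rangle\leq\norm{\phi}{\mu}^\circ$), and achievability via the Sion minimax exchange followed by the rank-one ansatz $\Lambda(Z)=\langle Q^*,Z\rangle\,\phi$. The only difference is organisational --- the paper isolates the minimax duality (including the Banach--Alaoglu weak* compactness argument you anticipate for the infinite-dimensional case) into a separate lemma (Lemma~\ref{lem:etruscan_dual}) rather than performing it inline.
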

\end{boxed}
As before, let us discuss what the result immediately tells us. First, under the assumption $\norm{\phi}{\mu} = 1/\norm{\phi}{\mu}^{\circ}$, we have a necessary and sufficient condition for approximate one-shot conversion from any state into the pure state $\phi$. If one additionally assumes that
\begin{equation}\begin{aligned}\label{eq:assumption_golden_state}
	\norm{\phi^{\otimes m}}{\mu} = \frac{1}{\norm{\phi^{\otimes m}}{\mu}^{\circ}} = \left(\frac{1}{\norm{\phi}{\mu}^{\circ}}\right)^m = \norm{\phi}{\mu}^m,
\end{aligned}\end{equation}
then Theorem~\ref{thm:oneshot_distillation} gives an exact expression for one-shot distillable resources:
\begin{equation}\begin{aligned}
 R_{d,\O}^{(1),\ve} (\rho) = \floor{ \frac{ \inf_{\norm{Z}{\mu} \leq 1} D_\h^{2\ve}(\rho \| Z) }{ \log \norm{\phi}{\mu} } }.
\end{aligned}\end{equation}
For the case of $\ve=0$, we obtain in particular that
\begin{equation}\begin{aligned}
	R_{d,\O}^{(1),\mathrm{exact}} (\rho) = \floor{ \frac{ \inf_{\norm{Z}{\mu} \leq 1} D_\h^{0}(\rho \| Z) }{ \log \norm{\phi}{\mu} } }.
\end{aligned}\end{equation}
As mentioned in Section~\ref{subsec:oneshot_dilution}, the assumption of~\eqref{eq:assumption_golden_state} might not be exactly satisfied in all theories of interest, although it is often asymptotically tight. In the study of partial transposition, where $\norm{X}{\mu} = \norm{X^\Gamma}{1}$ and~\eqref{eq:assumption_golden_state} is satisfied, we will %shortly
explicitly show that the value of $\inf_{\norm{Z}{\mu} \leq 1} D_\h^{0}(\rho \| Z)$ corresponds to a quantity recently introduced in~\cite{lami_2021-1}: the \emph{tempered logarithmic negativity}. See the forthcoming Section~\ref{subsec:partial_transposition}.
Although originally introduced through a completely different reasoning, this quantity was found to be of fundamental importance in establishing the irreversibility of entanglement theory. Here, we have given this quantity an explicit operational meaning: it quantifies the zero-error distillable entanglement under the class of operations $\O_\mu$ which contract the negativity. Due to the importance of this quantity and its generalisations, we will return to it in Section~\ref{sec:asymptotic}.

In order to prove the theorem we will need the following lemma, which establishes a dual form of the optimisation problem that minimises $d_\h^\ve(\rho \| Z)$ over the unit ball of the chosen gauge function. This will form a key technical insight of many of our bounds.

\begin{lemma}\label{lem:etruscan_dual}
For any state $\rho \in \D(\H)$, any choice of a resource gauge $\norm{\cdot}{\mu}$, and any $\ve \in [0,1)$, it holds that
\begin{equation}\begin{aligned}\label{eq:dual_form_dH}
  \inf_{\norm{Z}{\mu} \leq 1} d_\h^\ve(\rho \| Z) = \max \lset \frac{1}{\norm{Q}{\mu}^\circ} \bar \norm{Q}{\infty}\leq 1,\; \<Q, \rho\> \geq 1-\ve \rset.
\end{aligned}\end{equation}
For $\ve=0$, this can also be expressed as
\begin{equation}\begin{aligned}\label{eq:etruscan_dual_zero}
		\inf_{\norm{Z}{\mu} \leq 1} d_\h^0(\rho \| Z) &= \max \lset \< W, \rho \> \bar \norm{W}{\mu}^\circ \leq 1,\; \norm{W}{\infty} \leq \< W, \rho \> \rset
\end{aligned}\end{equation}
\end{lemma}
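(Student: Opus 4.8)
The plan is to prove Lemma~\ref{lem:etruscan_dual} by identifying the minimization $\inf_{\norm{Z}{\mu}\leq 1} d_\h^\ve(\rho\|Z)$ as a constrained convex optimization problem and computing its dual via Lagrangian or minimax techniques. First I would unpack the definition: using \eqref{eq:Etruscan_D}, the quantity $d_\h^\ve(\rho\|Z)^{-1} = \sup\lset \<Q,Z\> \bar -\id\leq Q\leq\id,\; \<Q,\rho\>\geq 1-\ve\rset$ is the reciprocal of what we want, so that
\begin{equation*}
  \inf_{\norm{Z}{\mu}\leq 1} d_\h^\ve(\rho\|Z) = \left(\sup_{\norm{Z}{\mu}\leq 1}\; \sup_{Q}\; \<Q,Z\>\right)^{-1},
\end{equation*}
where $Q$ ranges over $-\id\leq Q\leq\id$ with $\<Q,\rho\>\geq 1-\ve$. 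The outer reciprocal converts the infimum into a supremum, and I would exchange the order of the two suprema (which is trivial since neither constraint couples $Q$ and $Z$). The inner supremum over $Z$ is then $\sup_{\norm{Z}{\mu}\leq 1}\<Q,Z\> = \norm{Q}{\mu}^\circ$ by the very definition of the dual norm. Substituting back yields exactly the right-hand side of \eqref{eq:dual_form_dH}, with the supremum over the feasible $Q$ becoming a maximum provided the feasible set is compact and nonempty.

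The main obstacle I anticipate is justifying the swap of the two suprema together with the reciprocal, and ensuring the resulting extremal value is attained (so that ``$\sup$'' can legitimately be written as ``$\max$''). The swap of two suprema over a product domain is unconditional, so the genuine subtlety is whether the inner supremum $\sup_Q\<Q,Z\>$ could be zero or negative for the optimal $Z$, which would make the reciprocal ill-defined or force us into the $\log(x)=-\infty$ convention flagged after \eqref{eq:Etruscan_D}. I would argue that since $\rho$ is a state and the feasibility constraint $\<Q,\rho\>\geq 1-\ve$ forces $Q$ to have positive overlap with $\rho$, the supremum stays strictly positive, so taking reciprocals is safe. Compactness of $\lset Q \bar \norm{Q}{\infty}\leq 1,\;\<Q,\rho\>\geq 1-\ve\rset$ in the relevant topology, together with weak-$*$ lower semicontinuity of $\norm{\cdot}{\mu}^\circ$ as a support function, secures attainment of the maximum.

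For the $\ve=0$ refinement \eqref{eq:etruscan_dual_zero}, I would start from \eqref{eq:dual_form_dH} specialized to $\ve=0$, where the constraint reads $\<Q,\rho\>\geq 1$ with $\norm{Q}{\infty}\leq 1$; since $\rho$ is a state this forces $\<Q,\rho\>=1$. The objective $1/\norm{Q}{\mu}^\circ$ is invariant under positive rescaling of $Q$, so I would reparametrize by setting $W = Q/\norm{Q}{\mu}^\circ$, absorbing the reciprocal dual-norm factor into the new variable. Under this rescaling the normalization $\norm{Q}{\infty}\leq 1$ turns into $\norm{W}{\infty}\leq 1/\norm{Q}{\mu}^\circ = \<W,\rho\>$ (using the homogeneity of both $\norm{\cdot}{\infty}$ and $\norm{\cdot}{\mu}^\circ$ and the equality $\<Q,\rho\>=1$), while the constraint $\norm{W}{\mu}^\circ\leq 1$ appears automatically, and the objective becomes $\<W,\rho\>$ itself. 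This change of variables transforms \eqref{eq:dual_form_dH} into exactly \eqref{eq:etruscan_dual_zero}. The one point requiring care is checking that the two feasible sets are in bijection under this rescaling and that $\norm{Q}{\mu}^\circ>0$ on the feasible set, which again follows because $\<Q,\rho\>=1$ prevents $Q=0$.
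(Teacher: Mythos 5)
Your derivation of \eqref{eq:dual_form_dH} rests on a misreading of the definition \eqref{eq:Etruscan_D}: the optimisation over $Q$ there is an \emph{infimum}, so that
\begin{equation*}
d_\h^\ve(\rho \| Z)^{-1} = \inf \lset \< Q, Z\> \bar -\id \leq Q \leq \id,\; \<Q, \rho\> \geq 1-\ve \rset ,
\end{equation*}
not a supremum. Consequently the quantity to be computed is
\begin{equation*}
\inf_{\norm{Z}{\mu} \leq 1} d_\h^\ve(\rho \| Z) = \left( \sup_{\norm{Z}{\mu} \leq 1} \, \inf_{Q} \, \< Q, Z\> \right)^{-1},
\end{equation*}
a sup--inf, and the entire content of the lemma is the exchange of these two operations. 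That is \emph{not} the ``trivial'' swap of two suprema over a product domain that you invoke: it requires a minimax theorem. The paper uses Sion's theorem, after establishing weak* compactness of $\lset Q \bar \norm{Q}{\infty}\leq 1,\; \<Q,\rho\>\geq 1-\ve\rset$ via Banach--Alaoglu, convexity of the $\mu$-unit ball, and the appropriate bilinearity and continuity of the pairing; your proposal skips this step entirely. Worse, if your chain of equalities is followed through consistently, the inner evaluation $\sup_{\norm{Z}{\mu}\leq 1}\<Q,Z\> = \norm{Q}{\mu}^\circ$ followed by the outer supremum over $Q$ and the final reciprocal yields $\bigl(\sup_Q \norm{Q}{\mu}^\circ\bigr)^{-1} = \inf_Q 1/\norm{Q}{\mu}^\circ$, i.e.\ a \emph{minimum} of $1/\norm{Q}{\mu}^\circ$ over the feasible set rather than the maximum on the right-hand side of \eqref{eq:dual_form_dH}. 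So the argument as written does not reach the claimed formula.

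Your positivity discussion also addresses the wrong quantity: for the corrected sup--inf one should instead exhibit a single feasible $Z$, e.g.\ $Z = \rho/\norm{\rho}{\mu}$, for which $\inf_Q \<Q,Z\> \geq (1-\ve)/\norm{\rho}{\mu} > 0$ whenever $\ve<1$. The $\ve=0$ reduction to \eqref{eq:etruscan_dual_zero} is essentially correct and matches the paper's: $\<Q,\rho\> \leq \norm{Q}{\infty} \leq 1$ forces the constraints to be saturated, the rescaling $W = Q/\norm{Q}{\mu}^\circ$ turns the objective into $\<W,\rho\>$ with $\norm{W}{\mu}^\circ = 1$ and $\norm{W}{\infty} \leq \<W,\rho\>$, and the equalities can then be relaxed to inequalities without changing the optimum. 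But this part is predicated on \eqref{eq:dual_form_dH}, whose proof is the part that is broken.
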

\begin{proof}
Eq.~\eqref{eq:dual_form_dH} %follows 
holds since
\begin{equation}\begin{aligned}
	\inf_{\norm{Z}{\mu} \leq 1} d_\h^\ve(\rho \| Z) &\texteq{(i)} \inf_{\norm{Z}{\mu} \leq 1} \left( \min \lset \< Q, Z\>_+ \bar \norm{Q}{\infty}\leq 1,\; \<Q, \rho\> \geq 1-\ve \rset \right)^{-1} \\
    &= \left( \sup_{\norm{Z}{\mu} \leq 1} \min \lset \< Q, Z\>_+ \bar \norm{Q}{\infty}\leq 1,\; \<Q, \rho\> \geq 1-\ve \rset \right)^{-1} \\
	&\texteq{(ii)} \left( \sup_{\norm{Z}{\mu} \leq 1} \min \lset \< Q, Z\> \bar \norm{Q}{\infty}\leq 1,\; \<Q, \rho\> \geq 1-\ve \rset \right)^{-1}\\
	&\texteq{(iii)}  \left( \min \lset \sup_{\norm{Z}{\mu} \leq 1} \< Q, Z\> \bar \norm{Q}{\infty}\leq 1,\; \<Q, \rho\> \geq 1-\ve \rset \right)^{-1}\\
	&= \left( \min \lset \norm{Q}{\mu}^\circ \bar \norm{Q}{\infty}\leq 1,\; \<Q, \rho\> \geq 1-\ve \rset \right)^{-1}\\
		&= \max \lset \frac{1}{\norm{Q}{\mu}^\circ} \bar \norm{Q}{\infty}\leq 1,\; \<Q, \rho\> \geq 1-\ve \rset.
\end{aligned}\end{equation}

Here, in~(i) we used compactness of the set $\lset Q\in \B_\sa(\H) \bar \norm{Q}{\infty}\leq 1, \; \< Q,\rho \> \geq 1-\ve \rset$ to ensure that the minimum is achieved, and in~(iii) we used Sion's minimax theorem~\cite{sion_1958}. Although straightforward in finite-dimensional spaces, these steps require some elaboration in the infinite-dimensional case. The set $B_{\infty} \coloneqq \lset Q \in \B_\sa(\H) = \T_\sa(\H)^* \bar \norm{Q}{\infty}\leq 1 \rset$ is compact in the weak* topology induced on $\B_\sa(\H)$ by its pre-dual $\T_\sa(\H)$ thanks to the Banach--Alaoglu theorem~\cite[Theorem~2.6.18]{MEGGINSON}; since for $\rho\in \T_\sa(\H)$ the function $\B_\sa(\H)\ni Q \mapsto \<Q,\rho\>$ is weak* continuous by definition, the set $\lset Q\in \B_\sa(\H) \bar \< Q,\rho \> \geq 1-\ve\rset$ is immediately seen to be weak* closed; being the intersection of a weak* compact and a weak* closed set, the set $\lset Q\in \B_\sa(\H) \bar \norm{Q}{\infty}\leq 1, \; \< Q,\rho \> \geq 1-\ve \rset$ is also weak* compact. Finally, $B_\mu \coloneqq \lset Z \in \T_\sa(\H) \bar \norm{Z}{\mu} \leq 1 \rset$ is convex. Together with the bilinearity and continuity of the map $(B_\mu, B_\infty) \ni (Z,Q) \mapsto \<Q,Z\>$ with respect to the product of the trace norm and the weak* topologies, a straightforward consequence of the definitions and of the fact that $B_\infty$ is bounded in operator norm, this ensures that the conditions of the minimax theorem are satisfied.
Finally, in~(ii) 
we used the fact that $\norm{0}{\mu} = 0$ by the homogeneity of $\norm{\cdot}{\mu}$ so that $Z = 0$ is always feasible and the optimal value of the sup-min problem will never be negative.

In the case $\ve=0$, we have that
\begin{equation}\begin{aligned}
		\inf_{\norm{Z}{\mu} \leq 1} d_\h^0(\rho \| Z) &= \max \lset \frac{1}{\norm{Q}{\mu}^\circ} \bar \norm{Q}{\infty} = 1,\; \<Q, \rho\> = 1 \rset\\
		&= \max \lset \< W, \rho \> \bar \norm{W}{\mu}^\circ = 1,\; \norm{W}{\infty} = \< W, \rho \> \rset\\
		&= \max \lset \< W, \rho \> \bar \norm{W}{\mu}^\circ \leq 1,\; \norm{W}{\infty} \leq \< W, \rho \> \rset
\end{aligned}\end{equation}
through a change of variables $Q = \norm{Q}{\mu}^\circ W$. Here in the last line we observed that $\<W, \rho\> \leq \norm{W}{\infty}$ for any $\rho$ by the Cauchy--Schwarz inequality, so it suffices to impose the opposite inequality.
\end{proof}

The statement of the theorem can then be shown as follows.

\begin{proof}[Proof of Theorem~\ref{thm:oneshot_distillation}]
On the one hand, the fact that any $\Lambda \in \O_\mu$ is a contraction with respect to the gauge $\norm{\cdot}{\mu}$ means that its adjoint map, $\Lambda^\dagger$, will necessarily contract the dual function $\norm{\cdot}{\mu}^\circ$. Specifically,
\begin{equation}\begin{aligned}
	\norm{\Lambda^\dagger(\phi)}{\mu}^\circ &= \sup_{\norm{Z}{\mu} \leq 1} \< \Lambda(Z), \phi \> \\
	&\leq \sup_{\norm{Z'}{\mu} \leq 1} \< Z', \phi \>\\
	&= \norm{\phi}{\mu}^\circ.
\end{aligned}\end{equation}
In an analogous way, we see that $\Lambda^\dagger$ is a contraction with respect to the operator norm $\norm{\cdot}{\infty}$.
Additionally, using the fact that $\phi$ is a pure state we have that
\begin{equation}\begin{aligned}
\norm{\phi - \Lambda(\rho)}{1} &= \sup \lset \< W, \phi - \Lambda(\rho) \> \bar \norm{W}{\infty} \leq 1 \rset\\
&\geq \< \phi, \phi - \Lambda(\rho) \>\\
&= 1 - \< \Lambda^\dagger(\phi), \rho \>.
\end{aligned}\end{equation}
This altogether means that any $\Lambda \in \O_\mu$ such that $\norm{\Lambda(\rho)-\phi}{1}\leq \ve$ gives a feasible solution to Eq.~\eqref{eq:dual_form_dH} as $Q = \Lambda^\dagger(\phi)$, yielding
\begin{equation}\begin{aligned}
	\inf_{\norm{Z}{\mu} \leq 1} d_\h^\ve(\rho \| Z) \geq \frac{1}{\norm{\phi}{\mu}^{\circ}}.
\end{aligned}\end{equation}
On the other hand, %for any $Q$ feasible for
taking an optimal $Q$ in~\eqref{eq:dual_form_dH} with %feasible value
 $1/\norm{Q}{\mu}^\circ =  \inf_{\norm{Z}{\mu} \leq 1} d_\h^\ve(\rho \| Z)$, the map $\Lambda$ defined as
\begin{equation}\begin{aligned}\label{eq:map_dist}
	\Lambda(Z) = \< Q, Z \> \phi
\end{aligned}\end{equation}
can be seen to satisfy
\begin{align}
	\norm{\Lambda(Z)}{1} &\leq \norm{Z}{1} \norm{Q}{\infty} \norm{\phi}{1} \leq \norm{Z}{1} ,\\ %1 , \\
	\norm{\Lambda(Z)}{\mu} &\leq \norm{Z}{\mu} \norm{Q}{\mu}^\circ \norm{\phi}{\mu} \leq \norm{Z}{\mu}
\end{align}
using that $1/\norm{Q}{\mu}^\circ \geq \norm{\phi}{\mu}$ by assumption. This implies that $\Lambda \in \O_\mu$, and verifying that any map $\Lambda$ in~\eqref{eq:map_dist} satisfies
\begin{equation}\begin{aligned}
\norm{\Lambda(\rho) - \phi}{1} = \left| \< Q, \rho \> - 1 \right| \leq \ve
\end{aligned}\end{equation}
because $\<Q,\rho\> \in [1-\ve,1]$ 
concludes the proof.

\end{proof}

\begin{remark}
The fact that
\begin{equation}\begin{aligned}\label{eq:tempered_def}
		\inf_{\norm{Z}{\mu} \leq 1} d_\h^0(\rho \| Z) = \max \lset \< W, \rho \> \bar \norm{W}{\mu}^\circ = 1,\; \norm{W}{\infty} = \< W, \rho \> \rset,
\end{aligned}\end{equation}
as shown in Lemma~\ref{lem:etruscan_dual}, makes explicit the equality between this problem and the class of tempered resource monotones introduced in~\cite{lami_2021-1}, which are defined exactly by the right-hand side of~\eqref{eq:tempered_def}.

We can also note that, due to the self-adjointness of $Q$, the constraints $\norm{Q}{\infty}\leq 1$ and $\<Q, \rho \> = 1$ mean that $\braket{\psi|Q|\psi}$ must uniformly equal 1 for any vector $\ket\psi \in \supp(\rho)$, wherefore the constraints are actually equivalent to imposing that $\id \geq Q \geq 2\Pi_\rho - \id$, %\frac{Q+\id}{2} \geq \Pi_\rho$
where $\Pi_\rho$ is the projection onto the support of $\rho$. This means that we can equivalently write
\begin{equation}\begin{aligned}
	\inf_{\norm{Z}{\mu} \leq 1} d_\h^0(\rho \| Z) =  \max \lset \frac{1}{\norm{Q}{\mu}^\circ} \bar 2 \Pi_\rho - \id \leq Q \leq \id \rset,
\end{aligned}\end{equation}
which explicitly shows that computing $d_\h^0$ depends only on $\supp \rho$. Moreover, we notice that $\inf_{\norm{Z}{\mu} \leq 1} d_\h^0(\rho \| Z) = 1/\norm{\id}{\mu}^\circ$ for any $\rho$ of full support.
\end{remark}

\subsection{Relation between distillation and dilution: the \texorpdfstring{$\boldsymbol{\ve}$-$\boldsymbol{\delta}$}{ε-δ} lemma} \label{subsec:epsilon-delta}

Our approach to describing asymptotic resource manipulation will rely on a precise understanding of the relations and trade-offs between the achievable %performance 
performances of one-shot distillation and dilution.

Previously, the question of one-shot manipulation received significant attention particularly in the resource theory of thermodynamics~\cite{horodecki_2013,yungerhalpern_2016}, where it was shown that the task of one-shot work extraction (distillation) is governed by the hypothesis testing relative entropy $D^\ve_H$, while the opposite task of one-shot work formation (dilution) is governed by the smooth max-relative entropy~\cite{datta_2009}
\begin{equation}\begin{aligned}
	%D^\ve_{\max} (\rho \| \sigma) \coloneqq \inf_{\rho' \in B_{2\ve}(\rho) \cap \D(\H)} \log \inf \lset \lambda \bar \rho' \leq \lambda \sigma \rset.
D^\ve_{\max} (\rho \| \sigma) \coloneqq \log \inf \lset \lambda \in \RR \bar \rho' \leq \lambda \sigma,\; \rho' \in B_{2\ve}(\rho) \cap \D(\H) \rset.
\end{aligned}\end{equation}
 The relations between these two fundamental quantities have been studied in great detail in a number of works~\cite{tomamichel_2013,dupuis_2012,tomamichel_2016,datta_2013-1,anshu_2019} (owing also to the applications of the two entropies in information theory and quantum cryptography~\cite{TOMAMICHEL}), which allowed for an exact understanding of the operational aspects of thermodynamics in both the one-shot and asymptotic setting.

Although the beginnings of the applications of $D^\ve_H$ and $D^\ve_{\max}$ to more general quantum resource theories can be traced back to the seminal works by Brand\~ao and Plenio~\cite{brandao_2010-1,brandao_2010}, the precise quantitative study of the non-asymptotic interrelations in the manipulation of quantum resources has only been undertaken very recently~\cite{liu_2019,regula_2020,wilde_2021,takagi_2021}. In these works, conceptually similar ideas based on trade-offs between $D^\ve_{\max}$ and $D^\ve_H$ were employed. However, it is already known that $D^\ve_{\max}$ is not the right quantity to consider in the study of resources such as quantum entanglement~\cite{brandao_2011,liu_2019}, where resource dilution is not governed by $D^\ve_{\max}$, but rather by a different quantity based on the standard robustness $R^s_\FF$.  %--- or, equivalently, the base \br{semi}norm. 
The results of previous works thus failed to establish the tightest possible bounds, in particular in the asymptotic setting, because they relaxed the problem to the study of $D^\ve_{\max}$, which can only provide a looser bound.

Here, we establish a bound which constrains the trade-off between smoothed gauges $\norm{\cdot}{\mu}$ and the function $D_\h^\delta$. We can regard the following result in at least two possible ways. On the one hand, it as a natural generalisation of the so-called `$\epsilon$-lemma' that played a key role in our proof of the fundamental irreversibility of entanglement theory~\cite[Lemma~S6]{lami_2021-1} (see also Section~\ref{subsec:partial_transposition} below). On the other, since we have seen that these two quantities govern one-shot resource dilution and distillation, respectively, we can consider them as more appropriate objects of study than the max-relative entropy and the hypothesis testing relative entropy in our framework. The result below can then be understood as an extension of the one shot yield--cost trade-off relations of Refs.~\cite{wilde_2021,takagi_2021} to our setting, where the free operations are contractions with respect to certain resource gauges.

\begin{boxed}{white}
\begin{lemma}[$\ve$-$\delta$-lemma]\label{lem:epsdelta}
Choose any $\ve, \delta \in [0,1)$ such that $\delta + \ve < 1$. Then
\begin{equation}\begin{aligned}
\inf_{X \in B_\ve(\rho)} \log \norm{X}{\mu} \geq \inf_{\norm{Z}{\mu} \leq 1} D_\h^\delta(\rho \| Z) + \log \left( 1 - \delta - \ve \right).
\end{aligned}\end{equation}
\end{lemma}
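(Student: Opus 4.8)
The plan is to reduce both sides to a common dual picture and then combine the Cauchy--Schwarz inequality for $\norm{\cdot}{\mu}$ with H\"older's inequality. First I would rewrite the right-hand side using Lemma~\ref{lem:etruscan_dual}: since $D_\h^\delta = \log d_\h^\delta$ and the logarithm is monotone, one has $\inf_{\norm{Z}{\mu} \leq 1} D_\h^\delta(\rho \| Z) = \log \max \lset 1/\norm{Q}{\mu}^\circ \bar \norm{Q}{\infty} \leq 1,\; \< Q, \rho \> \geq 1 - \delta \rset$. Because the feasible set is weak* compact (exactly as established in the proof of Lemma~\ref{lem:etruscan_dual}), this maximum is attained; I would fix an optimiser $Q$, so that $\norm{Q}{\infty} \leq 1$, $\< Q, \rho \> \geq 1 - \delta$, and $1/\norm{Q}{\mu}^\circ$ equals the maximum. (If this maximum is $0$, equivalently $\inf_{\norm{Z}{\mu}\leq 1} D_\h^\delta = -\infty$, the claim is vacuous, so I may assume it is finite and positive.)

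Next I would fix an arbitrary $X \in B_\ve(\rho)$ and lower-bound $\norm{X}{\mu}$ using this same $Q$. The Cauchy--Schwarz inequality for $\norm{\cdot}{\mu}$ gives $\< Q, X \> \leq \norm{Q}{\mu}^\circ \norm{X}{\mu}$, hence $\norm{X}{\mu} \geq \< Q, X \> / \norm{Q}{\mu}^\circ$, and it remains only to bound $\< Q, X \>$ from below. Splitting $\< Q, X \> = \< Q, \rho \> + \< Q, X - \rho \>$, the first term is at least $1 - \delta$ by feasibility of $Q$, while the second obeys $\left| \< Q, X - \rho \> \right| \leq \norm{Q}{\infty} \norm{X - \rho}{1} \leq \ve$ by H\"older's inequality together with the defining constraints $\norm{Q}{\infty} \leq 1$ and $\norm{X - \rho}{1} \leq \ve$ of $B_\ve(\rho)$. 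Combining these yields $\< Q, X \> \geq 1 - \delta - \ve$, and therefore $\norm{X}{\mu} \geq (1 - \delta - \ve)/\norm{Q}{\mu}^\circ$.

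Finally I would take the infimum over $X \in B_\ve(\rho)$ on the left, noting that the right-hand side is independent of $X$, to obtain $\inf_{X \in B_\ve(\rho)} \norm{X}{\mu} \geq (1 - \delta - \ve)/\norm{Q}{\mu}^\circ$, and then pass to logarithms. Here the hypothesis $\delta + \ve < 1$ is exactly what ensures $1 - \delta - \ve > 0$, so that $\log(1-\delta-\ve)$ is finite and the inequality is preserved under the monotone logarithm, giving $\inf_{X \in B_\ve(\rho)} \log \norm{X}{\mu} \geq \log(1/\norm{Q}{\mu}^\circ) + \log(1-\delta-\ve)$. Recalling that $\log(1/\norm{Q}{\mu}^\circ)$ equals the dual expression for $\inf_{\norm{Z}{\mu} \leq 1} D_\h^\delta(\rho \| Z)$ identified via Lemma~\ref{lem:etruscan_dual}, this is precisely the claimed bound.

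I do not expect a substantial obstacle here: once Lemma~\ref{lem:etruscan_dual} is in hand, the argument is a one-line duality statement dressed with two elementary norm inequalities. The only points requiring care are the attainment of the maximum in the dual form, which is supplied by the weak* compactness argument already used for Lemma~\ref{lem:etruscan_dual} in the infinite-dimensional setting, and the strict positivity of $1 - \delta - \ve$, which the assumption $\delta + \ve < 1$ guarantees and which is what legitimises passing to logarithms at the end.
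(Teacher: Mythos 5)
Your proposal is correct and follows essentially the same route as the paper: both invoke the dual form from Lemma~\ref{lem:etruscan_dual}, fix an optimal $Q$, lower-bound $\norm{X}{\mu}$ via $\< Q, X\> \le \norm{Q}{\mu}^\circ \norm{X}{\mu}$ (the paper phrases this as weak Lagrange duality for the dual norm), split $\<Q,X\> = \<Q,\rho\> - \<Q,\rho - X\>$, and control the error term with H\"older's inequality against $\norm{Q}{\infty}\le 1$ and $\norm{X-\rho}{1}\le\ve$. Your added remarks on attainment of the maximum and the degenerate case are harmless refinements of the same argument.
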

\end{boxed}
\begin{proof}
%Recall from the proof of Theorem~\ref{thm:oneshot_distillation}
Recall from Lemma~\ref{lem:etruscan_dual} that
\begin{equation}\begin{aligned}
\inf_{\norm{Z}{\mu} \leq 1} D_\h^\delta(\rho \| Z) = \log \max \lset \frac{1}{\norm{Q}{\mu}^\circ} \bar \< Q, \rho \> \geq 1-\delta,\; \norm{Q}{\infty} \leq 1 \rset.
\end{aligned}\end{equation}
Take $Q$ as %any operator feasible in the above,
an operator optimal for the above and assume that $\norm{Q}{\mu}^\circ < \infty$ to avoid a trivial statement. Take now $X \in \T_{\sa}(\H)$ to be any operator such that $ \norm{X-\rho}{1} \leq \ve$. Then
\begin{equation}\begin{aligned}
	\norm{X}{\mu} &\geq
	\sup \lset \< X, W \> \bar \norm{W}{\mu}^\circ \leq 1 \rset\\
	&\geq \< X, \frac{Q}{\norm{Q}{\mu}^\circ} \> \\
	&= \< \rho, \frac{Q}{\norm{Q}{\mu}^\circ} \> - \< \rho - X, \frac{Q}{\norm{Q}{\mu}^\circ} \> \\
	&\geq (1-\delta) \frac{1}{\norm{Q}{\mu}^\circ}- \norm{\rho - X}{1} \frac{\norm{Q}{\infty}}{\norm{Q}{\mu}^\circ}\\
	&\geq (1-\delta - \ve) \frac{1}{\norm{Q}{\mu}^\circ},
\end{aligned}\end{equation}
where the first line is by weak Lagrange duality, and 
in the fourth line we used the Cauchy--Schwarz inequality (or, specifically, the H\"older inequality for $\norm{\cdot}{1}$). Optimising over all feasible $X$, %and $Q$, 
we get the desired inequality. 
\end{proof}

\section{Asymptotic resource manipulation}\label{sec:asymptotic}

The purpose of this section is to employ the technical results of the above section to derive explicit bounds on the asymptotic rates of state transformations, focusing in particular on the distillation and dilution of a given resource. Our strategy follows the conceptual scheme summarised in Figure~\ref{fig:conceptual}: to obtain lower bounds on the efficiency of resource dilution, i.e.\ on the resource cost, we will connect it with the (smoothed) gauge $\norm{\cdot}{\mu}$ (Theorem~\ref{thm:oneshot_dilution}), then use the $\ve$-$\delta$-lemma (Lemma~\ref{lem:epsdelta}) to relate this with the emancipated hypothesis testing relative entropy $D^\delta_\h$~\eqref{eq:Etruscan_D}, and finally use this latter quantity to lower bound the cost. This is precisely the technique underlying the proof of Theorem~\ref{thm:cost_lowerbound} below. An analogous strategy is used to prove Theorem~\ref{thm:dist_upperbound}. The purpose of doing so, naturally, is that the smoothed emancipated hypothesis testing relative entropy, initially defined as an infimum, can alternatively be expressed as a supremum (see Lemma~\ref{lem:etruscan_dual}). By means of Theorem~\ref{thm:cost_lowerbound}, therefore, we can generate and explicitly compute lower bounds to the resource cost in a systematic fashion.

\begin{figure}[h]
\centering
\begin{tikzpicture}
\node (dist) at (0,0) {\large distillation};
\node (dil) at (6.472,0) {\large dilution};
\node (D) at (0,4) {\large $D^{\delta}_\h$};
\node (mu) at (6.472,4) {\large $\ \norm{\cdot}{\mu}$};
\draw[<->, thick] (D) -- node [above] {$\ve$-$\delta$-lemma (Lemma~\ref{lem:epsdelta})} (mu);
\draw[<->, thick] (D) -- node [anchor=center, left] {\rotatebox{90}{Theorem~\ref{thm:oneshot_distillation}}} (dist);
\draw[<->, thick] (mu) -- node [anchor=center, right] {\rotatebox{-90}{Theorem~\ref{thm:oneshot_dilution}}} (dil);
\draw[->, thick] (D) -- node [yshift=8pt] {\rotatebox{-32}{\phantom{lorem ipsum suas p} Theorem~\ref{thm:cost_lowerbound}}} (dil);
\draw[->, thick] (mu) -- node [yshift = 8pt] {\rotatebox{32}{Theorem~\ref{thm:dist_upperbound} \phantom{lorem ipsum suas }}} (dist);
\end{tikzpicture}
\caption{The quantities considered in this work, i.e.\ the emancipated hypothesis testing relative entropy $D^{\delta}_\h$~\eqref{eq:Etruscan_D} and the resource gauge $\norm{\cdot}{\mu}$, and the main conceptual relations between them.
}
\label{fig:conceptual}
\end{figure}
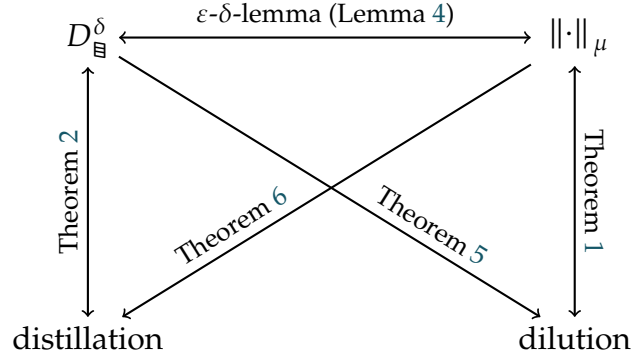

\subsection{Bounds on distillable resource and resource cost}\label{subsec:regularised}

Let us begin with an explicit investigation of how the one-shot results obtained in Section~\ref{sec:oneshot} can help us shed light on the asymptotic properties of resource manipulation. Although the bounds in this section are defined through asymptotic regularisations which do not necessarily have known computable forms, %we find them an
they provide important conceptual generalisations of other bounds that have appeared previously. We will shortly see how to use the %bounds
results presented here to obtain also computable, single-letter bounds, as well as how the regularised quantities themselves can be bounded in some cases.

We start with a general estimate of the efficiency of resource dilution.

\begin{boxed}{white}
\begin{theorem}\label{thm:cost_lowerbound}
For any state $\rho$, any pure state $\phi$, and any $\delta \in [0,1)$, the rate of transformation from $\phi$ to $\rho$ under $\O_\mu$ satisfies
\begin{equation}\begin{aligned}
	r(\phi \toOmu \rho)^{-1} &\geq \frac{ \limsup_{n\to\infty} \frac1n \inf_{\norm{Z}{\mu}\leq 1} D^{\delta}_\h \left(\rho^{\otimes n} \| Z\right)}{ \loginf{\phi} }\\
	&= \frac{ \limsup_{n\to\infty} \frac{1}{n} \log \sup \lset \frac{1}{\norm{Q}{\mu}^\circ} \bar \norm{Q}{\infty}\leq 1,\; \<Q, \rho^{\otimes n}\> \geq 1-\delta \rset }{ \loginf{\phi} },
\label{eq:cost_lowerbound}
\end{aligned}\end{equation}
where
\begin{equation}\begin{aligned}
	\loginf{\phi} &\coloneqq \limsup_{n \to \infty} \frac{1}{n} \log \norm{\phi^{\otimes n}}{\mu}.
	%\loginfd{\phi} &\coloneqq \limsup_{n \to \infty} \frac{1}{n} \log \norm{\phi^{\otimes n}}{\mu}^\circ,
\end{aligned}\end{equation}
\end{theorem}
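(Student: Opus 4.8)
The plan is to reduce the asymptotic statement to the one-shot cost and then feed it through the chain of one-shot results depicted in Figure~\ref{fig:conceptual}. I would use the identity $r(\phi \toOmu \rho)^{-1} = R_{c,\O_\mu}(\rho) = \lim_{\ve\to0}\limsup_{n\to\infty}\frac1n R^{(1),\ve}_{c,\O_\mu}(\rho^{\otimes n})$ from the preliminaries and lower bound the one-shot cost $R^{(1),\ve}_{c,\O_\mu}(\rho^{\otimes n})$ for each fixed $n$. The engine is the composition of Theorem~\ref{thm:oneshot_dilution}, which turns a dilution protocol into a lower bound on $\norm{\phi^{\otimes m}}{\mu}$, with the $\ve$-$\delta$-lemma (Lemma~\ref{lem:epsdelta}), which replaces the smoothed norm $\inf_{X\in B_{2\ve}}\norm{X}{\mu}$ by the emancipated entropy $\inf_{\norm{Z}{\mu}\leq1}D^\delta_\h$; the final passage to the supremum form of the numerator is then exactly Lemma~\ref{lem:etruscan_dual}.

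For the core step, I would fix $\delta\in[0,1)$ and any $\ve$ small enough that $\delta+2\ve<1$. Writing $m \coloneqq R^{(1),\ve}_{c,\O_\mu}(\rho^{\otimes n})$, there is by definition a map in $\O_\mu$ achieving $\frac12\norm{\cdot}{1}\leq\ve$, i.e.\ $\norm{\Lambda(\phi^{\otimes m})-\rho^{\otimes n}}{1}\leq 2\ve$. Since $\phi^{\otimes m}$ is pure, Theorem~\ref{thm:oneshot_dilution} gives $\norm{\phi^{\otimes m}}{\mu}\geq\inf_{X\in B_{2\ve}(\rho^{\otimes n})}\norm{X}{\mu}$, and taking logarithms and applying Lemma~\ref{lem:epsdelta} to $\rho^{\otimes n}$ yields
\begin{equation}\begin{aligned}
\log\norm{\phi^{\otimes m}}{\mu} \;\geq\; \inf_{\norm{Z}{\mu}\leq1}D^\delta_\h\!\left(\rho^{\otimes n}\,\|\,Z\right) + \log(1-\delta-2\ve).
\end{aligned}\end{equation}

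It then remains to trade the left-hand side for $m\,\loginf{\phi}$ and pass to the limit. For any $\eta>0$ the definition $\loginf{\phi}=\limsup_k\frac1k\log\norm{\phi^{\otimes k}}{\mu}$ ensures $\log\norm{\phi^{\otimes m}}{\mu}\leq m(\loginf{\phi}+\eta)$ for all sufficiently large $m$; since $\loginf{\phi}\geq0$ (states have $\norm{\cdot}{\mu}\geq1$), the denominator $\loginf{\phi}+\eta$ is strictly positive and the previous display bounds $m$ below by $\big[\inf_{\norm{Z}{\mu}\leq1}D^\delta_\h(\rho^{\otimes n}\|Z)+\log(1-\delta-2\ve)\big]/(\loginf{\phi}+\eta)$. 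Dividing by $n$, taking $\limsup_{n\to\infty}$ (whereupon the additive $\frac1n\log(1-\delta-2\ve)$ vanishes and the positive constant factors out respecting the inequality), and then letting $\ve\to0$ and $\eta\to0$ produces the claimed bound, with the supremum expression in~\eqref{eq:cost_lowerbound} supplied by Lemma~\ref{lem:etruscan_dual}.

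I expect the main obstacle to be the regularisation bookkeeping rather than any single inequality. The delicate points are: matching the two regularisation indices, where parametrising by the number $n$ of \emph{target} copies so that $m$ (the copies of $\phi$ consumed) is the dependent variable is precisely what makes the indices line up cleanly and avoids a messy subsequence argument with $\floor{rn}$; justifying $\log\norm{\phi^{\otimes m}}{\mu}\leq m(\loginf{\phi}+\eta)$, which requires $m\to\infty$ as $n\to\infty$ and $\loginf{\phi}>0$ for the final quotient to be meaningful; and controlling the interchange of the three limits, using that the numerator depends only on $\delta$ and that $\limsup$ preserves the inequality while absorbing $o(1)$ additive terms. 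Conceptually, the reason this succeeds where a naive no-free-lunch argument fails is that the $\ve$-$\delta$-lemma confines the asymptotic inequivalence of $\norm{\cdot}{1}$ and $\norm{\cdot}{\mu}$ to the single term $\log(1-\delta-2\ve)$, which is harmless after dividing by $n$.
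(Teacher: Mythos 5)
Your proposal is correct and follows essentially the same route as the paper: one-shot dilution (Theorem~\ref{thm:oneshot_dilution}) composed with the $\ve$-$\delta$-lemma (Lemma~\ref{lem:epsdelta}) and the dual form from Lemma~\ref{lem:etruscan_dual}, with the remaining work being regularisation bookkeeping. The only cosmetic difference is that you parametrise via the one-shot cost $R^{(1),\ve}_{c,\O_\mu}(\rho^{\otimes n})$ at fixed error, whereas the paper fixes an achievable cost $C$ and a protocol sequence with vanishing errors $\ve_n$ and bounds $C\,\loginf{\phi}$ directly; both handle the $\floor{\cdot}$/$\ceil{\cdot}$ and limit-interchange issues in equivalent ways.
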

\end{boxed}

This bound has the potential to improve on known lower bounds on resource cost $R_{c,\O_\mu}(\rho) = r_{\O_\mu}(\phi \to \rho)^{-1}$, and in particular it majorises %: it majorises all known bounds, in particular ones based on the regularised relative entropy as well as
the class of bounds which was used recently in~\cite{lami_2021-1} to show the irreversibility of entanglement theory.
The quantity $\loginf{\phi}$ represents here a parameter that characterises the resource properties of the given reference state $\phi$, and we will see that, in relevant cases, it can be straightforwardly computed or bounded.

However, a major question is: can the regularisation of the emancipated hypothesis testing relative entropy $D^{\delta}_\h$ be evaluated in practice? Although we are unable to provide an answer to this problem at this stage, 
we do remark the conceptual similarity of this bound to the asymptotic bound of Brand\~ao and Plenio~\cite{brandao_2010}; indeed, the quantity encountered in their work,
\begin{equation}\begin{aligned}\label{eq:bp}
	%\lim_{\delta \to 0} 
	\limsup_{n\to\infty} \frac1n \inf_{\sigma \in \FF} D^{\delta}_H (\rho^{\otimes n} \| \sigma),
\end{aligned}\end{equation}
can also be used as a lower bound for resource cost, but it can never be better than our result in Theorem~\ref{thm:cost_lowerbound}, since $D^{\delta}_\h (\rho^{\otimes n} \| \sigma) \geq D^{\delta}_H (\rho^{\otimes n} \| \sigma)$ by definition. Interestingly, the quantity in~\eqref{eq:bp} was conjectured~\cite{brandao_2010, brandao_2010-1, berta_gap} and finally very recently proven~\cite{hayashi_stein, blurring} to reduce (under mild assumptions on $\FF$) to a well-known resource monotone: the regularised relative entropy of the given resource. %Although some issues have recently emerged~\cite{berta_gap} with the claimed proof of this result in Refs.~\cite{brandao_2010,brandao_2010-1}, assuming that the main claims of~\cite{brandao_2010,brandao_2010-1} are in fact correct, 
Our Theorem~\ref{thm:cost_lowerbound} then has the potential to outperform relative-entropy--based bounds in general. As we will shortly see, there are indeed examples in the resource theories of entanglement and magic where the emancipated hypothesis testing relative entropy yields a \emph{strict} improvement over all such previously known bounds.

It is therefore a very interesting open problem to determine whether an exact form of the asymptotic regularisation for the quantity $D^\delta_\h$ in~\eqref{eq:cost_lowerbound} can be established, and what the equivalent of relative entropy would be in this setting. This would lead to a new entropic-like lower bound on resource dilution cost that can be strictly better than the regularised relative entropy. 

\begin{proof}[Proof of Theorem~\ref{thm:cost_lowerbound}]
Assume that $C$ is any achievable rate of dilution, and consider a sequence of operations $(\Lambda_n)_n \in \O_\mu$ such that $\frac12 \norm{\Lambda(\phi^{\otimes \ceil{C n}}) - \rho^{\otimes n}}{1} \leq \ve_n$, with the error $\ve_n$ asymptotically vanishing. From Theorem~\ref{thm:oneshot_dilution} we have that
\begin{equation}\begin{aligned}
\norm{\phi^{\otimes \ceil{C n}}}{\mu} \geq \inf_{X \in \B_{\ve_n}(\rho^{\otimes n})} \norm{X}{\mu},
\end{aligned}\end{equation}
Using the $\ve$-$\delta$-lemma (Lemma~\ref{lem:epsdelta}), we get that
\begin{equation}\begin{aligned}\label{eq:secondlaw_upper}
	\norm{\phi^{\otimes \ceil{C n}}}{\mu} \geq \inf_{\norm{Z}{\mu}\leq 1} d_\h^{\delta}(\rho^{\otimes n} \| Z) \left( 1 - \delta - 2\ve_n \right)
\end{aligned}\end{equation}
for $n$ sufficiently large and $\delta$ sufficiently small so that $\delta + 2\ve_n < 1$. Thus
\begin{equation}\begin{aligned}
	  \frac{1}{n} \log \norm{\phi^{\otimes \ceil{C n}}}{\mu} &\geq \frac1n \inf_{\norm{Z}{\mu}\leq 1} D_\h^{\delta}(\rho^{\otimes n} \| Z) + \frac{\log\left( 1 - \delta - 2\ve_n \right)}{n}.
	  \end{aligned}\end{equation}
We can rewrite this as
\begin{equation}\begin{aligned}
	\frac{\ceil{Cn}}{n} \frac{1}{\ceil{C n}} \log \norm{\phi^{\otimes \ceil{C n}}}{\mu} &\geq \frac1n \inf_{\norm{Z}{\mu}\leq 1} D_\h^{\delta}(\rho^{\otimes n} \| Z) + \frac{\log\left( 1 - \delta - 2\ve_n \right)}{n}.
\end{aligned}\end{equation}
Taking $%\lim_{\delta \to 0} 
\limsup_{n\to\infty}$ of both sides yields
\begin{equation}\begin{aligned}
	C \, \limsup_{n\to\infty} \frac{1}{\ceil{C n}} \log \norm{\phi^{\otimes \ceil{C n}}}{\mu} \geq %\lim_{\delta \to 0}
 \limsup_{n\to\infty} \frac1n \inf_{\norm{Z}{\mu}\leq 1} D^{\delta}_\h \left(\rho^{\otimes n} \| Z\right).
\end{aligned}\end{equation}
Upper bounding $\displaystyle \limsup_{n\to\infty} \frac{1}{\ceil{C n}} \log \norm{\phi^{\otimes \ceil{C n}}}{\mu} \leq \loginf{\phi}$ concludes the proof of the first line of~\eqref{eq:cost_lowerbound}. The equality on the second line follows from the dual expression derived in Lemma~\ref{lem:etruscan_dual}.
\end{proof}

\begin{remark}
We note that our definition of the rate $r(\phi \toOmu \rho)$ assumes that the transformation error $\ve_n$ vanishes asymptotically. In the proof of the theorem, this is not strictly required: all we need here is for the error to eventually become sufficiently small so that $\delta + 2 \ve_n < 1$, allowing for the crucial $\ve$-$\delta$-lemma (Lemma~\ref{lem:epsdelta}) to be applied. If we instead %required $\delta$ to go to $0$, 
took the limit as $\delta \to 0$ of the resulting expression (which exists because $D^\delta_\h$ is explicitly non-decreasing in $\delta$), this would then give us the bound
\begin{equation}\begin{aligned}\label{eq:delta_to_zero}
 r(\phi \toOmu \rho)^{-1} \geq \tilde{r}(\phi \toOmu \rho)^{-1} &\geq \frac{ \lim_{\delta \to 0}  \limsup_{n\to\infty} \frac1n \inf_{\norm{Z}{\mu}\leq 1} D^{\delta}_\h \left(\rho^{\otimes n} \| Z\right)}{ \loginf{\phi} },
\end{aligned}\end{equation}
which %would apply even 
applies even to a modified notion of rate $\tilde{r}(\phi \toOmu \rho)$, defined analogously to~\eqref{eq:rate} but where the trace norm error $\ve_n = \frac12 \norm{\Lambda_n(\phi^{\otimes n}) - \rho^{\otimes \floor{rn}}}{1}$ satisfies only $\limsup_{n\to\infty} \ve_n < 1/2$ instead of $\lim_{n\to\infty} \ve_n = 0$ as in~\eqref{eq:rate}. 
Such a bound is known as a `pretty strong converse' bound~\cite{morgan_2014} (cf.~\cite{lami_2021-1} in the context of distillation and dilution of quantum resources). For simplicity, here we only explicitly consider rates with vanishing error.

We also note that taking the limit $\delta \to 0$ could be expected to make the regularisation in Eq.~\eqref{eq:delta_to_zero} easier to compute than the $\delta$-dependent form.
\end{remark}

A similar type of bound can be established for distillable resource yield.

\begin{boxed}{white}
\begin{theorem}\label{thm:dist_upperbound}
For any state $\rho$, any pure state $\phi$, and any $\ve \in [0,1)$,  the rate of transformation from $\rho$ to $\phi$ under $\O_\mu$ satisfies
\begin{equation}\begin{aligned} \label{eq:dist_upperbound}
	r(\rho \toOmu \phi) \leq \frac{\liminf_{n\to\infty} \frac1n \inf_{X \in \B_{\ve}(\rho^{\otimes n})} \log \norm{X}{\mu} }{ -\loginfd{\phi} } ,
\end{aligned}\end{equation}
where
\begin{equation}\begin{aligned}
	\loginfd{\phi} &\coloneqq \limsup_{n \to \infty} \frac{1}{n} \log \norm{\phi^{\otimes n}}{\mu}^\circ.
\end{aligned}\end{equation}
For consistency, if $\norm{\phi^{\otimes n}}{\mu}^\circ = 1$, we understand the bound as the trivial restriction $r(\rho \toOmu \phi) \leq \infty$.
\end{theorem}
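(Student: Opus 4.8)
The plan is to prove Theorem~\ref{thm:dist_upperbound} as the exact dual of the argument for Theorem~\ref{thm:cost_lowerbound}, tracing the route of Figure~\ref{fig:conceptual} from distillation, through $D^\delta_\h$, to the norm $\norm{\cdot}{\mu}$. First I would fix the parameter $\ve \in [0,1)$ appearing in the statement, let $R$ be any rate achievable in $\rho \toOmu \phi$, and extract a sequence $\Lambda_n \in \O_\mu$ with $\tfrac12 \norm{\Lambda_n(\rho^{\otimes n}) - \phi^{\otimes \floor{Rn}}}{1} \le \ve_n$ and $\ve_n \to 0$. Since $\phi^{\otimes \floor{Rn}}$ is pure, Theorem~\ref{thm:oneshot_distillation} applied to the input $\rho^{\otimes n}$ with trace-norm error $2\ve_n$ gives
\[
	\frac{1}{\norm{\phi^{\otimes \floor{Rn}}}{\mu}^\circ} \le \inf_{\norm{Z}{\mu}\le 1} d_\h^{2\ve_n}\!\left(\rho^{\otimes n}\,\|\,Z\right).
\]

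Next I would feed this into the $\ve$-$\delta$-lemma (Lemma~\ref{lem:epsdelta}) applied to $\rho^{\otimes n}$, with ball radius equal to the fixed $\ve$ and smoothing $\delta = 2\ve_n$. This is licensed for all $n$ large enough that $2\ve_n + \ve < 1$, which holds because $\ve<1$ is fixed and $\ve_n\to 0$. Taking logarithms and chaining the two bounds yields
\[
	-\log \norm{\phi^{\otimes \floor{Rn}}}{\mu}^\circ \le \inf_{X \in \B_\ve(\rho^{\otimes n})} \log\norm{X}{\mu} - \log\!\left(1 - 2\ve_n - \ve\right),
\]
which, after dividing by $n$ and inserting the factor $\floor{Rn}/n$, rearranges to
\[
	\frac1n \inf_{X \in \B_\ve(\rho^{\otimes n})} \log\norm{X}{\mu} \;\ge\; -\frac{\floor{Rn}}{n}\,\frac{1}{\floor{Rn}}\log\norm{\phi^{\otimes \floor{Rn}}}{\mu}^\circ + \frac{\log\!\left(1 - 2\ve_n - \ve\right)}{n}.
\]

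I would then pass to the limit. The error term vanishes as $n\to\infty$, while $\floor{Rn}/n \to R$; bounding $\limsup_{n} \tfrac{1}{\floor{Rn}} \log\norm{\phi^{\otimes \floor{Rn}}}{\mu}^\circ \le \loginfd{\phi}$ as a subsequence of the defining limit superior, and taking $\liminf_{n\to\infty}$, I obtain $\liminf_n \tfrac1n \inf_{X\in\B_\ve(\rho^{\otimes n})}\log\norm{X}{\mu} \ge R\,(-\loginfd{\phi})$. In the non-trivial regime $\loginfd{\phi}<0$ I can divide by the positive quantity $-\loginfd{\phi}$ to get $R \le \liminf_n \tfrac1n \inf_{X\in\B_\ve(\rho^{\otimes n})}\log\norm{X}{\mu}\,/\,(-\loginfd{\phi})$; since this holds for every achievable $R$, taking the supremum over such rates gives the claim. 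When $\loginfd{\phi}\ge 0$ (in particular $\norm{\phi^{\otimes n}}{\mu}^\circ = 1$), the right-hand side is understood as $+\infty$ and the bound is vacuous, exactly as flagged in the statement.

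The only genuinely delicate point I expect is the final limit manipulation, namely justifying that $\liminf_n$ of the product $\tfrac{\floor{Rn}}{n}\,\tfrac{1}{\floor{Rn}}\log\norm{\phi^{\otimes \floor{Rn}}}{\mu}^\circ$ factorises into $R$ times the limit superior of the second factor. This relies on $\floor{Rn}/n$ being convergent and on $\limsup_m \tfrac1m \log\norm{\phi^{\otimes m}}{\mu}^\circ = \loginfd{\phi}$ being finite and strictly negative, so that the normalised dual-norm term is eventually bounded above by a negative constant and the sign bookkeeping under $\liminf$ goes through; controlling these signs, together with the harmless condition $2\ve_n + \ve < 1$ needed to invoke Lemma~\ref{lem:epsdelta}, is where the care is concentrated, everything else being a transcription of the dual argument for Theorem~\ref{thm:cost_lowerbound}.
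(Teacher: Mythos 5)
Your proposal is correct and follows essentially the same route as the paper's proof: extract a sequence of protocols achieving the rate, apply the converse direction of Theorem~\ref{thm:oneshot_distillation} to relate $1/\norm{\phi^{\otimes \floor{Rn}}}{\mu}^\circ$ to $d_\h^{2\ve_n}(\rho^{\otimes n}\|Z)$, invoke the $\ve$-$\delta$-lemma with fixed ball radius $\ve$ and smoothing $2\ve_n$, and pass to the limit using the subsequence bound on the regularised dual norm. The sign and limit bookkeeping you flag at the end is handled the same way (and equally tersely) in the paper, so there is nothing further to add.
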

\end{boxed}

\begin{proof}
Let $D$ be an achievable rate for distillation, and consider a sequence of protocols $(\Gamma_n)_n \in \O_\mu$ such that $\frac12 \norm{\Gamma_n(\rho^{\otimes n}) - \phi^{\otimes \floor{Dn}}}{1} \coloneqq \delta_n$, with $\delta_n$ denoting the asymptotically vanishing distillation error. Theorem~\ref{thm:oneshot_distillation} then gives
\begin{equation}\begin{aligned}\label{eq:secondlaw_lower}
	\frac{1}{\norm{\phi^{\otimes \floor{D n}}}{\mu}^{\circ}} \leq \inf_{\norm{Z}{\mu} \leq 1} d_\h^{2 \delta_n}(\rho^{\otimes n} \| Z).
\end{aligned}\end{equation}
For $n$ sufficiently large and $\ve$ sufficiently small so that $2 \delta_n + \ve < 1$, the $\ve$-$\delta$-lemma (Lemma~\ref{lem:epsdelta}) allows us to estimate %gives
\begin{equation}\begin{aligned}
	\frac{\floor{Dn}}{n} \frac{1}{\ceil{Dn}} \left(-\log \norm{\phi^{\otimes \ceil{D n}}}{\mu}^\circ\right) \leq \frac1n \inf_{X \in \B_{\ve_n}(\rho^{\otimes n})} \log \norm{X}{\mu} - \frac{\log(1- 2 \delta_n - \ve)}{n}.
\end{aligned}\end{equation}
Taking $\lim_{\ve\to 0}\liminf_{n\to\infty}$ of both sides and bounding
\begin{equation}\begin{aligned}
	\liminf_{n\to\infty}\frac{1}{\floor{Dn}} \left( - \log \norm{\phi^{\otimes \floor{D n}}}{\mu}^{\circ} \right) \geq \liminf_{n\to\infty}\frac{1}{n} \left( - \log \norm{\phi^{\otimes n}}{\mu}^{\circ} \right) = - \loginfd{\phi}
\end{aligned}\end{equation}
gives the stated result.
\end{proof}

\begin{remark}
%Once again,
Again, here we do not actually need the transformation error %$\delta_n$
$\delta_n \coloneqq \frac12 \norm{\Gamma_n(\rho^{\otimes n}) - \phi^{\otimes \floor{Dn}}}{1}$ to go to 0 asymptotically; if we required that $\ve \to 0$ instead, this would yield the pretty strong converse bound
\begin{equation}\begin{aligned}
	r(\rho \toOmu \phi) \leq \tilde{r}(\rho \toOmu \phi) \leq \frac{\lim_{\ve \to 0} \liminf_{n\to\infty} \frac1n \inf_{X \in \B_{\ve}(\rho^{\otimes n})} \log \norm{X}{\mu} }{ -\loginfd{\phi} },
\end{aligned}\end{equation}
which would apply for a modified definition of rate $\vphantom{\Big|} \tilde{r}(\rho \toOmu \phi)$ where we only require the transformation error to satisfy $\limsup_{n\to\infty} \delta_n < 1/2$ instead of $\lim_{n\to\infty} \delta_n = 0$. %whenever $\delta_n \to \delta$ for any $\delta < \frac12$.
As before, this could potentially constitute a more easily computable formula.
\end{remark}

\begin{remark}
One may wonder whether our bounds in Theorems~\ref{thm:cost_lowerbound} and \ref{thm:dist_upperbound} are simply a statement of the `no free lunch' relation $R_{c,\O_\mu}(\rho) \geq R_{d,\O_\mu}(\rho)$, that is, the fact that one cannot extract more resources from a given state than it costs to produce that same state. As expounded in Sec.~\ref{sec:intuition}, this is certainly the conceptual intuition behind our approach: we use the quantity $D^\ve_\h$, which characterises distillation (cf.\ Theorem~\ref{thm:oneshot_distillation}), to lower bound the resource cost; conversely, we use $\norm{\cdot}{\mu}$, which directly quantifies resource dilution (cf.\ Theorem~\ref{thm:oneshot_dilution}), to upper bound distillable resources. 
However, our method does not directly go through $R_{d,\OO_\mu}$ to lower bound $R_{c,\OO_\mu}$ or vice versa; through the application of the $\ve$-$\delta$ lemma (Lemma~\ref{lem:epsdelta}) we circumvent the need to rely on such a relation, and at the same time we also obtain more insight into the transformation errors of the protocols.

%However, it is a remarkable feature of our approach
In fact, it is a noteworthy observation that the %`no free lunch' 
inequality $R_{c,\O_\mu}(\rho) \geq R_{d,\O_\mu}(\rho)$ cannot be proved directly with the %usual methods.
same methods. As a matter of fact, we do not yet know whether such inequality holds in full generality. Let us examine %a bit 
more closely the reason why this is so. Given a state $\rho$, consider a pair of achievable rates $C,D$ for dilution and distillation, respectively. According to~\eqref{eq:rate}, this means that there exist operations $\Lambda_n\in \O_\mu$ and $\Gamma_n\in \O_\mu$ with the property that $\ve_n \coloneqq \norm{\Lambda_n(\phi^{\otimes n}) - \rho^{\otimes \floor{Cn}}}{1}$ and $\delta_n \coloneqq \norm{\Gamma_n(\rho^{\otimes n}) - \phi^{\otimes \floor{Dn}}}{1}$ satisfy $\lim_{n\to\infty} \ve_n = 0 = \lim_{n\to\infty} \delta_n$. A standard way of proceeding would be to concatenate these operations and conclude that if $CD>1$ then we would be able to distil more copies of $\phi$ than we started with, something that should be impossible with free operations. We would therefore conclude that $CD \leq 1$, and upon taking the supremum over achievable rate pairs this would yield that $R_{d,\OO_\mu} (\rho) \leq R_{c,\OO_\mu} (\rho)$, as desired.

Let us try to make this reasoning rigorous. Defining $\Theta_n \coloneqq \Gamma_{\floor{Cn}} \circ \Lambda_n \in \O_\mu$, we have immediately that
\begin{equation}\begin{aligned}
\norm{\Theta_n(\phi^{\otimes n}) - \phi^{\otimes \floor{D\floor{Cn}}}}{1} &\leq \norm{\Theta_n(\phi^{\otimes n}) - \Gamma_{\floor{Dn}}(\rho^{\otimes \floor{Cn}})}{1} + \norm{\Gamma_{\floor{Dn}}(\rho^{\otimes \floor{Cn}}) - \phi^{\otimes \floor{D\floor{Cn}}}}{1} \\
&= \norm{\Gamma_{\floor{Cn}}\left(\Lambda_n(\phi^{\otimes n}) - \rho^{\otimes \floor{Cn}}\right)}{1} + \norm{\Gamma_{\floor{Dn}}(\rho^{\otimes \floor{Cn}}) - \phi^{\otimes \floor{D\floor{rn}}}}{1} \\
&\leq \ve_n + \delta_{\floor{Cn}} ,
\end{aligned}\end{equation}
where in the last line we used contractivity of $\norm{\cdot}{1}$ under operations in $\O_\mu$. Leveraging also contractivity of $\norm{\cdot}{\mu}$, we obtain that $\norm{\Theta_n(\phi^{\otimes n})}{\mu} \leq \norm{\phi^{\otimes n}}{\mu}$. Therefore, $\phi^{\otimes \floor{D\floor{Cn}}}$ lies to within a vanishingly small trace distance from a state whose $\mu$-norm is upper bounded by $\norm{\phi^{\otimes n}}{\mu}$.

Importantly, however, this does not allow us to draw any conclusions concerning the $\mu$-norm of $\phi^{\otimes \floor{D\floor{Cn}}}$ itself. Indeed, to do so one could be tempted to apply the triangle inequality
\begin{equation}\begin{aligned}
\norm{\phi^{\otimes \floor{D\floor{Cn}}}}{\mu} \leq \norm{\phi^{\otimes n}}{\mu} + \norm{\Theta_n(\phi^{\otimes n}) - \phi^{\otimes \floor{D\floor{Cn}}}}{\mu};
\end{aligned}\end{equation}
however, in order to further upper bound $\norm{\Theta_n(\phi^{\otimes n}) - \phi^{\otimes \floor{D\floor{Cn}}}}{\mu}$, we would need to somehow turn the $\mu$-gauge into the trace norm, %which we can control 
as we know the latter to be vanishingly small. 
This would seem in principle possible, for instance, if $\norm{\cdot}{\mu}$ was a bona fide norm and we worked in finite-dimensional spaces, where all norms are equivalent. Doing so, one would obtain
%This is in principle certainly possible, because we are working in finite-dimensional spaces and hence all norms are equivalent. Doing so, one obtains 
a relation of the form 
\begin{equation}\begin{aligned}
\norm{\Theta_n(\phi^{\otimes n}) - \phi^{\otimes \floor{D\floor{Cn}}}}{\mu}\leq N_n \norm{\Theta_n(\phi^{\otimes n}) - \phi^{\otimes \floor{D\floor{Cn}}}}{1} \leq N_n \left(\ve_n + \delta_{\floor{Cn}}\right)
\end{aligned}\end{equation}
for some constant $N_n>0$. We would be able to complete the proof if we could show that the right-hand side of the above inequality converges to zero. The key issue that prevents us from doing so is that the constant $N_n$ can diverge as $n\to\infty$ --- and, typically, it will do so exponentially fast.

The above reasoning should convince the reader that the na\"ive way of proving the no free lunch relation fails in this case. Although we do not expect this inequality to hold exactly in all resource theories,  %However, 
it turns out that we can recover a variant of such a relation by employing our key technical tool, the $\ve$-$\delta$ lemma (Lemma~\ref{lem:epsdelta}). Indeed, combining the approaches of Theorems~\ref{thm:cost_lowerbound} and~\ref{thm:dist_upperbound} gives
\begin{equation}\begin{aligned}
  \norm{\phi^{\otimes \ceil{C n}}}{\mu} & \geq \inf_{X \in \B_{\ve_n}(\rho^{\otimes n})} \norm{X}{\mu} \geq \inf_{\norm{Z}{\mu}\leq 1} d_\h^{2 \delta_n}(\rho^{\otimes n} \| Z) \left( 1 - 2 \delta_n - 2\ve_n \right) \geq \frac{1}{\norm{\phi^{\otimes \floor{D n}}}{\mu}^{\circ}}\left( 1 - 2 \delta_n - 2\ve_n \right) 
\end{aligned}\end{equation}
for all sufficiently large $n$, which then yields the relation
\begin{equation}\begin{aligned}
   \frac{R_{c,\OO_\mu} (\rho)}{-L_{\mu,\infty}^\circ(\phi)} \geq \frac{R_{d,\OO_\mu} (\rho)}{L_{\mu,\infty}(\phi)}.
   \label{eq:no_free_lunch_variant}
\end{aligned}\end{equation}
In this context, the values of $L_{\mu,\infty}(\phi)$ and $-L_{\mu,\infty}^\circ(\phi)$ can be thought of as `renormalisation factors' that account for the specific properties of the norm $\norm{\cdot}{\mu}$ and the choice of the target state $\phi$. 
Since by the Cauchy--Schwartz inequality $\norm{\phi^{\otimes n}}{\mu} \norm{\phi^{\otimes n}}{\mu}^\circ\geq \< \phi^{\otimes n}, \phi^{\otimes n}\> = 1$, we see immediately that $\frac{-L_{\mu,\infty}^\circ(\phi)}{L_{\mu,\infty}(\phi)}\leq 1$; thus, 
the above relation~\eqref{eq:no_free_lunch_variant} would be implied by the no free lunch inequality $R_{c,\OO_\mu} (\rho) \geq R_{d,\OO_\mu} (\rho)$.  %if we were able to prove it.
This resembles analogous results in the description of quantum resource transformations under completely positive and trace-preserving maps, where similar renormalisation coefficients are needed~\cite{takagi_2021,kuroiwa_2020}.
\end{remark}

%%%%%%%%%%%%%%%%%%%%%%%%%%%%%%%%%%%%%%%%%%%%%%%%%%%%%%%%%%%%%%%%%%%%
\subsection{Single-letter bounds}\label{subsec:singleletter}
%%%%%%%%%%%%%%%%%%%%%%%%%%%%%%%%%%%%%%%%%%%%%%%%%%%%%%%%%%%%%%%%%%%%

Our aim now is to obtain bounds which are single letter, that is, they do not require regularisation and thus can be more easily evaluated in practice. As a first step towards simplifying the bounds obtained in the previous section, let us notice that, instead of using the parameters $\delta$ (in Theorem~\ref{thm:cost_lowerbound}) or $\ve$ (in Theorem~\ref{thm:dist_upperbound}),  %which asymptotically converge to zero, 
we could have simply fixed them as zero to begin with. This has the intuitive interpretation of using the zero-error distillation rate to lower bound the resource cost, or vice versa. We formalise this as follows.
\begin{boxed}{white}
\begin{corollary}\label{cor:zero_error_bounds}
For any state $\rho$ and any pure state $\phi$, the rate of transformation from $\phi$ to $\rho$ satisfies
\begin{equation}\begin{aligned}
	r(\phi \toOmu \rho)^{-1} &\geq \frac{\limsup_{n\to\infty} \frac1n \inf_{\norm{Z}{\mu}\leq 1} D^{0}_\h (\rho^{\otimes n} \| Z) }{ \loginf{\phi} }\\
	&= \frac{ \limsup_{n\to\infty} \frac1n \log \sup \lset \< W, \rho^{\otimes n} \> \bar \norm{W}{\mu}^\circ \leq 1,\; \norm{W}{\infty} = \< W, \rho^{\otimes n} \> \rset }{ \loginf{\phi} },
\end{aligned}\end{equation}
and analogously the reverse transformation satisfies
\begin{equation}\begin{aligned}
	%r(\rho \toOmu \phi)^{-1}
	r(\rho \toOmu \phi) \leq \frac{\liminf_{n\to\infty} \frac1n  \log \norm{\rho^{\otimes n}}{\mu} }{ -\loginfd{\phi} }.
\end{aligned}\end{equation}
\end{corollary}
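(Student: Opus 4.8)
The plan is to obtain both inequalities as immediate specialisations of Theorems~\ref{thm:cost_lowerbound} and~\ref{thm:dist_upperbound}, which are already stated for every admissible smoothing parameter. Since the parameter ranges $\delta\in[0,1)$ in the former and $\ve\in[0,1)$ in the latter both \emph{include} the value zero, no new analytic work is needed: I would simply set $\delta=0$ in the cost bound and $\ve=0$ in the distillation bound, and then rewrite each resulting expression in the more transparent form stated in the Corollary.

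For the cost bound, setting $\delta=0$ in Theorem~\ref{thm:cost_lowerbound} gives directly
\begin{equation}\begin{aligned}
	r(\phi \toOmu \rho)^{-1} \geq \frac{\limsup_{n\to\infty} \frac1n \inf_{\norm{Z}{\mu}\leq 1} D^{0}_\h (\rho^{\otimes n} \| Z) }{ \loginf{\phi} },
\end{aligned}\end{equation}
which is already the first line. To reach the stated dual (supremum) form I would apply Lemma~\ref{lem:etruscan_dual}, specifically the zero-error identity~\eqref{eq:etruscan_dual_zero}, with the operator $\rho^{\otimes n}$ in place of $\rho$. Recalling that $D^0_\h=\log d^0_\h$ and that $\log$ is monotone increasing, one has $\inf_{\norm{Z}{\mu}\leq 1} D^0_\h(\rho^{\otimes n}\|Z)=\log\inf_{\norm{Z}{\mu}\leq 1} d^0_\h(\rho^{\otimes n}\|Z)$, and the latter infimum equals the supremum of $\<W,\rho^{\otimes n}\>$ over $\{W : \norm{W}{\mu}^\circ\leq 1,\ \norm{W}{\infty}=\<W,\rho^{\otimes n}\>\}$. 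Substituting this identity inside the $\limsup$ reproduces the second line verbatim.

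For the distillation bound, the key observation is that the $\ve=0$ ball collapses to a single point. Indeed, $B_0(\rho^{\otimes n})=\{X : \norm{X}{1}\leq 1,\ \norm{X-\rho^{\otimes n}}{1}\leq 0\}$, and the constraint $\norm{X-\rho^{\otimes n}}{1}\leq 0$ forces $X=\rho^{\otimes n}$, which is consistent with $\norm{\rho^{\otimes n}}{1}=1\leq 1$. Hence $\inf_{X\in B_0(\rho^{\otimes n})}\log\norm{X}{\mu}=\log\norm{\rho^{\otimes n}}{\mu}$, and setting $\ve=0$ in Theorem~\ref{thm:dist_upperbound} produces exactly the claimed upper bound on $r(\rho\toOmu\phi)$.

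I do not anticipate a genuine obstacle: the content is entirely contained in the two parent theorems, and the only points requiring a moment's care are the bookkeeping identity $D^0_\h=\log d^0_\h$ combined with Lemma~\ref{lem:etruscan_dual} to pass to the supremum form, and the verification that $B_0(\rho^{\otimes n})$ is the singleton $\{\rho^{\otimes n}\}$. Neither step re-engages the delicate infinite-dimensional minimax or Banach--Alaoglu arguments already dispatched in the proof of Lemma~\ref{lem:etruscan_dual}, so each can be checked in a single line.
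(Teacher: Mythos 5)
Your proposal is correct and is exactly the route the paper takes: the corollary is stated immediately after the remark that one ``could have simply fixed [$\delta$ and $\ve$] as zero to begin with,'' so the content is precisely the specialisation $\delta=0$ in Theorem~\ref{thm:cost_lowerbound} (rewritten via the $\ve=0$ case of Lemma~\ref{lem:etruscan_dual}) and $\ve=0$ in Theorem~\ref{thm:dist_upperbound}, where $B_0(\rho^{\otimes n})=\{\rho^{\otimes n}\}$. Your two bookkeeping observations --- that $\inf\log=\log\inf$ and that the constraint $\norm{W}{\infty}\leq\<W,\rho^{\otimes n}\>$ in~\eqref{eq:etruscan_dual_zero} is automatically an equality by H\"older --- are both sound, so nothing is missing.
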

\end{boxed}
The evaluation of such bounds is generally non-trivial. It certainly can, however, simplify under additional assumptions. Assume, for instance, that $\norm{X^{\otimes n}}{\mu} \leq \norm{X}{\mu}^n$ which in practice will be satisfied in virtually all cases of interest; we then immediately get
\begin{equation}\begin{aligned}
	R_{d,\O_\mu}(\rho) = r(\rho \toOmu \phi) \leq \frac{ \log \norm{\rho}{\mu} }{ -\loginfd{\phi} }.
\end{aligned}\end{equation}
In entanglement theory, for instance, fixing $\phi$ as the maximally entangled state and choosing $\norm{X}{\mu} = \norm{X^\Gamma}{1}$ gives $\loginfd{\phi} = 1$ and the above recovers one of the oldest known computable upper bounds on distillable entanglement: the logarithmic negativity~\cite{vidal_2002}.

We have thus obtained what could be considered an unsurprising bound: the rate of distillation is upper bounded by the quantity $\log \norm{\rho}{\mu}$, which characterises one-shot, zero-error dilution (cf.~\ref{thm:oneshot_dilution}), and thus clearly should upper bound the asymptotic rate of dilution (as the cost can only increase in the one-shot setting). This natural correspondence does \emph{not} immediately extend to the bound in the other direction: it is not clear whether one-shot, zero-error distillation gives a better bound than asymptotic zero-error distillation. This can be considered as a consequence of the fact that at no point do we assume that the set of operations $\OO_\mu$ is closed under tensor product ---  it might be the case that $\Lambda \in \OO_\mu$, but the $n$-copy protocol $\Lambda^{\otimes n}$ is no longer in $\OO_\mu$.

We will thus need more assumptions in order to bound $D^{0}_\h$. To see this, recall that
\begin{equation}\begin{aligned}
	\inf_{\norm{Z}{\mu} \leq 1} D_\h^0(\rho^{\otimes n} \| Z) &= \log \sup \lset \< W, \rho^{\otimes n} \> \bar \norm{W}{\mu}^\circ \leq 1,\; \norm{W}{\infty} = \< W, \rho^{\otimes n} \> \rset.
\end{aligned}\end{equation}
We would now like to lower bound $\inf_{\norm{Z}{\mu} \leq 1} D_\h^0(\rho^{\otimes n} \| Z)$ with $n \inf_{\norm{Z}{\mu} \leq 1} D_\h^0(\rho \| Z)$ so that the problem reduces to a single-letter quantity. As the operator norm $\norm{\cdot}{\infty}$ is multiplicative under tensor product, such a bound would indeed be possible provided that
\begin{equation}\begin{aligned}\label{eq:subadditive_dual}
	\norm{W^{\otimes n}}{\mu}^\circ \leq \left(\norm{W}{\mu}^\circ\right)^n,
\end{aligned}\end{equation}
so that we can take $W^{\otimes n}$ as a feasible solution. However, for many resource theories of interest the \emph{opposite} inequality is satisfied here: since $\norm{\cdot}{\mu}$ is typically a submultiplicative gauge, this implies the supermultiplicativity of $\norm{\cdot}{\mu}^\circ$ under tensor product. The only way to satisfy~\eqref{eq:subadditive_dual} in such cases is for \emph{equality} to hold there; luckily, such multiplicative norms are available in many theories of interest.

We formalise all of the above as follows.
\begin{boxed}{white}
\begin{corollary}\label{cor:single_letter_bounds}
If the given gauge satisfies $\norm{X^{\otimes n}}{\mu} \leq \norm{X}{\mu}^n$ for all $X$, then
\begin{equation}\begin{aligned}
	r(\rho \toOmu \phi) \leq \frac{ \log \norm{\rho}{\mu} }{ -\loginfd{\phi} }.
\end{aligned}\end{equation}

If the given gauge satisfies $\norm{Y^{\otimes n}}{\mu}^\circ \leq \left(\norm{Y}{\mu}^\circ\right)^n$ for all $Y$, then
\begin{equation}\begin{aligned}
	r(\phi \toOmu \rho)^{-1} &\geq \frac{ \inf_{\norm{Z}{\mu}\leq 1} D^{0}_\h (\rho \| Z) }{ \loginf{\phi} }\\
	&= \frac{ \log \sup \lset \< W, \rho \> \bar \norm{W}{\mu}^\circ \leq 1,\; \norm{W}{\infty} = \< W, \rho \> \rset }{ \loginf{\phi} }.
\end{aligned}\end{equation}
\end{corollary}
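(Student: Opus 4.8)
The plan is to derive both single-letter bounds from Corollary~\ref{cor:zero_error_bounds} by collapsing the regularisations it contains, and then to settle the concluding biconditional through a duality argument linking sub- and supermultiplicativity of $\norm{\cdot}{\mu}$ and $\norm{\cdot}{\mu}^\circ$. The distillation direction is the easy one: assuming $\norm{X^{\otimes n}}{\mu}\leq\norm{X}{\mu}^n$ for all $X$, one has $\tfrac1n\log\norm{\rho^{\otimes n}}{\mu}\leq\log\norm{\rho}{\mu}$ for every $n$, hence $\liminf_n\tfrac1n\log\norm{\rho^{\otimes n}}{\mu}\leq\log\norm{\rho}{\mu}$, and substituting this into the second inequality of Corollary~\ref{cor:zero_error_bounds} gives the asserted bound on $r(\rho\toOmu\phi)$.

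For the cost direction I would assume $\norm{Y^{\otimes n}}{\mu}^\circ\leq(\norm{Y}{\mu}^\circ)^n$ and prove that the sequence $a_n\coloneqq\inf_{\norm{Z}{\mu}\leq 1}D_\h^0(\rho^{\otimes n}\|Z)$ obeys $a_n\geq n\,a_1$. Using the $\ve=0$ dual form from Lemma~\ref{lem:etruscan_dual}, choose $Q$ optimal (or near-optimal) for $a_1$, so that $\norm{Q}{\infty}\leq 1$, $\<Q,\rho\>=1$, and $1/\norm{Q}{\mu}^\circ=2^{a_1}$. The tensor power $Q^{\otimes n}$ is feasible for the $n$-copy problem, since $\norm{Q^{\otimes n}}{\infty}=\norm{Q}{\infty}^n\leq 1$ by multiplicativity of the operator norm and $\<Q^{\otimes n},\rho^{\otimes n}\>=\<Q,\rho\>^n=1$; and by the assumed submultiplicativity of $\norm{\cdot}{\mu}^\circ$ its objective satisfies $1/\norm{Q^{\otimes n}}{\mu}^\circ\geq(1/\norm{Q}{\mu}^\circ)^n=2^{n a_1}$. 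This forces $a_n\geq n\,a_1$, whence $\limsup_n\tfrac1n a_n\geq a_1$, and the first inequality of Corollary~\ref{cor:zero_error_bounds} then yields the single-letter bound on $r(\phi\toOmu\rho)^{-1}$; the second line of the statement is simply the $\ve=0$ dual expression of Lemma~\ref{lem:etruscan_dual}.

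For the biconditional, the $(\Leftarrow)$ direction is immediate, since multiplicativity of both norms entails in particular the two submultiplicativity hypotheses and hence both bounds. For $(\Rightarrow)$ I would show that, whenever both hypotheses hold simultaneously, each submultiplicativity is upgraded to multiplicativity through the dual. Concretely, submultiplicativity of $\norm{\cdot}{\mu}$ makes $\norm{\cdot}{\mu}^\circ$ supermultiplicative: tensoring a witness $W$ with $\norm{W}{\mu}\leq 1$ and $\<Y,W\>\approx\norm{Y}{\mu}^\circ$ gives $\norm{W^{\otimes n}}{\mu}\leq\norm{W}{\mu}^n\leq 1$ and $\<Y^{\otimes n},W^{\otimes n}\>=\<Y,W\>^n$, so that $\norm{Y^{\otimes n}}{\mu}^\circ\geq(\norm{Y}{\mu}^\circ)^n$; combined with the assumed reverse inequality this forces equality. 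Symmetrically, submultiplicativity of $\norm{\cdot}{\mu}^\circ$ makes $\norm{\cdot}{\mu}$ supermultiplicative, via the representation $\norm{X}{\mu}=\sup\lset\<X,W\>\bar\norm{W}{\mu}^\circ\leq 1\rset$: tensoring a near-optimal $W$ gives $\norm{W^{\otimes n}}{\mu}^\circ\leq(\norm{W}{\mu}^\circ)^n\leq 1$ and hence $\norm{X^{\otimes n}}{\mu}\geq\<X^{\otimes n},W^{\otimes n}\>=\norm{X}{\mu}^n$, which together with the assumed reverse inequality yields equality.

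The main obstacle is functional-analytic bookkeeping in this last step rather than any conceptual difficulty. The two supermultiplicativity arguments rest on (approximate) attainment of the optimising witnesses and, for the $\norm{\cdot}{\mu}$ direction, on the biduality $(\norm{\cdot}{\mu}^\circ)^\circ=\norm{\cdot}{\mu}$; both are available under the standing assumptions, since $\norm{\cdot}{\mu}$ is absolutely homogeneous, convex and closed so that the bipolar theorem applies, and the relevant unit balls are weak* compact by Banach--Alaoglu exactly as invoked in the proof of Lemma~\ref{lem:etruscan_dual}. Replacing exactly optimal by near-optimal witnesses removes any residual attainment worry, so the duality between sub- and supermultiplicativity goes through cleanly and the biconditional closes.
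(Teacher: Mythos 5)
Your proposal is correct and follows essentially the same route as the paper: the paper gives no separate proof block for this corollary but derives it exactly as you do, by feeding submultiplicativity of $\norm{\cdot}{\mu}$ into Corollary~\ref{cor:zero_error_bounds} for the distillation bound, and by using $W^{\otimes n}$ (with multiplicativity of $\norm{\cdot}{\infty}$ and submultiplicativity of $\norm{\cdot}{\mu}^\circ$) as a feasible point to show $\inf_{\norm{Z}{\mu}\leq 1}D_\h^0(\rho^{\otimes n}\|Z)\geq n\inf_{\norm{Z}{\mu}\leq 1}D_\h^0(\rho\|Z)$ for the cost bound. Your duality argument for the concluding biconditional is the same sub-/supermultiplicativity trade-off the paper sketches in the surrounding text, just carried out in more detail.
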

\end{boxed}
Here we remark that, as long as the gauge $\norm{\cdot}{\mu}$ is lower semicontinuous in the trace norm topology, then the two conditions $\norm{X^{\otimes n}}{\mu} \leq \norm{X}{\mu}^n$ and $\norm{Y^{\otimes n}}{\mu}^\circ \leq \left(\norm{Y}{\mu}^\circ\right)^n$ can be simultaneously true if and only if $\norm{X^{\otimes n}}{\mu} = \norm{X}{\mu}^n$ and $\norm{Y^{\otimes n}}{\mu}^\circ = \left(\norm{Y}{\mu}^\circ\right)^n$ for all $X$ and $Y$. This is because lower semicontinuity ensures that $\norm{X}{\mu} = \sup \lset \< Y, X \> \bar \norm{Y}{\mu}^\circ \leq 1 \rset$ by the Fenchel--Moreau theorem (see e.g.~\cite[Theorem~2.3.3]{zalinescu_2008} and~\cite[Theorem~15.1]{rockafellar_1970}), 
which then gives
\begin{equation}\begin{aligned}
  \norm{X^{\otimes n}}{\mu} &= \sup \lset \< Y_n, X^{\otimes n} \> \bar \norm{Y_n}{\mu}^\circ \leq 1 \rset  \geq \sup \lset \< Y^{\otimes n}, X^{\otimes n} \> \bar \norm{Y^{\otimes n}}{\mu}^\circ \leq 1 \rset\\
  &\geq \sup \lset \< Y^{\otimes n}, X^{\otimes n} \> \bar \norm{Y}{\mu}^\circ \leq 1 \rset = \norm{X}{\mu}^n
\end{aligned}\end{equation}
as long as $\norm{\cdot}{\mu}^\circ$ is subadditive. The same is clearly true for $\norm{\cdot}{\mu}^\circ$ when $\norm{\cdot}{\mu}$ is subadditive.

In practice, we will often obtain a single-letter lower bound for resource cost in a slightly different manner. Say that a given resource theory is characterised by a gauge $\norm{\cdot}{\mu}$ which is not necessarily multiplicative itself, but there exists a multiplicative gauge $\norm{\cdot}{\gamma}$ such that $\norm{X}{\gamma} \leq \norm{X}{\mu}$ for all $X$. We can then bound
\begin{equation}\begin{aligned}
	\inf_{\norm{Z}{\mu} \leq 1} D_\h^0(\rho^{\otimes n} \| Z) \geq \inf_{\norm{Z}{\gamma} \leq 1} D_\h^0(\rho^{\otimes n} \| Z) \geq n \inf_{\norm{Z}{\gamma} \leq 1} D_\h^0(\rho \| Z)
\end{aligned}\end{equation}
for any $\rho$. We state the resulting lower bound as a separate result, as it will find important applications in several theories of interest.

\begin{boxed}{white}
\begin{corollary}[Multiplicative gauge bound]\label{cor:single_letter_gamma_bound}
Let $\norm{\cdot}{\gamma}$ be a resource gauge such that $\norm{X}{\gamma} \leq \norm{X}{\mu}\, \forall X$ and $\norm{Y^{\otimes n}}{\gamma}^\circ = \left(\norm{Y}{\gamma}^\circ\right)^n \, \forall Y$. Then
\begin{equation}\begin{aligned}
	r(\phi \toOmu \rho)^{-1} &\geq \frac{ \inf_{\norm{Z}{\gamma}\leq 1} D^{0}_\h (\rho \| Z) }{ \loginf{\phi} }\\
	&= \frac{ \log \sup \lset \< W, \rho \> \bar \norm{W}{\gamma}^\circ \leq 1,\; \norm{W}{\infty} = \< W, \rho \> \rset }{ \loginf{\phi} }.
\end{aligned}\end{equation}
\end{corollary}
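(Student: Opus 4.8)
The plan is to combine the zero-error cost bound of Corollary~\ref{cor:zero_error_bounds} with a single-letterisation made possible by the multiplicativity of the auxiliary dual norm $\norm{\cdot}{\gamma}^\circ$. Concretely, Corollary~\ref{cor:zero_error_bounds} already supplies
\begin{equation*}
	r(\phi \toOmu \rho)^{-1} \geq \frac{\limsup_{n\to\infty} \frac1n \inf_{\norm{Z}{\mu}\leq 1} D^{0}_\h (\rho^{\otimes n} \| Z) }{ \loginf{\phi} },
\end{equation*}
so it suffices to lower bound the regularised numerator by the single-letter quantity $\inf_{\norm{Z}{\gamma}\leq 1} D^{0}_\h (\rho \| Z)$, after which the regularisation will collapse.

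First I would pass from the norm $\norm{\cdot}{\mu}$ to the auxiliary norm $\norm{\cdot}{\gamma}$. Since $\norm{X}{\gamma} \leq \norm{X}{\mu}$ for every $X$, the $\mu$-unit ball is contained in the $\gamma$-unit ball; enlarging the set over which the infimum of $d^0_\h$ is taken can only decrease it, and hence $\inf_{\norm{Z}{\mu}\leq 1} D^{0}_\h (\rho^{\otimes n} \| Z) \geq \inf_{\norm{Z}{\gamma}\leq 1} D^{0}_\h (\rho^{\otimes n} \| Z)$.

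The core step is the superadditivity $\inf_{\norm{Z}{\gamma}\leq 1} D^{0}_\h (\rho^{\otimes n} \| Z) \geq n\, \inf_{\norm{Z}{\gamma}\leq 1} D^{0}_\h (\rho \| Z)$. To establish it I would invoke the dual form of Lemma~\ref{lem:etruscan_dual}, which expresses $\inf_{\norm{Z}{\gamma}\leq 1} d^0_\h(\rho \| Z)$ as the maximisation of $1/\norm{Q}{\gamma}^\circ$ over operators $Q$ satisfying $\norm{Q}{\infty}\leq 1$ and $\<Q,\rho\> = 1$. Taking an optimiser $Q^\star$ for the single-copy problem, I would propose $(Q^\star)^{\otimes n}$ as a feasible point for the $n$-copy problem and verify the three constraints: multiplicativity of the operator norm gives $\norm{(Q^\star)^{\otimes n}}{\infty} = \norm{Q^\star}{\infty}^n \leq 1$; factorisation of the Hilbert--Schmidt inner product gives $\<(Q^\star)^{\otimes n},\rho^{\otimes n}\> = \<Q^\star,\rho\>^n = 1$; and the assumed multiplicativity $\norm{Y^{\otimes n}}{\gamma}^\circ = (\norm{Y}{\gamma}^\circ)^n$ gives $1/\norm{(Q^\star)^{\otimes n}}{\gamma}^\circ = (1/\norm{Q^\star}{\gamma}^\circ)^n$. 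Consequently the $n$-copy optimum is at least the $n$-th power of the single-copy optimum, which is precisely the claimed superadditivity after taking logarithms.

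Finally I would assemble the pieces: chaining the two inequalities gives $\frac1n \inf_{\norm{Z}{\mu}\leq 1} D^{0}_\h (\rho^{\otimes n} \| Z) \geq \inf_{\norm{Z}{\gamma}\leq 1} D^{0}_\h (\rho \| Z)$, a bound independent of $n$, so the $\limsup$ reduces to this single-letter value and division by $\loginf{\phi}$ yields the first line of the claim. The second, explicit expression then follows immediately from the $\ve=0$ dual representation in Lemma~\ref{lem:etruscan_dual}. I expect the only real subtlety to lie in the superadditivity step --- specifically in checking that the tensor-power ansatz $(Q^\star)^{\otimes n}$ remains feasible --- which is exactly where the multiplicativity hypothesis on $\norm{\cdot}{\gamma}^\circ$ is indispensable, since a submultiplicative $\norm{\cdot}{\mu}$ would give a supermultiplicative dual and the ansatz would fail; by contrast, the passage from $\mu$ to $\gamma$ and the final vanishing of the regularisation are routine.
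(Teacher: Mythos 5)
Your proposal is correct and follows essentially the same route as the paper: the paper likewise chains $\inf_{\norm{Z}{\mu}\leq 1} D_\h^0(\rho^{\otimes n}\|Z) \geq \inf_{\norm{Z}{\gamma}\leq 1} D_\h^0(\rho^{\otimes n}\|Z) \geq n \inf_{\norm{Z}{\gamma}\leq 1} D_\h^0(\rho\|Z)$ into Theorem~\ref{thm:cost_lowerbound}, with the superadditivity justified exactly by your tensor-power ansatz in the dual form of Lemma~\ref{lem:etruscan_dual}. Your write-up merely makes explicit the feasibility checks that the paper leaves implicit.
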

\end{boxed}
Note the necessity to consider two different gauges, $\norm{\cdot}{\gamma}$ and $\norm{\cdot}{\mu}$, in this bound.

%%%%%%%%%%%%%%%%%%%%%%%%%%%%%%%%%%%%%%%%%%%%%%%%%%%%%%%%%%%%%%%%%%%%
\subsection{Lower bounds on resource cost of mixtures}\label{subsec:mixtures}
%%%%%%%%%%%%%%%%%%%%%%%%%%%%%%%%%%%%%%%%%%%%%%%%%%%%%%%%%%%%%%%%%%%%

One potential drawback of the single-letter bounds of the previous section is that they rely on the quantity $\inf_{\norm{Z}{\mu}\leq 1} D^0_\h(\rho \| Z)$, which will trivialise when the state $\rho$ is full rank. Below, we present an alternative approach to bounding the rate $r(\phi \to \rho)$ that avoids this problem, and can be applied also to full-rank states in finite dimensions. This provides a more direct way of lower bounding the smoothed and regularised quantity encountered in Theorem~\ref{thm:cost_lowerbound}, without going to the inherently single-shot quantity $D^0_\h$.

\begin{boxed}{white}
\begin{theorem} \label{thm:resource_cost_mixtures}
Let $\phi$ be pure, and let $\rho$ be a state decomposable as $\rho = \sum_x p_x \frac{\Pi_x}{d_x}$, where the range of $x$ is finite, for each $x$ the operator $\Pi_x$ is a projector onto a subspace of dimension $d_x$, and $\Pi_x \Pi_{x'} =0$ for $x\neq x'$. Consider a gauge $\norm{\cdot}{\gamma}$ such that $\norm{X}{\gamma} \leq \norm{X}{\mu} \, \forall X$ and $\norm{Y \otimes Z}{\gamma}^\circ \leq \norm{Y}{\gamma}^\circ \norm{Z}{\gamma}^\circ \, \forall Y,Z$. Then 
\bb
r\left(\phi \toOmu \rho \right)^{-1} \geq \frac{ \sum_x p_x \log \frac{1}{\norm{\Pi_x}{\gamma}^\circ} - H(p) }{ \loginf{\phi} },
\label{resource_cost_mixtures}
\ee
where $H(p)\coloneqq -\sum_x p_x \log p_x$ is the entropy of $p$.
\end{theorem}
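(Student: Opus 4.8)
The plan is to derive the bound directly from Theorem~\ref{thm:cost_lowerbound} by exhibiting an explicit, near-optimal feasible point for the dual program of Lemma~\ref{lem:etruscan_dual}, built out of a typical-subspace projector. Fix any $\delta \in (0,1)$. By Theorem~\ref{thm:cost_lowerbound} and the dual expression of Lemma~\ref{lem:etruscan_dual}, the claim follows once I show
\[
\limsup_{n\to\infty}\frac1n \log \sup\left\{ \frac{1}{\norm{Q}{\mu}^\circ} : \norm{Q}{\infty}\leq 1,\ \<Q,\rho^{\otimes n}\>\geq 1-\delta \right\} \geq \sum_x p_x \log\frac{1}{\norm{\Pi_x}{\gamma}^\circ} - H(p).
\]
Since $\norm{\cdot}{\gamma}\leq\norm{\cdot}{\mu}$ forces $\{Z:\norm{Z}{\mu}\leq 1\}\subseteq\{Z:\norm{Z}{\gamma}\leq 1\}$ and hence $\norm{Q}{\mu}^\circ \leq \norm{Q}{\gamma}^\circ$, it is enough to produce a single feasible $Q$ and bound $\norm{Q}{\gamma}^\circ$ from above; the passage from $\norm{\cdot}{\gamma}^\circ$ back to the larger quantity $1/\norm{Q}{\mu}^\circ$ is then free.

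For the construction, write $\rho^{\otimes n} = \sum_{\vec x} p_{\vec x}\,\Pi_{\vec x}/d_{\vec x}$ with $\vec x = (x_1,\dots,x_n)$, $p_{\vec x}=\prod_i p_{x_i}$, $\Pi_{\vec x}=\bigotimes_i \Pi_{x_i}$, and $d_{\vec x}=\prod_i d_{x_i}$. The hypothesis $\Pi_x\Pi_{x'}=0$ for $x\neq x'$ makes the $\Pi_{\vec x}$ mutually orthogonal projectors. Let $T_\eta^{(n)}$ be the set of sequences whose empirical frequencies $n_x(\vec x)/n$ lie within $\eta$ of $p_x$ for every symbol $x$, and set $Q \coloneqq \sum_{\vec x \in T_\eta^{(n)}} \Pi_{\vec x}$. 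Then $Q$ is itself a projector, so $\norm{Q}{\infty}\leq 1$; and since $\<\Pi_{\vec y}, \Pi_{\vec x}/d_{\vec x}\> = \delta_{\vec x,\vec y}$ by orthogonality, one computes $\<Q,\rho^{\otimes n}\> = \sum_{\vec x\in T_\eta^{(n)}} p_{\vec x}$, which is the probability mass that the product distribution $p^{\otimes n}$ assigns to the typical set and therefore tends to $1$ by the law of large numbers, exceeding $1-\delta$ for all large $n$. Thus $Q$ is feasible.

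The main step is to bound $\norm{Q}{\gamma}^\circ$ by the method of types. Using the triangle inequality for the dual norm and then the submultiplicativity assumption $\norm{Y \otimes Z}{\gamma}^\circ \leq \norm{Y}{\gamma}^\circ \norm{Z}{\gamma}^\circ$, I get
\[
\norm{Q}{\gamma}^\circ \leq \sum_{\vec x\in T_\eta^{(n)}} \norm{\Pi_{\vec x}}{\gamma}^\circ \leq \sum_{\vec x\in T_\eta^{(n)}} \prod_x \left(\norm{\Pi_x}{\gamma}^\circ\right)^{n_x(\vec x)}.
\]
For typical $\vec x$ the bound $|n_x/n - p_x|\leq\eta$ gives $\prod_x(\norm{\Pi_x}{\gamma}^\circ)^{n_x}\leq 2^{n\left(\sum_x p_x \log\norm{\Pi_x}{\gamma}^\circ + \eta C\right)}$ with $C=\sum_x |\log\norm{\Pi_x}{\gamma}^\circ|$ (the absolute value absorbing the unknown sign of each $\log\norm{\Pi_x}{\gamma}^\circ$), while standard type counting gives $|T_\eta^{(n)}|\leq (n+1)^{|\mathcal X|}\,2^{n(H(p)+\eta')}$ with $\eta'\to 0$ as $\eta\to 0$ by uniform continuity of the entropy on the simplex. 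Combining, taking $\tfrac1n\log$, and then $\limsup_{n\to\infty}$ yields $\limsup_n \tfrac1n\log(1/\norm{Q}{\gamma}^\circ) \geq \sum_x p_x\log\tfrac{1}{\norm{\Pi_x}{\gamma}^\circ} - H(p) - O(\eta)$; letting $\eta\to 0$ and dividing by $\loginf{\phi}$ through Theorem~\ref{thm:cost_lowerbound} closes the argument.

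The principal obstacle is precisely this last, type-theoretic accounting: one must verify that the error terms arising from the width $\eta$ of the typical set, from the polynomial number of types, and from the entropy-continuity estimate all vanish in the correct order of limits (first $\limsup_{n\to\infty}$ with $\eta$ fixed, then $\eta\to 0$), and that the sign-indefinite contributions $\log\norm{\Pi_x}{\gamma}^\circ$ are controlled uniformly. Everything else—the feasibility of $Q$, the relaxation from $\norm{\cdot}{\mu}^\circ$ to $\norm{\cdot}{\gamma}^\circ$, and the reduction to Theorem~\ref{thm:cost_lowerbound}—is routine.
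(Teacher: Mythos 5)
Your proposal is correct and follows essentially the same route as the paper's own proof: both invoke Theorem~\ref{thm:cost_lowerbound} via the dual form of Lemma~\ref{lem:etruscan_dual}, take $Q$ to be the sum of $\Pi_{x^n}$ over a typical set (giving feasibility for free since $Q$ is a projector with $\<Q,\rho^{\otimes n}\>=\sum_{x^n\in T_\eta}p_{x^n}\to 1$), and then bound $\norm{Q}{\mu}^\circ\leq\norm{Q}{\gamma}^\circ$ by the triangle inequality, submultiplicativity of $\norm{\cdot}{\gamma}^\circ$, and the exponential bound on the typical-set cardinality, with the $\eta$-dependent error terms (including the absolute values absorbing the unknown signs of $\log\norm{\Pi_x}{\gamma}^\circ$) vanishing as $\eta\to 0^+$ after the $\limsup$ in $n$. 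The only cosmetic difference is your use of a polynomial type-counting prefactor $(n+1)^{|\mathcal{X}|}$ where the paper uses the strongly typical set bound $|T_\eta|\leq 2^{n(H(p)+c\eta)}$ directly; both are asymptotically immaterial.
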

\end{boxed}

The assumption that the range of $x$ is finite is necessary to apply the typicality arguments in what follows.

\begin{proof}
Since we intend to apply Theorem~\ref{thm:cost_lowerbound}, the first step is to lower bound 
\bb
\sup \lset \frac{1}{\norm{Q}{\mu}^\circ} \bar \< Q, \rho^{\otimes n} \> \geq 1-\delta,\; \norm{Q}{\infty} \leq 1 \rset
\ee
for large values of $n$. For some fixed (small) parameter $\eta>0$, define the strongly typical subspace of sequences $x^n = x_1\ldots x_n$ by setting~\cite[Definition~14.7.2]{wilde_2017}
\bb
T_\eta \coloneqq \lset x^n \bar \left|\frac1n N(x|x^n) - p_x \right| \leq \eta\, (1-\delta_{0,p_x})\; \forall x \rset ,
\ee
where $\delta_{0,q}=1$ if $q=0$ and $\delta_{0,q}=0$ otherwise, and $N(x|x^n)$ denotes the number of times the symbol $x$ appears in the sequence $x^n$. It is well known that~\cite[Eq.~(14.72) and~(14.73)]{wilde_2017}
\begin{align}
\lim_{n\to\infty} \sum_{x^n \in T_\eta} p_{x^n} &= 1 , \label{typical_set_probability} \\
|T_\eta| &\leq 2^{n (H(p) + c\eta)} , \label{typical_set_size}
\end{align}
where $p_{x^n}\coloneqq \prod_{i=1}^n p_{x_i}$, and $c\coloneqq - \sum_{x:\, p_x>0} \log p_x > 0$ is a positive constant. Denoting analogously $\Pi_{x^n}\coloneqq \bigotimes_{i=1}^n \Pi_{x_i}$ and $d_{x^n} \coloneqq \prod_{i=1}^n d_{x_i}$, let us now define
\bb
Q_n \coloneqq \sum_{x^n \in T_\eta} \Pi_{x^n} .
\ee
On the one hand, since $Q_n$ is an orthogonal projector we have that $\norm{Q_n}{\infty}=1$; on the other, for every $\delta>0$ and for sufficiently large $n$ it holds that
\bb
\< Q_n, \rho^{\otimes n} \> = \sum_{x^n} p_{x_n} \<Q_n, \frac{\Pi_{x^n}}{d_{x^n}} \> = \sum_{x^n\in T_\eta} p_{x_n} \geq 1-\delta ,
\ee
Now, we estimate
\bb
\norm{Q_n}{\mu}^\circ &\textleq{(i)} \norm{Q_n}{\gamma}^\circ \\
&\leq \sum_{x^n\in T_\eta} \norm{\Pi_{x^n}}{\gamma}^\circ \\
&\textleq{(ii)} \sum_{x^n\in T_\eta} \prod_x \left(\norm{\Pi_{x}}{\gamma}^\circ\right)^{N(x|x^n)} \\
&\textleq{(iii)} \sum_{x^n\in T_\eta} \prod_x \left(\norm{\Pi_{x}}{\gamma}^\circ\right)^{n (p_x + \eta)} \\
&= |T_\eta| \prod_x \left(\norm{\Pi_{x}}{\gamma}^\circ\right)^{n (p_x + \eta)} \\
&\textleq{(iv)} 2^{n\left(H(p) + c\eta + \sum_x \left(p_x \log \norm{\Pi_x}{\gamma}^\circ + \eta\, \left| \log \norm{\Pi_x}{\gamma}^\circ \right| \right) \right)} .
\ee
Here, (i)~follows from the inequality $\norm{\cdot}{\gamma}\leq \norm{\cdot}{\mu}$, which implies by duality that $\norm{\cdot}{\gamma}^\circ \geq \norm{\cdot}{\mu}^\circ$; (ii)~comes from the sub-multiplicativity of $\norm{\cdot}{\gamma}^\circ$; (iii)~descends from the definition of strongly typical set; and (iv)~is just an application of~\eqref{typical_set_size}.

Thanks to Theorem~\ref{thm:cost_lowerbound}, from the above chain of inequalities we deduce that
\bb
r(\phi \toOmu \rho)^{-1} &\geq \frac{1}{\loginf{\phi}} \lim_{\delta \to 0} \limsup_{n\to\infty} \frac{1}{n} \log \sup \lset \frac{1}{\norm{Q}{\mu}^\circ} \bar \< Q, \rho^{\otimes n} \> \geq 1-\delta,\; \norm{Q}{\infty} \leq 1 \rset \\
&\geq \frac{1}{\loginf{\phi}} \limsup_{n\to\infty} \frac1n \log \frac{1}{\norm{Q_n}{\mu}^\circ} \\
&\geq \frac{ - \sum_x p_x \log \norm{\Pi_x}{\gamma}^\circ - H(p) - \eta \left( c + %\sum_x p_x \left| \log \norm{\Pi_x}{\gamma}^\circ \right|
\sum_x \left| \log \norm{\Pi_x}{\gamma}^\circ \right| \right) }{ \loginf{\phi} }.
\ee
Since $\eta>0$ was arbitrary, we can take $\eta\to 0^+$ and complete the proof.
\end{proof}

\begin{remark}
A closer look to the proof of Theorem~\ref{thm:resource_cost_mixtures} shows that we in fact proved the slightly stronger bound
\bb
r\left(\phi \toOmu \rho \right)^{-1} &\geq \frac{1}{\loginf{\phi}} \limsup_{n\to\infty} \left(- \frac1n\, \log \norm{\sumno_{x\in T_\eta} \Pi_{x^n}}{\mu}^\circ \right) ,
\label{resource_cost_mixtures_better}
\ee
from which one deduces~\eqref{resource_cost_mixtures} by using the triangle inequality and replacing $\norm{\cdot}{\mu}^\circ$ with $\norm{\cdot}{\gamma}^\circ$.
\end{remark}

\subsection{On positivity, and comparison with other gauge-based bounds}\label{subsec:positivity}

Here we take a short intermezzo to highlight the importance of making an appropriate choice of the gauge function $\norm{\cdot}{\mu}$ to be used in the bounds of this work. We do this through a comparison with a different approach, which traces its conceptual beginnings to works in entanglement theory by Rains~\cite{rains_2001}, and which found use recently in the establishment of several norm-based bounds in the theory of entanglement~\cite{wang_2016,wang_2017-3,wang_2017-1} and magic~\cite{wang_2020}.

Although previous works did not explicitly formalise it in this way, their approach can be thought of as studying resource transformations under a class of completely positive maps defined as 
\begin{equation}\begin{aligned}
	\O_{\mu+} \coloneqq \lset \Lambda \bar \Lambda \in \mathrm{CP}, \; \norm{\Lambda(X)}{1} \leq \norm{X},\; \norm{\Lambda(X)}{\mu} \leq \norm{X}{\mu} \rset.
\end{aligned}\end{equation}
This means that $\O_{\mu+} \subset \O_{\mu}$, and hence it appears that this choice of operations should never provide better bounds through the approach that we have employed in our work (cf.\ the discussion in Section~\ref{sec:intuition}). In particular, our lower bounds on resource cost (Theorem~\ref{thm:cost_lowerbound} and Corollary~\ref{cor:single_letter_bounds}) provide a strict improvement over lower bounds on entanglement cost which were established in~\cite{wang_2017-3,wang_2017-1}.

It might then come as a surprise that a na\"ive application of the upper bound for distillable resource that we have found (Theorem~\ref{thm:dist_upperbound} and Corollary~\ref{cor:single_letter_bounds}) can actually perform \emph{worse} than corresponding bounds derived in~\cite{wang_2016,wang_2020}. The reason for this is that the bounds in the latter works make an explicit use of the positivity of operations $\O_{\mu+}$, which in this case results in tighter limitations. But the same result can be recovered in our approach, as we now show.

Define the function $p_\mu(X)$ as a `positive' variant of the gauge $\norm{\cdot}{\mu}$:
\begin{equation}\begin{aligned}\label{eq:positive_norm}
	p_\mu(X) \coloneqq& \inf \lset \norm{Z}{\mu} \bar X \leq Z \rset.
\end{aligned}\end{equation}
Note that $p_{\mu}(X)\leq\norm{X}{\mu}$. 
Although useful, this quantity is not a resource gauge per se, because we require our gauges to be symmetric. We can, however, define a bona fide resource gauge by simply symmetrising the above, that is,
\begin{equation}\begin{aligned}
	\norm{X}{\mu+} \coloneqq \max \{ p_\mu(X),\, p_\mu(-X) \}.
\end{aligned}\end{equation}
Since every resource gauge can be written as $\norm{Z}{\mu} = \inf \lset \lambda \bar Z = \lambda U,\; \norm{U}{\mu} \leq 1 \rset$, we can rewrite $\norm{X}{\mu+}$ as
\begin{equation}\begin{aligned}
	\norm{X}{\mu+} &= \inf \lset \max \{ \lambda_+,\, \lambda_- \} \bar -\lambda_- Q \leq X \leq \lambda_+ Z,\; \norm{Q}{\mu}, \norm{Z}{\mu} \leq 1 \rset\\
&= \inf \lset \lambda \bar -\lambda Q \leq X \leq \lambda Z,\; \norm{Q}{\mu}, \norm{Z}{\mu} \leq 1 \rset.
\end{aligned}\end{equation}
%This is in fact a \br{gauge function} with the unit ball $(B_\mu - \T_+(\H)) \cap (-B_\mu + \T_+(\H))$, where $B_\mu \coloneqq \lset Z \bar \norm{Z}{\mu} \leq 1 \rset$. 
It follows by construction that $\norm{\cdot}{\mu+}$ is in fact a monotone under $\O_{\mu+}$, allowing us to immediately apply our results.

Importantly, as long as $X \geq 0$, it holds that $\norm{X}{\mu+} = p_\mu(X)$. Applying the bounds of Theorem~\ref{thm:dist_upperbound} and Corollary~\ref{cor:single_letter_bounds} and making use of the fact that $\O_{\mu+}$ are positive maps --- meaning that, for any $X \geq 0$, $\Lambda(X)$ will always be a positive semidefinite operator --- we obtain the following result.

\begin{boxed}{white}
\begin{corollary}\label{cor:upper_bound_dist_positive}
For any state $\rho$ and any pure state $\phi$, the rate of transformation from $\rho$ to $\phi$ under completely positive operations in $\O_\mu$ satisfies
\begin{equation}\begin{aligned}
	r(\rho \toOx{\mu+} \phi) &\leq \frac{\lim_{\ve \to 0} \liminf_{n\to\infty} \frac1n \inf_{X \in B_{\ve}(\rho^{\otimes n})\cap\T_+(\H)} \log p_\mu(X) }{ -\loginfd{\phi} }\\
	&\leq \frac{\liminf_{n \to \infty} \frac1n \log p_\mu\left(\rho^{\otimes n}\right)}{-\loginfd{\phi}}.
\end{aligned}\end{equation}
As long as the gauge $\norm{\cdot}{\mu}$ is sub-multiplicative, i.e.\ $\norm{X \otimes Y}{\mu} \leq \norm{X}{\mu} \norm{Y}{\mu}$, this gives
\begin{equation}\begin{aligned}
	r(\rho \toOx{\mu+} \phi) \leq \frac{\log p_\mu(\rho)}{-\loginfd{\phi}} = \frac{\log \norm{\rho}{\mu+}}{-\loginfd{\phi}}.
\end{aligned}\end{equation}
\end{corollary}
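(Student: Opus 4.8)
The plan is to apply Theorem~\ref{thm:dist_upperbound} (in the $\ve\to 0$ form of the remark following it) not with the original norm $\norm{\cdot}{\mu}$, but with the symmetrised positive norm $\norm{\cdot}{\mu+}$, and then to exploit the positivity of the maps in $\O_{\mu+}$ together with the positivity of the target $\phi$ in order to rewrite the resulting expressions through $p_\mu$ and through the \emph{original} dual norm $\norm{\cdot}{\mu}^\circ$. The whole point of passing to $\norm{\cdot}{\mu+}$ is that on positive operators it coincides with $p_\mu$, and $p_\mu(X)\leq\norm{X}{\mu}$ is exactly the improvement that recovers the positivity-based bounds of~\cite{wang_2016,wang_2020}.

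First I would check applicability, i.e.\ that every $\Lambda\in\O_{\mu+}$ lies in the operation class for which Theorem~\ref{thm:dist_upperbound} holds with $\norm{\cdot}{\mu}$ replaced by $\norm{\cdot}{\mu+}$. Using the representation $\norm{X}{\mu+}=\inf\lset\lambda\bar -\lambda Q\leq X\leq\lambda Z,\ \norm{Q}{\mu},\norm{Z}{\mu}\leq1\rset$ established above, any admissible triple $(\lambda,Q,Z)$ for $X$ is sent by the \emph{positive} map $\Lambda$ to an admissible triple $(\lambda,\Lambda(Q),\Lambda(Z))$ for $\Lambda(X)$, since $\Lambda$ preserves operator inequalities and contracts $\norm{\cdot}{\mu}$ (so $\norm{\Lambda(Q)}{\mu},\norm{\Lambda(Z)}{\mu}\leq1$). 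Hence $\norm{\Lambda(X)}{\mu+}\leq\norm{X}{\mu+}$, and $\O_{\mu+}$ is contained in the class of maps contracting both $\norm{\cdot}{1}$ and $\norm{\cdot}{\mu+}$. Applying the theorem therefore gives
\[
r(\rho\toOx{\mu+}\phi)\leq\frac{\lim_{\ve\to0}\liminf_{n\to\infty}\frac1n\inf_{X\in B_\ve(\rho^{\otimes n})}\log\norm{X}{\mu+}}{-\loginfd[\mu+]{\phi}}.
\]

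Next I would simplify both pieces using positivity. For the numerator, restricting the infimum to the subset $B_\ve(\rho^{\otimes n})\cap\T_+(\H)$ can only increase it (a valid, if a priori weaker, upper bound), and on positive operators $\norm{\cdot}{\mu+}=p_\mu$, producing the claimed quantity. For the denominator, each $\phi^{\otimes n}$ is positive, and for any $W\geq0$ one has $\norm{W}{\mu+}^\circ=\norm{W}{\mu}^\circ$: the bound $\geq$ is immediate from $\norm{\cdot}{\mu+}\leq\norm{\cdot}{\mu}$, while $\leq$ follows since every $X$ in the unit ball of $\norm{\cdot}{\mu+}$ decomposes as $X=B-P$ with $\norm{B}{\mu}\leq1$ and $P\geq0$, so $\<W,X\>=\<W,B\>-\<W,P\>\leq\<W,B\>\leq\norm{W}{\mu}^\circ$. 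This identity holds termwise in $n$, whence $\loginfd[\mu+]{\phi}=\loginfd{\phi}$, and the first displayed inequality of the Corollary follows. The second inequality is obtained by inserting the feasible point $X=\rho^{\otimes n}$ into the infimum. For the final sub-multiplicative statement I would establish $p_\mu(\rho^{\otimes n})\leq p_\mu(\rho)^n$: any $Z\geq\rho$ automatically satisfies $Z\geq0$ (because $\rho\geq0$), so $Z^{\otimes n}\geq\rho^{\otimes n}$ and $\norm{Z^{\otimes n}}{\mu}\leq\norm{Z}{\mu}^n$ by sub-multiplicativity; optimising over $Z$ yields the estimate, and combining it with $p_\mu(\rho)=\norm{\rho}{\mu+}$ (valid as $\rho\geq0$) completes the proof.

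The main obstacle is the bookkeeping of which norm appears where, and in particular the dual-norm identity $\norm{W}{\mu+}^\circ=\norm{W}{\mu}^\circ$ on the positive cone: this is exactly what lets the denominator be written through the original $\norm{\cdot}{\mu}^\circ$ rather than through $\norm{\cdot}{\mu+}^\circ$. One must also take care that the passage to positive $X$ produces a genuine upper bound on the rate (not an underestimate) while at the same time yielding the tractable quantity $p_\mu$, and that the monotonicity of $\norm{\cdot}{\mu+}$ under $\O_{\mu+}$ indeed rests on the positivity of those maps.
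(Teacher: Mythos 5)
Your proposal is correct and follows essentially the same route as the paper, which derives the corollary by applying Theorem~\ref{thm:dist_upperbound} to the symmetrised norm $\norm{\cdot}{\mu+}$ (a monotone under $\O_{\mu+}$ by construction) and then exploiting positivity to replace $\norm{\cdot}{\mu+}$ by $p_\mu$ on the positive cone. You in fact spell out two details the paper leaves implicit --- the identity $\norm{W}{\mu+}^\circ=\norm{W}{\mu}^\circ$ for $W\geq 0$, which justifies keeping $\loginfd{\phi}$ in the denominator, and the estimate $p_\mu(\rho^{\otimes n})\leq p_\mu(\rho)^n$ under sub-multiplicativity --- and both arguments are sound.
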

\end{boxed}
This in fact recovers the upper bound on distillable entanglement $E_W$ found in~\cite{wang_2016} and an upper bound on magic state distillation $\theta_{\max}$ from~\cite{wang_2020} as special cases. Since these are some of the best known computable upper bounds in their corresponding resource theories, we thus see that our results can recover the leading constraints on resource distillation, as long as care is taken to choose a suitable gauge function in the application of our bounds.

It is rather interesting to note that a similar `positivity' assumption does not appear to be helpful in improving our bounds on the reverse transformation $\phi \to \rho$. On the one hand, replacing a gauge $\norm{\cdot}{\mu}$ with $\norm{\cdot}{\mu+}$ will, in general, sacrifice its multiplicativity, which we need to apply the single-letter bounds in Corollary~\ref{cor:single_letter_bounds} or~\ref{cor:single_letter_gamma_bound}; on the other hand, since the quantity $D^\ve_\h$ that we employ is a maximisation, explicitly adding a positivity assumption there would only lead to a worse bound (and would, in fact, reduce to the previous bounds of~\cite{wang_2017-3,wang_2017-1}). This can be considered as the reason why our approach allowed us to improve on previous bounds: they all took positivity for granted, while it appears that the characterisation of resource dilution benefits from going beyond positive operators.

\section{Applying the results to specific resource theories}\label{sec:apps}

The basic ingredient of any resource theory of quantum states is the set $\FF$ of states which are designated as free. This set immediately gives rise to several different gauges that can be employed in the framework of this work.

\begin{table*}
\setlength{\extrarowheight}{5pt}
\makebox[\textwidth]{%
  \begin{tabular}{@{}>{\raggedright}p{2.7cm} >{\raggedright}p{3.2cm} >{\raggedright}p{2cm} %p{2.45cm}
  p{1.35cm} p{1.5cm} p{1.6cm} p{1.55cm} p{1.7cm}}
    \toprule
        \bfseries Resource theory & \bfseries Gauge $\norm{X}{\mu}$ & $\norm{X\otimes Y}{\mu}^\circ \!\stackrel{?}{=}\! \norm{X}{\mu}^\circ \norm{Y}{\mu}^\circ$ %&  \bfseries Computability
        & \bfseries Target state $\ket\phi$ & $\norm{\phi^{\otimes n}}{\mu}$ & $1/\norm{\phi^{\otimes n}}{\mu}^\circ$ & $\loginf{\phi}$ &  $-\loginfd{\phi}$ \\
            %\midrule
    Entanglement & base norm $\norm{X}{\SEP}$ & No &  $\ket{\Phi_2}$ & $2^{n+1}-1$ & $2^n$ & 1 & 1 \\
   ~\cite{horodecki_2009} & Negativity $\norm{X^\Gamma}{1}$ & Yes &  $\ket{\Phi_2}$ & $2^n$ & $2^n$ & 1 & 1 \\
    & Reshuffled negativity $\norm{X^\R}{1}$ & Yes &  $\ket{\Phi_2}$ & $2^n$ & $2^n$ & 1 & 1 \\[1.2ex]
    Qudit magic & base norm $\norm{X}{\FF_W}$ & No &  $\ket{S}$ & $\leq 3^n$ & $2^n$ & $\leq \log 3$ & 1\\
   ~\cite{veitch_2014} &&& $\ket{N}$ & $\leq 2^n$ & $\left(\frac32\right)^{n}$ & $\leq \frac12 \log\frac{11}{3}$ & $\log \frac32$\\[-10pt]
    &&& $\ket{H_+}$ & $\leq\! \left(\frac{3+\sqrt{3}}{3}\right)^{\!n}$ & $(3-\sqrt{3})^n$ & $\leq\!\log \frac{3+\sqrt{3}}{3}$ & $\log (3\!-\!\sqrt{3})$ \\[-10pt]
    & Wigner negativity $\norm{X}{W}$ & Yes & $\ket{S}$ & $\left(\frac{5}{3}\right)^n$ & $1$ & $\log \frac53$ & $0$\\[-10pt]
    &&& $\ket{N}$ & $\left(\frac{5}{3}\right)^n$ & $1$ & $\log \frac53$ & $0$ \\[-10pt]
    &&& $\ket{H_+}$ & $\left(\frac{1+2\sqrt{3}}{3}\right)^n$ & $1$ & $\log \frac{1+2\sqrt{3}}{3}$ & $0$\\[-10pt]
    Qubit magic & base norm $\norm{X}{\STAB}$ & No & $\ket{T}$ & $\leq \sqrt{2}^n$ & $4 - 2\sqrt{2}$ & $\leq\! \log 1.29$ & $\log (4 -\! 2\sqrt{2})$ \\
   ~\cite{veitch_2014,howard_2017} &&& $\ket{\mathrm{Hog}}$ & $\leq \left(\frac{19}{5}\right)^n$ & $\left(\frac{12}{5}\right)^n$ & $\leq \log \frac{19}{5}$ & $\log \frac{12}{5}$ \\[-10pt]
    & Stabiliser norm $\norm{\cdot}{\mathcal{P}}$ & Yes & $\ket{T}$ & $\left(\frac{1+\sqrt{2}}{2}\right)^n$ & $1$ & $\log \frac{1+\sqrt{2}}{2}$ & $0$ \\[-10pt]
    &&& $\ket{\mathrm{Hog}}$ & $\left(\frac{11}{4}\right)^n$ & $1$ & $\log \frac{11}{4}$ & $0$ \\[-10pt]
    \bottomrule
    \end{tabular}
 }%
 \caption{
 \textbf{Survey of resources, gauge-based monotones, and choices of reference states $\boldsymbol{\ket{\phi}}$.}
 We overview the formalism of this work applied to the two considered resource theories, namely, quantum entanglement and magic-state quantum computation. For each resource, we give a selection of norm-based measures that can be used in our framework. We specify whether a given norm has a multiplicative dual (so that Corollary~\ref{cor:single_letter_gamma_bound} can be applied). We then give estimates or, where known, exact values of the parameters $\loginf{\phi}$ and $\loginfd{\phi}$ that are required for the application of Theorems~\ref{thm:cost_lowerbound} and~\ref{thm:dist_upperbound}. For the theory of magic, the upper bounds on $\norm{\phi^{\otimes n}}{\mu}$ are based on the evaluation of the single-letter quantity $\norm{\phi}{\mu}$ (see~\cite{veitch_2014,howard_2017,bravyi_2019,takagi_2021}) and using the sub-multiplicativity of the norm; for $\loginf[\FF_W]{\proj{N}}$, we also give an improved estimate computed in Section~\ref{sec:app_magic}, while for, $\loginf[\STAB]{\proj{T}}$, we use a many-copy estimate that was obtained in~\cite{heinrich_2019}. The evaluation of $\norm{\phi^{\otimes n}}{\mu}^\circ$ proceeds by first computing $\norm{\phi}{\mu}^\circ$ (see~\cite{bravyi_2019,wang_2020}), and then using the fact that, for the considered cases, the dual norms are multiplicative~\cite{bravyi_2019,wang_2020}.
 See Sections~\ref{sec:app_ent} and~\ref{sec:app_magic} for more details on the choices of target states and on the computation of the quantities.
 }
\label{tab:resources}
\end{table*}

A basic assumption we will make is that $\FF$ is a closed and convex set. In finite-dimensional theories, as long as $\mathrm{span}(\FF) = \T_\sa(\H)$, i.e.\ the free states are a full-measure subset of all quantum states, then a natural norm that can be defined is the \deff{base norm}~\cite{jameson_1970, lami_2018-1}:
\begin{equation}\begin{aligned}\label{eq:def_base_norm}
	\norm{X}{\FF} \coloneqq \inf \lset \lambda_+ + \lambda_- \bar X = \lambda_+ \sigma_+ - \lambda_- \sigma_-,\; \sigma_{\pm} \in \FF,\; \lambda_\pm \in \RR_+ \rset.
\end{aligned}\end{equation}
The condition on the span of $\FF$ is satisfied in resource theories such as quantum entanglement or magic. Even when $\mathrm{span}(\FF) \neq \T_\sa(\H)$, which happens in particular for several infinite-dimensional resource theories of operational interest, the quantity in~\eqref{eq:def_base_norm} can still be defined, although care needs to be taken as it will be infinite for some states~\cite{lami_2021, regula_2021}. %We use the term `base \br{semi}norm' regardless of whether $\norm{X}{\FF}$ defines a proper norm.
This quantity is directly related to a commonly used resource monotone called the \deff{standard robustness}, defined as $R_\FF^s(\rho) \coloneqq \inf \lset \lambda \in \RR_+ \bar \rho + \lambda \sigma_- = (1+\lambda) \sigma_+,\; \sigma_\pm \in \FF \rset$. It is not difficult to notice that, for a quantum state, we have $\norm{\rho}{\FF} = 1 + 2 R^s_\FF(\rho)$.

Another type of a norm can be defined if the free states satisfy that $\FF = \mathrm{conv} \lset \proj{\psi} \bar \ket{\psi} \in \V \rset$ for some set $\V$ defined in the underlying Hilbert space $\mathcal{H}$, such that $\mathrm{span}(\V) = \mathcal{H}$. This is the case for resource theories such as entanglement, magic, or quantum coherence. We can then define
\begin{equation}\begin{aligned}
	\norm{X}{\V} \coloneqq \inf \lset \sum_i |\lambda_i| \bar X = \sum_i \lambda_i \ket{v_i}\!\bra{w_i},\; \lambda_i \in \CC,\; \ket{v_i}, \ket{w_i} \in \V \rset.
\end{aligned}\end{equation}
This gives rise to the projective tensor norm in entanglement theory~\cite{rudolph_2001,rudolph_2005}, the $\ell_1$ norm in the resource theory of coherence~\cite{baumgratz_2014}, and the so-called dyadic negativity in magic theory~\cite{seddon_2021}. Note that this norm can be defined even for linear operators which are not self-adjoint.

Yet another form of a norm can be defined based on another resource measure, the \deff{generalised robustness} $R^g_\FF(\rho) \coloneqq \inf \lset \lambda \bar \rho + \lambda \omega = (1+\lambda) \sigma,\; \sigma \in \FF,\; \omega \in \D(\H) \rset$, as follows. The function
\begin{equation}\begin{aligned}
	\norm{X}{g,\FF} \coloneqq \inf \lset \lambda \bar -\lambda \sigma_- \leq X \leq \lambda\sigma_+,\; \sigma_\pm \in \FF \rset
\label{eq:norm_g_FF}
\end{aligned}\end{equation}
can be seen to be a norm in the case of finite-dimensional theories as soon as $\FF$ contains at least one state of full rank, which is satisfied in virtually all cases of interest. For any state, it holds that $\norm{\rho}{g,\FF} = 1 + R^g_\FF(\rho)$.

A useful property of the above norms is that, for a pure state $\phi$, their dual norms all coincide~\cite{regula_2018}:
\begin{equation}\begin{aligned}
	\norm{\phi}{\FF}^\circ = \norm{\phi}{\V}^\circ = \norm{\phi}{g,\FF}^\circ = \sup \lset \< \phi, \sigma \> \bar \sigma \in \FF \rset,
\end{aligned}\end{equation}
where we assumed that each of the norms is well defined. Since we have already seen that evaluating the dual norm on the target state $\phi$ in distillation and dilution protocols is a crucial ingredient of many of our bounds, this insight can help in the applications of the different norms.

%Any of the above norms can be directly used in all of our results. 
The above are just examples of norms (or gauges) that can be used in our approach. The crucial point to consider in any resource theory is how to obtain single-letter bounds on asymptotic transformations rates, leading to efficiently computable bounds. To clarify the applicability of our single-letter results (Corollaries~\ref{cor:single_letter_bounds},~\ref{cor:single_letter_gamma_bound}, and~\ref{cor:upper_bound_dist_positive}), we now summarise how they may be applied in a given resource theory.

\begin{boxed}{white}
Consider a general resource theory whose sets of free states $\FF$ are convex and closed. Assume that $\sigma \in \FF \Rightarrow \sigma^{\otimes n} \in \FF \; \forall n \in \mathbb{N}$, 
which is a weak assumption satisfied in virtually all theories of interest. Consider then any class of quantum channels $\OO$ under which the base gauge  %$\FF$
$\norm{\cdot}{\FF}$ is contractive --- this includes, in particular, the set $\Omax$ of resource non-generating quantum channels, and all subsets thereof. Then:
\begin{enumerate}[(i)]
\item In order to upper bound the distillation rate $r(\rho \toO \phi)$, one needs to compute or upper bound the regularised quantity $\loginfd[\FF]{\phi}$. Then, Corollary~\ref{cor:single_letter_bounds} or~\ref{cor:upper_bound_dist_positive} can be used to establish that
\begin{equation}\begin{aligned}
r(\rho \toO \phi) \leq \frac{\log \norm{\rho}{g,\FF}}{- \loginfd[\FF]{\phi} } \leq \frac{\log \norm{\rho}{\FF}}{- \loginfd[\FF]{\phi} }.
\label{eq:handy_upper_distillable}
\end{aligned}\end{equation}
\item In order to upper bound the dilution rate $r(\phi \toO \rho)$, one needs to compute or upper bound the regularised quantity $\loginf[\FF]{\phi} $, as well as find a \emph{multiplicative} gauge $\norm{\cdot}{\gamma}$ such that $\norm{\cdot}{\gamma} \leq \norm{\cdot}{\FF}$.  Corollary~\ref{cor:single_letter_gamma_bound} then gives
\begin{equation}\begin{aligned}
r(\phi \toO \rho)^{-1} &\geq \frac{ \inf_{\norm{Z}{\gamma}\leq 1} D^{0}_\h (\rho \| Z) }{ \loginf[\FF]{\phi} }.
\label{eq:hand_lower_cost}
\end{aligned}\end{equation}
An alternative, but more tailored approach that might only be applicable to specific quantum states is provided in Theorem~\ref{thm:resource_cost_mixtures}.
\end{enumerate}
\end{boxed}

We discuss these problems in specific resource theories. We also present an overview of different gauge choices for representative resource theories in Table~\ref{tab:resources}.

\section{Application: quantum entanglement}\label{sec:app_ent}

In the manipulation of quantum entanglement, the relevant set of free states is separable states, which can be defined on a bipartite quantum system $AB$ with Hilbert space $\H_{AB} = \H_A\otimes \H_B$ by
\bb
\F = \SEP \coloneqq \operatorname{cl} \operatorname{conv} \left\{ \ketbra{\psi}_A \otimes \ketbra{\phi}_B \right\} ,
\ee
where the closure is with respect to the trace norm, and the convex hull ranges over all normalised pure states $\ket{\psi}_A\in \H_A$ and $\ket{\phi}_B\in \H_B$. Every state $\sigma_{AB} \in \SEP$ can thus be written as~\cite{holevo_2005-1}
\begin{equation}
    \sigma_{AB} = \int \ketbra{\psi}_A \otimes \ketbra{\phi}_B\, \mathrm{d}\mu(\psi,\phi)\, ,
    \label{separable}
\end{equation}
where $\mu$ is a Borel probability measure on the product of the sets of local (normalised) pure states. The above integral is to be interpreted as a Bochner integral in either the space of trace class operators or that of Hilbert--Schmidt operators --- both are well defined as
\begin{equation}
\int \left\|\ketbra{\psi}_A \otimes \ketbra{\phi}_B\right\|_1\, \mathrm{d}\mu(\psi,\phi) = \int \left\|\ketbra{\psi}_A \otimes \ketbra{\phi}_B\right\|_2\, \mathrm{d}\mu(\psi,\phi) = 1
\end{equation}
is finite.\footnote{Observe also that the integrand in~\eqref{separable} is clearly a strongly measurable function, as it is both weakly measurable, since it is continuous on the space of product states equipped with the standard trace norm topology, and almost surely separably valued, since it takes on values in the separable space of trace class operators on the separable Hilbert space $\H$.} 
Conversely, every state of the form~\eqref{separable} is clearly separable.

The relevant target state $\phi$ here is the two-qubit maximally entangled state $\phi = \Phi_2$, where $\ket{\Phi_2} = \frac{1}{\sqrt{2}} (\ket{00}+\ket{11})$; we will refer to the corresponding rates of transformations as distillable entanglement $E_{d,\O} (\rho) \coloneqq r(\rho \toO \Phi_2)$ and entanglement cost $E_{c,\O} (\rho) \coloneqq r(\Phi_2 \toO \rho)^{-1}$.

As for the class $\O$, perhaps the best motivated from an operational standpoint is that of local operations and classical communication, denoted by $\LOCC$~\cite{chitambar_2014}. Other useful classes are $\Omax \coloneqq \NE$ (non-entangling, a.k.a.\ separability-preserving operations~\cite{brandao_2008-1, lami_2021-1}), $\SEPO$ (separable channels)~\cite{rains_1997}, and $\PPTO$ (channels $\Lambda:AB\to A'B'$ such that $\Gamma_{A'}\circ \Lambda \circ \Gamma_A$ is CPTP, where $\Gamma$ denotes the partial transpose)~\cite{rains_2001}.

\subsection{Basic bounds on entanglement cost and distillable entanglement}

The set $\SEP$ immediately gives rise to the base gauge $\norm{X}{\SEP}$, which we recall as
\begin{equation}\begin{aligned}
	\norm{X}{\SEP} \coloneqq \inf \lset \lambda_+ + \lambda_- \bar X = \lambda_+ \sigma_+ - \lambda_- \sigma_-,\; \sigma_{\pm} \in \SEP \rset.
\end{aligned}\end{equation}
As we mentioned before, for normalised quantum states this quantity is equivalent to the measure known as the (standard) robustness of entanglement~\cite{vidal_1999}, in the sense that $\norm{\rho}{\SEP} = 1+2R^s_\SEP(\rho)$.

Let us now discuss the asymptotic properties of this gauge. Using the fact that $\sigma, \tau \in \SEP \Rightarrow \sigma \otimes \tau \in \SEP$, it can be shown that $\norm{X^{\otimes n}}{\SEP} \leq \norm{X}{\SEP}^n$, which immediately gives the bound
\begin{equation}\begin{aligned}
	\loginf[\SEP]{\phi} = \lim_{n\to\infty} \frac1n \log \norm{\phi^{\otimes n}}{\SEP} \leq \log \norm{\phi}{\SEP}
\end{aligned}\end{equation}
for any pure state $\phi$. However, a much tighter bound can be obtained for pure states, where it was shown~\cite{vidal_1999,lami_2021} that $1 + R^s_\SEP(\phi) = \sum_{i=1}^{\infty} \alpha_i$, where $\{\alpha_i\} \in \RR_+$ are the Schmidt coefficients of a given state, i.e.\ $\ket{\phi} = \sum_{i=1}^{\infty} \alpha_i \ket{e_i f_i}$ for some orthonormal bases $\{\ket{e_i}\}_i, \{\ket{f_j}\}_j$. This means that $1 + R^s_\SEP(\phi^{\otimes n}) = \left(1+R^s_\SEP(\phi)\right)^n$ for any pure $\phi$, and hence
\begin{equation}\begin{aligned}\label{eq:asymp_std_rob_for_pure}
  \loginf[\SEP]{\phi} &= \lim_{n\to\infty} \frac1n \log \left(1+2 R^s_\SEP(\phi^{\otimes n})\right)\\  &= \lim_{n\to\infty} \frac1n \log \left(1+R^s_\SEP(\phi^{\otimes n})\right)\\
 	&= \log \left(1 + R^s_\SEP(\phi)\right).
 \end{aligned}\end{equation}
For the relevant choice of the two-qubit maximally entangled state as the target, this gives $\loginf[\SEP]{\Phi_2} = 1$.

The dual of the gauge here is the overlap $\norm{X}{\SEP}^\circ = \sup_{\sigma \in \SEP} \left|\<X,\sigma\>\right|$. It is known~\cite{shimony_1995} that $\norm{\phi}{\SEP}^\circ = \left(\sup_i \alpha_i\right)^{-1}$ for any pure state $\phi$, which in particular means that $\norm{\phi^{\otimes n}}{\SEP}^\circ = \left(\norm{\phi}{\SEP}^\circ \right)^n$, yielding
\begin{equation}\begin{aligned}
	\loginfd[\SEP]{\phi} = \lim_{n \to \infty} \frac1n \log \norm{\phi^{\otimes n}}{\SEP}^\circ = \log \norm{\phi}{\SEP}^\circ.
\end{aligned}\end{equation}
For the two-qubit maximally entangled state, we once again have that $-\loginfd[\SEP]{\Phi_2} = 1$.

We now focus on evaluating the asymptotic bounds established in our work. The physically relevant sets of free operations in this theory all form subsets of the maximal set of free channels, $\Omax$, which %constitutes
comprises all non-entangling maps: $\Omax = \big\{ \Lambda \in \CPTP \;|\; \Lambda(\sigma) \in \SEP \; \forall \sigma \in \SEP \big\}$. By choosing any gauge which %constracts
contracts under %such 
those maps --- such as the base gauge $\norm{\cdot}{\SEP}$ or the  gauge $\norm{\cdot}{g,\SEP}$ based on the generalised robustness 
--- we can use the properties of the resource gauges to constrain the operational properties of entanglement.

Although it is not known what the regularised bound of Theorem~\ref{thm:dist_upperbound} evaluates to, the single-letter bound in Corollary~\ref{cor:single_letter_bounds},
\begin{equation}\begin{aligned}
	E_{d,\O_\SEP} (\rho) \leq \log \left(1 + 2 R^s_\SEP(\rho)\right),
\end{aligned}\end{equation}
is known~\cite{vidal_2002}. As for the lower bound on entanglement cost, the generalised hypothesis testing relative entropy corresponds to a quantity introduced in~\cite{lami_2021-1} as the `tempered robustness' $R_\SEP^\tau$:
\begin{equation}\begin{aligned}
	\inf_{\norm{Z}{\SEP}\leq 1} D^{0}_\h (\rho \| Z)  &= \log \sup \lset \< W, \rho \> \bar \norm{W}{\SEP}^\circ \leq 1,\; \norm{W}{\infty} = \< W, \rho \> \rset
	\\&= 1 + 2 R^\tau_\SEP(\rho).
\end{aligned}\end{equation}
Our zero-error bound in Corollary~\ref{cor:zero_error_bounds} then recovers a lower bound on entanglement cost found in~\cite[Theorem~S7]{lami_2021-1}, although the regularised bound in Theorem~\ref{thm:cost_lowerbound} has a strong potential to improve on that result, provided that it can be computed exactly.

%%%%%%%%%%%%%%%%%%%%%%%%%%%%%%%%%%%%%%%%%%%%%%%%%%%%%%%%%%%%%%%%%%%%
\subsection{Bounds from partial transposition} \label{subsec:partial_transposition}
%%%%%%%%%%%%%%%%%%%%%%%%%%%%%%%%%%%%%%%%%%%%%%%%%%%%%%%%%%%%%%%%%%%%

The issue we encounter is that the base gauge $\norm{\cdot}{\SEP}$ is not multiplicative,\footnote{Pure states already constitute a counterexample to multiplicativity: since $\|\phi\|_{\SEP} = 2\sum_i \alpha_i - 1$ for all pure states $\ket{\phi}$ with Schmidt coefficients $\{\alpha_i\}_i$~\cite{vidal_1999}, it is $\frac12 \big(1+\|\phi\|_{\SEP}\big)$ that is multiplicative, not $\|\phi\|_{\SEP}$. In this simple case, however, we can efficiently understand the asymptotics of $\norm{\phi^{\otimes n}}{\SEP}$. A stronger counterexample is presented in~\cite[Sec.~V.B]{vollbrecht_2001}. With the notation in that paper, and with $\rho_-$ representing the `antisymmetric state' in local dimension $d$, it can be verified that $\|\rho_-\|_{\SEP} = 3$ but $\big\|\rho_-^{\otimes 2}\big\|_{\SEP} \leq (3d+1)/(d-1)$ (in fact, with equality) because $\rho_-^{\otimes 2} = \big(2d\rho_\# - (d+1)\rho_+^{\otimes 2}\big)/(d-1)$, with both $\rho_\#$ and $\rho_+^{\otimes 2}$ being separable. Here the value of $\norm{\rho_-^{\otimes n}}{\SEP}$ for large $n$ is still an open problem.} preventing an efficient evaluation of the lower bounds on $E_{c,\O_\SEP}$, and in particular the computation of the single-letter results. We therefore consider a relaxation: instead of the set $\FF = \SEP$, we consider as our set of interest the set of all states with a positive partial transpose, or in fact an even larger set: the set $\wt\PPT = \lset X \bar X^\Gamma \geq 0,\; \Tr X =1 \rset$. The base gauge $\norm{\cdot}{\wt\PPT}$ is then simply the entanglement negativity, $\norm{X}{\wt\PPT} = \norm{X^\Gamma}{1}$, which is straightforwardly verified to be multiplicative. By the inclusion $\SEP \subseteq \wt\PPT$ it holds that $\norm{X^\Gamma}{1} \leq \norm{X}{\SEP}$, which allows us to apply Corollary~\ref{cor:single_letter_gamma_bound} to give
\begin{equation}\begin{aligned}\label{eq:temp_neg_lower_bound}
	E_{c,\O_{\SEP}} (\rho) &\geq \inf_{\norm{Z^\Gamma}{1}\leq 1} D^{0}_\h (\rho \| Z)\\
	&=\log \max \lset \< W, \rho \> \bar \norm{W^\Gamma}{\infty} \leq 1,\; \norm{W}{\infty} = \< W, \rho \> \rset.
\end{aligned}\end{equation}
This quantity exactly equals the \deff{tempered logarithmic negativity} $E_\tau$ of~\cite{lami_2021-1}. We have thus not only recovered this single-letter bound on entanglement cost, but in fact endowed it with an operational meaning: using Theorem~\ref{thm:oneshot_distillation}, we have that
\begin{equation}\begin{aligned}
	E_{d,\O_{\wt\PPT}}^{(1),\textrm{exact}} (\rho) = \floor{E_\tau(\rho)};
\end{aligned}\end{equation}
in other words, $E_\tau(\rho)$ quantifies exactly the distillable entanglement under linear maps which contract the trace norm and the negativity.

\subsection{Bounds from reshuffling criterion}

We will now employ Corollary~\ref{cor:single_letter_gamma_bound} to deduce a novel lower bound on the entanglement cost. To construct it, let us consider the \deff{reshuffling} operation on $\T(\H_{AB})$, the Banach space of trace class operators acting on the bipartite Hilbert space $\H_{AB}$~\cite{chen_2003, rudolph_2003-1, rudolph_2003-2, rudolph_2004, rudolph_2005, horodecki_2006}. Here, we assume that $\H_A$ and $\H_B$ are isomorphic. A quick and painless way to define %it 
reshuffling in a rigorous manner while encompassing the case of infinite-dimensional Hilbert spaces is to think of it as an isometry on the Hilbert--Schmidt class $\HS(\H_{AB}) \supseteq \T(\H_{AB})$, defined on an arbitrary $X = \sum_{i,j,k,l} X_{ij,kl} \ketbraa{i}{k}_A\otimes \ketbraa{j}{l}_B$ (here $\{\ket{i}_A\}_{i\in \NN}$ and $\{\ket{j}_B\}_{j\in \NN}$ are two fixed local orthonormal bases, of the same cardinality because $\H_A\simeq \H_B$) by
\bb\label{eq:reshuffled}
X^\R \coloneqq \sum_{i,j,k,l} X_{ij,kl} \ketbraa{i}{j}_A\otimes \ketbraa{k}{l}_B ,
\ee
where the sum on the right-hand side converges in Hilbert--Schmidt norm because $\sum_{i,j,k,l} | X_{ij,kl}|^2 = \|X\|_{2}^2 <\infty$. A little thought reveals that $X\mapsto X^\R$ is in fact an isometry, so that
\bb
\Tr \left[ (X^\R)^\dag Y^\R \right] = \< X^\R, Y^\R \> = \<X,Y\> = \Tr \left[X^\dag Y\right]\qquad \forall X,Y\in \HS(\H_{AB}) .
\label{reshuffling_isometry}
\ee
%In what follows, we will most often consider its restriction to the trace class $\T(\H_{AB})$. We note that this approach requires us to go beyond self-adjoint operators, so in this section we drop the assumption of self-adjointness that we have taken throughout this paper.
The importance of the reshuffling operation stems from the fact that it maps separable states to trace class (rather than simply Hilbert--Schmidt) operators, and moreover~\cite{chen_2003, rudolph_2003-1, rudolph_2003-2, rudolph_2004, rudolph_2005, horodecki_2006}
\bb
\norm{\sigma^\R}{1} \leq 1 \qquad \forall \sigma\in \SEP .
\label{reshuffling_criterion}
\ee
Note that $\sigma^\R$ is not, in general, a self-adjoint operator, unlike the partially transposed state $\sigma^\Gamma$. However, $\big\|(\cdot)^\R\big\|_1$ can still be thought of as a gauge in the space of self-adjoint trace class operators. This is the gauge we are considering in this section.

To verify~\eqref{reshuffling_criterion}, we start by observing that for every two pure states $\ket{\psi} = \sum_i \psi_i \ket{i}\in \H_A$ and $\ket{\phi} = \sum_j \phi_j \ket{j} \in \H_B$ it holds that
\bb
\left(\ketbra{\psi}_A\otimes \ketbra{\phi}_B\right)^\R &= \left(\sum_{i,j,k,l} \psi_i\psi_k^* \phi_j \phi_l^*\, \ketbraa{i}{k}_A\otimes \ketbraa{j}{l}_B\right)^\R \\
&= \sum_{i,j,k,l} \psi_i\psi_k^* \phi_j \phi_l^*\, \ketbraa{i}{j}_A\otimes \ketbraa{k}{l}_B \\
&= \ketbraa{\psi}{\phi^*}_A \otimes \ketbraa{\psi^*}{\phi}_B .
\ee
Thus
\bb
\sigma^\R &= \left( \int \ketbra{\psi}_A \otimes \ketbra{\phi}_B\, \mathrm{d}\mu(\psi,\phi) \right)^\R \\
&= \int \left( \ketbra{\psi}_A \otimes \ketbra{\phi}_B \right)^\R \mathrm{d}\mu(\psi,\phi) \\
&= \int \ketbraa{\psi}{\phi^*}_A \otimes \ketbraa{\psi^*}{\phi}_B\, \mathrm{d}\mu(\psi,\phi) ,
\ee
where the identity in the second line follows from the fact that the %Bochner 
integral~\eqref{separable} that describes the separable decomposition of $\sigma$, intended as a Bochner integral in the space of Hilbert--Schmidt operators, %sign can be exchanged
commutes with any continuous linear operator --- the reshuffling, being an isometry, is automatically continuous. Finally,
\bb
\norm{\sigma^\R}{1} = \norm{\int \ketbraa{\psi}{\phi^*}_A \otimes \ketbraa{\psi^*}{\phi}_B\, \mathrm{d}\mu(\psi,\phi)}{1} \leq \sup_{\ket{\psi},\ket{\phi}} \norm{\ketbraa{\psi}{\phi^*}_A \otimes \ketbraa{\psi^*}{\phi}_B}{1} = 1 .
\ee

We are now ready to state the following:

\begin{corollary}[Reshuffling lower bound]\label{cor:reshuffled_lower_bound}
Given an arbitrary bipartite quantum state $\rho_{AB}$, we have that
\bb
E_{c,\, \LOCC} (\rho_{AB}) \geq&\
E_{c,\, \NE}(\rho_{AB}) \geq \log N^\R_\tau(\rho_{AB}) ,
\ee
where the \deff{tempered reshuffled negativity} is defined by
\bb
N^\R_\tau(\rho_{AB}) \coloneqq&\ \max \lset \<W,\rho\> \bar \norm{W^\R}{\infty}\leq 1,\; \<W,\rho\> = \norm{W}{\infty} \rset ,
\label{tempered_reshuffled_negativity}
\ee
and the maximisation is over self-adjoint operators $W$.
\end{corollary}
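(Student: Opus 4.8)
The plan is to mirror the partial-transposition argument of Section~\ref{subsec:partial_transposition}, replacing the partial transpose $\Gamma$ by the reshuffling $\R$. Concretely, I would introduce the auxiliary \emph{reshuffled negativity} $\norm{X}{\gamma} \coloneqq \norm{X^\R}{1}$ and apply Corollary~\ref{cor:single_letter_gamma_bound} with the base-norm choice $\norm{\cdot}{\mu} = \norm{\cdot}{\SEP}$, for which we already established $\loginf[\SEP]{\Phi_2} = 1$. To invoke that corollary I must verify its hypotheses: that $\norm{\cdot}{\gamma}$ is a norm dominated by $\norm{\cdot}{\SEP}$; that its dual is $\norm{W}{\gamma}^\circ = \norm{W^\R}{\infty}$; and that this dual is multiplicative under tensor products.

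First, since $X \mapsto X^\R$ is linear, an involution, and a Hilbert--Schmidt isometry by~\eqref{reshuffling_isometry}, and $\norm{\cdot}{1}$ is a norm, $\norm{\cdot}{\gamma}$ is absolutely homogeneous and obeys the triangle inequality. To see $\norm{X}{\gamma} \leq \norm{X}{\SEP}$, I would take any base-norm decomposition $X = \lambda_+\sigma_+ - \lambda_-\sigma_-$ with $\sigma_\pm \in \SEP$, apply linearity of $\R$ and the triangle inequality, and bound $\norm{\sigma_\pm^\R}{1} \leq 1$ via the reshuffling criterion~\eqref{reshuffling_criterion}; the infimum over decompositions gives the claim. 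For the dual, I would use the involution $X = (X^\R)^\R$ together with~\eqref{reshuffling_isometry} to rewrite $\<W, Z^\R\> = \<W^\R, Z\>$, so that substituting $X = Z^\R$ (with $\norm{X}{\gamma} = \norm{Z}{1}$) yields $\norm{W}{\gamma}^\circ = \sup_{\norm{Z}{1}\leq 1}\<W^\R, Z\> = \norm{W^\R}{\infty}$, the operator norm being dual to the trace norm.

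The multiplicativity of the dual is the one genuinely new computation. Here I would check that, under the natural regrouping of $(\H_{AB})^{\otimes n}$ into $A^n = A_1\cdots A_n$ and $B^n = B_1\cdots B_n$, the reshuffling factorises as $(X \otimes Y)^\R = X^\R \otimes Y^\R$ — this follows by matching matrix entries, since the single-system rule merely swaps the $A$-column index with the $B$-row index, which acts independently on each tensor factor. Combined with multiplicativity of the operator norm under tensor products, this gives $\norm{W^{\otimes n}}{\gamma}^\circ = \norm{(W^\R)^{\otimes n}}{\infty} = (\norm{W^\R}{\infty})^n = (\norm{W}{\gamma}^\circ)^n$. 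With all hypotheses in hand, Corollary~\ref{cor:single_letter_gamma_bound} and $\loginf[\SEP]{\Phi_2}=1$ give $E_{c,\O_\SEP}(\rho) \geq \inf_{\norm{Z}{\gamma}\leq 1} D^0_\h(\rho\|Z) = \log N^\R_\tau(\rho)$, where the last equality is the dual form of Lemma~\ref{lem:etruscan_dual} specialised to $\norm{W}{\gamma}^\circ = \norm{W^\R}{\infty}$. The stated chain then follows from the inclusions $\LOCC \subseteq \NE \subseteq \O_\SEP$ (the $\SEP$ base norm contracts under non-entangling channels) and antitonicity of the cost under enlarging the operation class.

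The main obstacle I anticipate is not any single inequality but the bookkeeping forced by the fact that reshuffling does \emph{not} preserve self-adjointness, unlike the partial transpose. Consequently $W^\R$ is generally non-self-adjoint even for self-adjoint $W$, and I must confirm that the one-shot and single-letter machinery (Lemma~\ref{lem:etruscan_dual} and Corollary~\ref{cor:single_letter_gamma_bound}) remains valid once the underlying space is enlarged beyond $\T_\sa$ to all trace-class operators — which is precisely why self-adjointness was dropped in this section. The required duality and minimax arguments carry over with the pairing $\<Y,X\> = \Tr Y^\dagger X$, and in infinite dimensions one works, as above, with the Hilbert--Schmidt realisation of $\R$ to ensure all objects are well defined.
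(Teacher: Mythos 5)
Your proposal is correct and follows essentially the same route as the paper's proof: dominating the base norm $\norm{\cdot}{\SEP}$ by $\norm{(\cdot)^\R}{1}$ via the reshuffling criterion, identifying the dual norm as $\norm{W^\R}{\infty}$ through the isometry property, establishing multiplicativity of that dual from $(X\otimes Y)^\R = X^\R\otimes Y^\R$, and then invoking Corollary~\ref{cor:single_letter_gamma_bound} with $\loginf[\SEP]{\Phi_2}=1$ together with the inclusions $\LOCC\subseteq\NE\subseteq\O_\SEP$. Your closing remark about dropping self-adjointness because reshuffling is not Hermiticity-preserving matches the caveat the paper itself records alongside the corollary.
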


\begin{remark}
The tempered reshuffled negativity, exactly as the tempered negativity introduced in~\cite{lami_2021-1}, can be computed efficiently via a semidefinite program. To see that this is the case, it suffices to reformulate slightly~\eqref{tempered_reshuffled_negativity} as
\begin{equation}\begin{aligned}
    N^\R_\tau(\rho_{AB}) = \max\lsetr \<W,\rho\> \barr \begin{pmatrix} \id & W^\R \\ (W^\R)^\dag & \id \end{pmatrix} \geq 0,\; -\<W,\rho\> \id \leq W \leq \<W,\rho\> \id \rsetr .
\end{aligned}\end{equation}
%Note that unlike the partial transposition, the reshuffling operation is not Hermiticity-preserving.
\end{remark}

\begin{proof}[Proof of Corollary~\ref{cor:reshuffled_lower_bound}]
We intend to apply Corollary~\ref{cor:single_letter_gamma_bound} with the choices $\norm{\cdot}{\mu} = \norm{\cdot}{\SEP}$ and $\norm{\cdot}{\gamma} = \norm{(\cdot)^\R}{1}$ (as mentioned, we think of $\norm{\cdot}{\gamma}$ as a gauge defined on the space of self-adjoint trace class operators). First of all, let us verify that $\norm{(\cdot)^\R}{1} \leq \norm{\cdot}{\SEP}$. To this end, for any $\delta>0$ pick some operator $X\in \T_{\sa}(\H)$ and a decomposition $X=a\sigma_+ - b \sigma_-$ with $\sigma_\pm \in \SEP$ and $a+b \leq \norm{X}{\SEP} + \delta$; we have that
\bb
\norm{X^\R}{1} = \norm{a\sigma_+^\R - b \sigma_-^\R}{1} \leq a \norm{\sigma_+^\R}{1} + b \norm{\sigma_-^\R}{1} \leq a+b \leq \norm{X}{\SEP} + \delta\, .
\ee
Since this holds for arbitrary $\delta>0$, we conclude that indeed $\norm{X^\R}{1} \leq \norm{X}{\SEP}$, as claimed.

We now compute the dual %norm 
to $\norm{\cdot}{\gamma} = \norm{(\cdot)^\R}{1}$. Using~\eqref{reshuffling_isometry} yields that
\bb
\norm{Y}{\gamma}^\circ &= \sup\lset \<X,Y\> \bar X=X^\dag,\ \norm{X}{\gamma} \leq 1 \rset \\
&= \sup\lset \<X^\R,Y^\R\> \bar X=X^\dag,\ \norm{X^\R}{1} \leq 1 \rset \\
%&= \sup\lset \<Z,Y^\R\> \bar \norm{Z}{1} \leq 1 \rset \\
&= \norm{Y^\R}{\infty} .
\label{reshuffling_dual_norm}
\ee
Here, %all operators are understood to be of Hilbert--Schmidt class. 
%Within that space, reshuffling is an isometry, and thus bijective; this justifies the equality in the third line. 
$X$ and $Y$ are understood to be of trace class. The equality in the last line is non-trivial. On the one hand, by H\"older's inequality it is clear that $\left|\<X^\R,Y^\R\>\right| \leq \norm{X^\R}{1} \norm{Y^\R}{\infty} \leq \norm{Y^\R}{\infty}$. To justify the reverse inequality, consider two normalised vectors $\ket{\psi}$ and $\ket{\phi}$ such that
\begin{equation}
\braket{\psi|Y^\R|\phi} = \norm{Y^\R}{\infty} .
\end{equation}
Now, set
\begin{equation}
X \coloneqq \frac12 \left( \ketbraa{\phi}{\psi}^\R + \left(\ketbraa{\phi}{\psi}^\R\right)^\dag \right) .
\end{equation}
Clearly, $X=X^\dag$ is self-adjoint. Furthermore, by an explicit element-wise computation using the definition of the reshuffled operator in~\eqref{eq:reshuffled}, one can verify that
\begin{equation}
\Big(\big(T^\R\big)^\dag \Big)^\R = F\overline{T}F
\end{equation}
%valid 
holds for all Hilbert--Schmidt operators $T$, with $F$ being the swap operator defined by $F \ket{\alpha}\ket{\beta} = \ket{\beta}\ket{\alpha}$ and the bar denoting complex conjugation. We then have that
\begin{equation}
\norm{X^\R}{1} = \frac12\norm{\ketbraa{\phi}{\psi} + F\overline{\ketbraa{\phi}{\psi}} F}{1} \leq \frac12\norm{\ketbraa{\phi}{\psi}}{1} + \frac12 \norm{F\overline{\ketbraa{\phi}{\psi}} F}{1} = \norm{\ketbraa{\phi}{\psi}}{1} = 1\, .
\end{equation}
Finally,
\begin{equation} \begin{aligned}
\<X,Y\> &= \frac12 \<\ketbraa{\phi}{\psi}^\R, Y\> + \frac12 \<\left(\ketbraa{\phi}{\psi}^\R\right)^\dag, Y\> \\
&= \Re \<\ketbraa{\phi}{\psi}^\R, Y\> \\
&= \Re \<\ketbraa{\phi}{\psi}, Y^\R\> \\
&= \Re \braket{\psi|Y^\R|\phi} \\
&= \norm{Y^\R}{\infty} .
\end{aligned} \end{equation}
This completes the proof of~\eqref{reshuffling_dual_norm}.

The above calculation reveals that, fortunately, $\norm{\cdot}{\gamma}^\circ$ %Fortunately, this 
is indeed a multiplicative gauge, i.e.
\bb
\norm{Y^{\otimes n}}{\gamma}^\circ = \norm{(Y^{\otimes n})^\R}{\infty} = \norm{(Y^\R)^{\otimes n}}{\infty} = \norm{Y^\R}{\infty}^{n} = \left(\norm{Y}{\gamma}^\circ \right)^n
\ee
for all $Y$. We are thus ready to apply Corollary~\ref{cor:single_letter_gamma_bound}, which yields
\bb
E_{c,\, \O_\mu}(\rho_{AB})\, &=\, r(\Phi_2\toOmu \rho)^{-1} \\
&\geq\, \frac{1}{\loginf[\SEP]{\Phi_2}} \, \log \sup \lset \<W,\rho\> \bar \norm{W^\R}{\infty} \leq 1,\; \< W, \rho \> = \norm{W}{\infty} \rset \\
&=\, \log N_\tau^\R(\rho_{AB}),
\ee
where we recalled that $\loginf[\SEP]{\Phi_2} = 1$.

To complete the proof it suffices to note that non-entangling operations are always contractive with respect to the gauge $\norm{\cdot}{\SEP}$, so that $E_{c,\, \LOCC} (\rho_{AB}) \geq E_{c,\, \NE}(\rho_{AB}) \geq E_{c,\, \O_\SEP}(\rho_{AB})$.
\end{proof}

It is an open problem to find examples of states for which the reshuffled tempered negativity yields a lower bound on the entanglement cost that is both non-trivial and better than other bounds known so far, but we will shortly see that it can match the bound obtained from partial transposition, allowing us to provide an alternative derivation of the irreversibility of entanglement theory.

We remark in passing that both of the gauges associated with partial transposition and reshuffling, namely, $\|(\cdot)^\Gamma\|_1$ and $\|(\cdot)^\R\|_1$, can be seen to be weak* lower semicontinuous (which is relevant e.g.\ in the context of Theorem~\ref{thm:oneshot_dilution}). For example, writing
\begin{equation}
\big\|X^\Gamma\big\|_1 = \sup\lset \<X^\Gamma,Y\> \bar Y\in \HS(\H_{AB}),\ \|Y\|_\infty\leq 1 \rset = \sup\lset \<X,Z\> \bar Z\in \HS(\H_{AB}),\ \big\|Z^\Gamma\big\|_\infty\leq 1 \rset
\end{equation}
shows that the function $X\mapsto \|X^\Gamma\|_1$ is a pointwise supremum of weak*-lower-semicontinuous functions, implying that it is itself weak* lower semicontinuous.

\subsection{Recovering the irreversibility of entanglement theory}

To establish the asymptotic irreversibility of a theory, it suffices to exemplify states $\rho, \rho'$ such that $r(\rho \toO \rho') \, r(\rho' \toO \rho) < 1$. We will first show how this can be done in the theory of entanglement, recovering the recent result of~\cite{lami_2021-1} by means of a more general class of examples.

For an integer $d\geq 3$, define the $d\times d$-dimensional state $\omega_d$ as
\begin{equation}\begin{aligned}
	\omega_d \coloneqq \frac{1}{d(d-1)} \sum_{i,j=1}^d \left(\proj{ii} - \ketbraa{ii}{jj}\right) = \frac{1}{d-1} (P_d - \Phi_d),
\end{aligned}\end{equation}
where $\Phi_d \coloneqq \frac1d \sum_{i,j=1}^d \ketbraa{ii}{jj}$ and $P_d\coloneqq %\frac1d
\sum_{i=1}^d \ketbra{ii}$. Note that for $d=3$ the above state reproduces precisely the state $\omega_3$ from~\cite{lami_2021-1}.

We proceed to establish a slight generalisation of the irreversibility of entanglement shown in~\cite{lami_2021-1}.
\begin{proposition}\label{prop:entanglement_irrev}
For any $d \geq 3$, the manipulation of the state $\omega_d$ in the resource theory of entanglement is irreversible under all non-entangling operations. Specifically,
\begin{equation}\begin{aligned}
E_{c,\, \NE}(\omega_d) > E_{d,\, \NE}(\omega_d).
\end{aligned}\end{equation}
\end{proposition}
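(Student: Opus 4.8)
The plan is to prove irreversibility by sandwiching the two rates between computable bounds coming from two different sources: a lower bound on $E_{c,\NE}(\omega_d)$ taken from the norm-based framework of this paper, and an upper bound on $E_{d,\NE}(\omega_d)$ taken from the relative entropy of entanglement. It then suffices to check that the former strictly exceeds the latter for every $d\geq 3$.

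For the cost, I would use the inclusion $\NE\subseteq\O_\SEP$ together with Section~\ref{subsec:partial_transposition} and Corollary~\ref{cor:reshuffled_lower_bound}, which give $E_{c,\NE}(\omega_d)\geq E_{c,\O_\SEP}(\omega_d)\geq E_\tau(\omega_d)$ as well as $E_{c,\NE}(\omega_d)\geq\log N^\R_\tau(\omega_d)$. The crux is to evaluate the tempered negativity $E_\tau(\omega_d)=\log\max\{\<W,\omega_d\>:\norm{W^\Gamma}{\infty}\leq 1,\ \norm{W}{\infty}\leq\<W,\omega_d\>\}$ in closed form. Writing $\omega_d=\tfrac{1}{d-1}(\Pi-\Phi_d)$ with $\Pi\coloneqq\sum_i\proj{ii}$ the projector onto the diagonal subspace, I would note that $\omega_d$ is invariant under the group generated by simultaneous permutations $V_\pi\otimes V_\pi$ and diagonal phases $U_\theta\otimes\overline{U_\theta}$; checking (either by symmetrising or by producing matching primal and dual optimisers) that the extremal witness lies in the commutant, the optimal $W$ may be taken in the three-parameter family $W=a\Phi_d+b\Pi+c(\id-\Pi)$. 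A direct computation then gives $\<W,\omega_d\>=b$, $\norm{W}{\infty}=\max(|a+b|,|b|,|c|)$ and, using $\Phi_d^\Gamma=\tfrac1d F$ for the flip operator $F$, $\norm{W^\Gamma}{\infty}=\max(|\tfrac ad+b|,|\tfrac ad+c|,|c-\tfrac ad|)$. Maximising $b$ subject to $\norm{W^\Gamma}{\infty}\leq 1$ and $\norm{W}{\infty}\leq b$ reduces to an elementary linear program with optimum $b=\min(2,\tfrac{d}{d-2})$, so that $E_\tau(\omega_d)=\log\min(2,\tfrac{d}{d-2})$. I would also confirm that the reshuffled witness attains the same value, realising the promised matching of the partial-transpose and reshuffling bounds.

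For the distillable entanglement I would exploit that $\omega_d=\sum_{ij}\gamma_{ij}\ketbraa{ii}{jj}$ is a maximally correlated state. Its spectrum is $\tfrac{1}{d-1}$ with multiplicity $d-1$, so $S(\omega_d)=\log(d-1)$, and the dephased operator $P_d=\sum_i\tfrac1d\proj{ii}$ is separable with $S(\omega_d\,\|\,P_d)=\log d-S(\omega_d)=\log\tfrac{d}{d-1}$. Invoking the standard facts that the relative entropy of entanglement is monotone under $\NE$ (since those maps preserve $\SEP$) and that its regularisation upper bounds the distillable resource, together with subadditivity $E_R^\infty\leq E_R\leq S(\omega_d\,\|\,P_d)$, yields $E_{d,\NE}(\omega_d)\leq\log\tfrac{d}{d-1}$.

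Combining the two estimates, irreversibility follows once $\min(2,\tfrac{d}{d-2})>\tfrac{d}{d-1}$ is verified for all $d\geq 3$: for $d\in\{3,4\}$ this is $2>\tfrac{d}{d-1}$, and for $d\geq 5$ it is $\tfrac{d}{d-2}>\tfrac{d}{d-1}$, both immediate. Hence $E_{c,\NE}(\omega_d)\geq E_\tau(\omega_d)>\log\tfrac{d}{d-1}\geq E_{d,\NE}(\omega_d)$. I expect the main obstacle to be the exact solution of the tempered-negativity program: although the symmetry reduction makes it finite-dimensional, one must track which constraint is active—the partial-transpose spectrum forces $|\tfrac ad|\leq 1$ and caps $b$ at $2$ for small $d$, while the tempering constraint $a\geq -2b$ binds for large $d$—and one must confirm that Corollary~\ref{cor:reshuffled_lower_bound} reproduces the identical optimum so that the reshuffling route genuinely recovers the separation.
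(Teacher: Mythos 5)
Your proposal is correct and reaches the same numerical separation as the paper, $E_{c,\NE}(\omega_d)\geq \log\min\left(2,\tfrac{d}{d-2}\right) > \log\tfrac{d}{d-1}\geq E_{d,\NE}(\omega_d)$, but it gets there by partially different means. On the cost side, the paper applies Corollary~\ref{cor:reshuffled_lower_bound} directly, exhibiting the explicit witness $W_d=\alpha_d P_d-\beta_d\Phi_d$ (with $\alpha_d=\min(2,\tfrac{d}{d-2})$, $\beta_d=2\alpha_d$ for $d\geq4$ and $\beta_3=3$) and verifying $\norm{W_d^\R}{\infty}=1$ and $\<W_d,\omega_d\>=\alpha_d=\norm{W_d}{\infty}$; you instead run the partial-transpose route through the tempered logarithmic negativity $E_\tau$, reducing it by symmetry to a small linear program. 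The two are equivalent here --- the paper's closing remark notes that the reshuffling and partial-transpose witnesses are equally tight for $\omega_d$ --- and your symmetry reduction and the resulting optimum $b=\min(2,\tfrac{d}{d-2})$ check out; note, though, that the exactness of $E_\tau(\omega_d)$ (and hence the commutant argument) is not actually needed, since feasibility of a single witness already gives the lower bound. On the distillation side you diverge more: you import the standard chain $E_{d,\NE}\leq E_R^\infty\leq E_R\leq S(\omega_d\|P_d)=\log\tfrac{d}{d-1}$, which is valid but rests on the external asymptotic-continuity/monotonicity argument for the relative entropy of entanglement under non-entangling maps, whereas the paper stays inside its own framework and gets the same number from Corollary~\ref{cor:single_letter_bounds} via the one-line observation $\omega_d+\tfrac{1}{d-1}\Phi_d\propto P_d$, i.e.\ $R^g_\SEP(\omega_d)\leq\tfrac{1}{d-1}$. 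The paper's route is more self-contained and illustrates the norm-based machinery on both sides of the inequality; yours buys familiarity at the price of invoking a known but unproved-here entropic bound. No gaps.
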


\begin{proof}
Applying Corollary~\ref{cor:single_letter_bounds}, we obtain
\begin{equation}\begin{aligned}
	r(\omega_d \toOx{g,\SEP} \Phi_2) \leq \frac{\log \norm{\omega_d}{g,\SEP}}{-\log \norm{\Phi_2}{\SEP}^\circ} = \log \frac{d}{d-1},
\end{aligned}\end{equation}
where we %used the value of $R^g_\SEP(\omega_d) = \frac12$ established in~\cite{lami_2021-1}.
observed that since $\omega_d + \frac{1}{d-1} \Phi_d = \frac{1}{d-1} P_d$, it follows that $R^g_\SEP(\omega_d) \leq \frac{1}{d-1}$. Incidentally, equality holds in this latter estimate, %here,
 although we shall not need this observation here.
For the lower bound, we use the base gauge $\norm{\cdot}{\SEP}$ and aim to bound the transformation rate $r(\Phi_2 \toO \omega_d)$ through Corollary~\ref{cor:single_letter_gamma_bound}. 
To this end, set
\bb
W_d \coloneqq \alpha_d P_d - \beta_d \Phi_d ,
\ee
where
\bb
\alpha_d \coloneqq \left\{ \begin{array}{cc} 2 & d=3, \\[1ex] \frac{d}{d-2} & d\geq 4 , \end{array}\right.\qquad \beta_d \coloneqq \left\{ \begin{array}{cc} 3 & d=3, \\[1ex] \frac{2d}{d-2} & d\geq 4 . \end{array}\right.
\ee
We obtain that
\bb
\norm{W_d^\R}{\infty} = \norm{\alpha_d P_d^\R - \beta_d \Phi_d^\R}{\infty} = \norm{\alpha_d P_d - \frac{\beta_d}{d} \id}{\infty} = \max\left\{ \left|\alpha_d-\frac{\beta_d}{d}\right|,\, \frac{|\beta_d|}{d} \right\} = 1 ,
\ee
and moreover
\bb
\< W_d, \omega_d \> = \Tr \left[ \left( \alpha_d P_d - \beta_d \Phi_d \right) \frac{P_d - \Phi_d}{d-1} \right] = \alpha_d = \max\{|\alpha_d|,\, |\alpha_d-\beta_d|\} = \norm{W_d}{\infty} ;
\ee
putting all together, we see that
\bb
E_{c,\, \NE}(\omega_d) \ &\geq r(\Phi_2 \toOx{\SEP} \omega_d)^{-1} \ \geq \log \< W_d, \omega_d \> = \log \alpha_d \\
&> \log \frac{d}{d-1} \geq r(\omega_d \toOx{g,\SEP} \Phi_2) \geq E_{d,\, \NE}(\omega_d),
\ee
as claimed.
\end{proof}

\begin{remark}
The proof technique in~\cite{lami_2021-1} (cf.~\eqref{eq:temp_neg_lower_bound}) uses in a similar way the  gauge $\norm{(\cdot)^\Gamma}{1}$ based on the partial transpose, instead of that based on the reshuffling criterion employed here. For the special state $\omega_d$, both of these choices lead to equally tight (and optimal, as can be seen by following~\cite{lami_2021-1}) bounds.
\end{remark}

%%%%%%%%%%%%%%%%%%%%%%%%%%%%%%%%%%%%%%%%%%%%%%%%%%%%%%%%%%%%%%%%%%%%
%%%%%%%%%%%%%%%%%%%%%%%%%%%%%%%%%%%%%%%%%%%%%%%%%%%%%%%%%%%%%%%%%%%%
%%%%%%%%%%%%%%%%%%%%%%%%%%%%%%%%%%%%%%%%%%%%%%%%%%%%%%%%%%%%%%%%%%%%
\section{Application: resource theory of magic}\label{sec:app_magic}
%%%%%%%%%%%%%%%%%%%%%%%%%%%%%%%%%%%%%%%%%%%%%%%%%%%%%%%%%%%%%%%%%%%%
%%%%%%%%%%%%%%%%%%%%%%%%%%%%%%%%%%%%%%%%%%%%%%%%%%%%%%%%%%%%%%%%%%%%
%%%%%%%%%%%%%%%%%%%%%%%%%%%%%%%%%%%%%%%%%%%%%%%%%%%%%%%%%%%%%%%%%%%%

The free states in the resource theory of magic are the stabiliser states $\F \coloneqq \STAB$, composed of convex mixtures of pure states $\proj{\psi_U}$ generated as $\ket{\psi_U} = U \ket{0}^{\otimes n}$ through the application of all Clifford unitaries $U$. 
Any state which is not in $\STAB$ is called a magic state, and in particular the distillation of highly resourceful, pure magic states is a cornerstone of many fault-tolerant quantum computation subroutines~\cite{bravyi_2005}. The resource theory of magic was formalised in~\cite{veitch_2014}, where transformation rates were studied for the first time. The free operations of this theory are typically taken to be the stabiliser operations $\OO_{\rm STAB}$, built through Clifford gates, Pauli measurements, and preparations of ancillary states in the computational basis. The motivation for such a choice can be understood as the fact that the application of any operation in $\OO_{\rm STAB}$ can be efficiently simulated on a classical computer~\cite{gottesman_1998}, while operations beyond $\OO_{\rm STAB}$ may require the costly magic states for their implementation. However, recent results showed that even larger classes of operations admit efficient classical simulation algorithms~\cite{veitch_2012,pashayan_2015,seddon_2019,wang_2019-1,rall_2019,seddon_2021,heimendahl_2022}, which could suggest that characterising the manipulation of magic states under larger types of operations may be of interest. 

%%%%%%%%%%%%%%%%%%%%%%%%%%%%%%%%%%%%%%%%%%%%%%%%%%%%%%%%%%%%%%%%%%%%
\subsection{The case of qutrits}
%%%%%%%%%%%%%%%%%%%%%%%%%%%%%%%%%%%%%%%%%%%%%%%%%%%%%%%%%%%%%%%%%%%%

The case of $d$-dimensional magic theory (i.e.\ defined for systems composed of $n$ qu\emph{d}its, where $d$ is an odd prime) is amenable to a particularly convenient characterisation, owing to the fact that the discrete Wigner function~\cite{gross_2006} can be defined (see~\cite{veitch_2014,wang_2020} for an  overview). The set of states with a positive Wigner representation, denoted $\FF_W$, then forms a useful approximation to the stabiliser states ${\STAB}$, and crucially it holds that any stabiliser protocol $\Lambda$ satisfies $\Lambda[\FF_W] \subseteq \FF_W$~\cite{veitch_2014}, meaning that such protocols are also free operations with respect to the set $\FF_W$ of states with a positive Wigner function. We then use this choice of free states to study the asymptotic properties of magic state transformations.

Let us hereafter focus on the case $d=3$. For any $(a_1, a_2) \in \mathbb{Z}_3 \times \mathbb{Z}_3$, the Heisenberg--Weyl operators are defined as
\begin{equation}\begin{aligned}
	T_{(a_1, a_2)} \coloneqq \omega^{-2 a_1 a_2} 
	Z^{a_1} X^{a_2},
\end{aligned}\end{equation}
where $\omega \coloneqq e^{2 \pi i / 3}$, and $X$ and $Z$ are the clock and shift operators, respectively. These operators are used to define the phase space point operators
\begin{equation}\begin{aligned}
	A_{(0,0)} &\coloneqq \frac{1}{3} \sum_{a_1, a_2 = 0}^{2} T_{(a_1,a_2)},\\
	A_{(a_1,a_2)} &\coloneqq T_{(a_1,a_2)}^{\phantom{\dag}} A_{(0,0)} T_{(a_1,a_2)}^\dagger,
\end{aligned}\end{equation}
which then allow us to define the \deff{discrete Wigner function} $W_{a_1,a_2}$ as
\begin{equation}\begin{aligned}
	W_{a_1,a_2} (Y) \coloneqq \frac{1}{3} \< A_{(a_1,a_2)}, Y \>.
\end{aligned}\end{equation}
For any state $\rho$, the Wigner representation $\{W_{a_1,a_2}(\rho)\}_{(a_1, a_2) \in \mathbb{Z}_3 \times \mathbb{Z}_3}$ forms a quasi-probability distribution over $\mathbb{Z}_3 \times \mathbb{Z}_3$. The corresponding Wigner trace norm is given by
\begin{equation}\begin{aligned}
	\norm{Y}{W} \coloneqq \sum_{a_1, a_2 = 0}^{2} \left|W_{a_1,a_2}(Y)\right|,
\end{aligned}\end{equation}
and the quantity $\log \norm{\rho}{W}$ has been dubbed the \deff{mana} of a quantum state $\rho$~\cite{veitch_2014}. 
The free states are then defined as
\begin{equation}\begin{aligned}
	\FF_W \coloneqq& \lset \rho \in \D(\H) \bar W_{a_1,a_2} (\rho) \geq 0 \; \forall (a_1, a_2) \in \mathbb{Z}_3 \times \mathbb{Z}_3 \rset\\
	=& \lset \rho \in \D(\H) \bar \norm{\rho}{W} = 1 \rset.
\end{aligned}\end{equation}
The generalisation to many copies is straightforward: one defines the Heisenberg--Weyl operators as $T_{(a_{1,1},a_{1,2}) \oplus \cdots \oplus (a_{n,1}, a_{n,2})} \coloneqq T_{(a_{1,1},a_{1,2})} \otimes \cdots \otimes T_{(a_{n,1},a_{n,2})}$, and the definitions of the phase space point operators and Wigner function are extended analogously, e.g.
\begin{equation}\begin{aligned}
	W_{a_{1,1},a_{1,2},\ldots,a_{n,1},a_{n,2}} (Y) = \frac{1}{3^n} \< A_{(a_{1,1},a_{1,2})\oplus\cdots\oplus(a_{n,1},a_{n,2})}, Y \>.
\end{aligned}\end{equation}
Crucially, the Wigner trace norm is multiplicative: $\norm{X\otimes Y}{W} = \norm{X}{W} \norm{Y}{W}$ for any $X,Y$~\cite{veitch_2014}. Coupled with the fact that $\norm{X}{W} \leq \norm{X}{\STAB} \, \forall X$, this will allow us to employ the Wigner trace norm in the multiplicative norm bound on resource cost (Corollary~\ref{cor:single_letter_gamma_bound}).

In this resource theory, states such as the Strange state $\ket{S} = (\ket{1} - \ket{2})/\sqrt{2}$ and the Norell state $\ket{N} = (-\ket{0} + 2\ket{1} - \ket{2})/\sqrt{6}$ have received attention as `maximally magical' states. In particular, Ref.~\cite{veitch_2014} noticed that both of these states maximise the Wigner trace norm $\norm{\cdot}{W}$ among all qutrit states, and raised the question of whether the states are asymptotically equivalent, that is, whether $r(\proj{S} \toO \proj{N}) = r(\proj{N} \toO \proj{S}) = 1$. Ref.~\cite{wang_2020} showed that this is not the case by proving that $r(\ket{N} \toO \ket{S}) < 1$. However, this did not rule out the possibility that $\ket{S}$ can be transformed to $\ket{N}$ at a rate which makes the two state interconvertible in the asymptotic limit.
An important question therefore remained open: is this resource theory asymptotically reversible?

\subsection{Irreversibility of magic manipulation}

Instead of the conversion between $\ket{S}$ and $\ket{N}$, we consider the transformation between the Norell state $\ket{N}$ and the Hadamard `+' state $\ket{H_+}$, defined as the $+1$ eigenstate of the Hadamard gate
\begin{equation}\begin{aligned}
	H = \frac{1}{\sqrt{3}} \begin{pmatrix} 1 & 1 & 1\\ 1 & \omega & \omega^2 \\ 1 & \omega^2 & \omega \end{pmatrix}.
\end{aligned}\end{equation}

To upper bound the rate $r(\proj{H_+} \to \proj{N})$ we use~\eqref{eq:handy_upper_distillable} with the choice of norm $\norm{\cdot}{g,\F_W}$ based on the generalised robustness monotone, defined as in~\eqref{eq:norm_g_FF}. In this case the bound reduces to that found in Ref.~\cite{wang_2020}:
\begin{equation}\begin{aligned}\label{eq:magic_wang_upperbound}
	r(\proj{H_+} \toOx{\F_W} \proj{N}) \leq \frac{\log\left(R^g_{\F_W}(\proj{H_+})+1\right)}{-\log \norm{\proj{N}}{\F_{W}}^\circ} = \frac{\log (3-\sqrt{3})}{\log \frac{3}{2}} \approx 0.59,
\end{aligned}\end{equation}
where we used the known value $\norm{\proj{N}^{\otimes n}}{\F_{W}}^\circ = \left(\frac23\right)^n$~\cite{wang_2020}, whose multiplicativity implies that $\loginfd[\FF_W]{\proj{N}} = \log \norm{\proj{N}}{\F_{W}}^\circ$, and the known value of $R^g_{\F_W}(\proj{H_+})= 2-\sqrt3$~\cite{wang_2020}.
For the other direction, we would like to use our Corollary~\ref{cor:single_letter_gamma_bound} (see also~\eqref{eq:hand_lower_cost}). Introducing the \deff{tempered~mana} 
\begin{equation}\begin{aligned}
	\NegW(\rho) \coloneqq \inf_{\norm{Z}{W}\leq 1} D_\h^{0}(\rho \| Z) = \log \max \lset \< \rho, X\> \bar \norm{X}{W}^\circ \leq 1, \; \norm{X}{\infty} = \Tr \rho X \rset,
\end{aligned}\end{equation}
where $\norm{X}{W}^\circ = \max_{a_1, a_2} \, d \left|W_{a_1,a_2}(X)\right|$ is known as the Wigner spectral norm, we can compute $W_\tau(\proj{H_+}) = \log \frac{1}{3}(1+2\sqrt{3})$ (see Appendix~\ref{appendix:wigner} for details).
However, the na\"ive bound $\loginf[\F_W]{\rho} \leq \log \norm{\rho}{\F_W}$ now only gives
\begin{equation}\begin{aligned}\label{eq:onlygives}
	r(\proj{N} \toOx{\F_W} \proj{H_+}) \leq \frac{\log \norm{\proj{N}}{\F_W}}{W_\tau(\proj{H_+})} = \frac{\log 2}{\log (1+2\sqrt{3})-\log 3} \approx 1.74,
\end{aligned}\end{equation}
which altogether only lets us bound the rates as
\begin{equation}\begin{aligned}
	r(\proj{H_+} \toOx{\F_W} \proj{N}) \; r(\proj{N} \toOx{\F_W} \proj{H_+}) \lessapprox 1.02.
\end{aligned}\end{equation}
We can see that the bound is \emph{almost} good enough, but not quite sufficient to show irreversibility --- we will need a slightly stronger bound here.

The advantage of the Wigner-based approach over an optimisation with respect to stabiliser states is that we can more easily evaluate $\norm{\cdot}{\F_W}$ for many copies of states. For example, we compute (see Appendix~\ref{appendix:wigner})
\begin{equation}\begin{aligned}\label{eq:moredetails1}
	\norm{\proj{N}^{\otimes 2}}{\F_W} = \frac{11}{3} < 4 = \norm{\proj{N}}{\F_W}^2.
\end{aligned}\end{equation}
Then, we get a better bound on the regularisation of $\norm{\cdot}{\F_W}$ as
\begin{equation}\begin{aligned}
	\loginf[\F_W]{\rho} \leq \frac12 \log \norm{\rho^{\otimes 2}}{\F_W},
\end{aligned}\end{equation}
which yields
\begin{equation}\begin{aligned}\label{eq:moredetails2}
	r(\proj{N} \toOx{\F_W} \proj{H_+}) \leq \frac{\frac12 \log\norm{\proj{N}^{\otimes 2}}{\F_W}}{W_\tau(\proj{H_+})} = \frac{\frac12 \log \frac{11}{3}}{\log (1+2\sqrt{3})-\log 3} \approx 1.63.
\end{aligned}\end{equation}
Combining with the reverse bound as above, we obtain
\begin{equation}\begin{aligned}
	r(\proj{H_+} \toOx{\F_W} \proj{N}) \; r(\proj{N} \toOx{\F_W}  \proj{H_+}) \leq \frac{\log \frac{11}{3} \log \left(3-\sqrt{3}\right)}{\log \frac{9}{4} \log \frac{1+2\sqrt{3}}{3}} \leq 0.96.
\end{aligned}\end{equation}
Asymptotic irreversibility is therefore established. Summing up, what we have shown is as follows.
\begin{boxed}{white}
\begin{theorem}\label{thm:magic_irreversibility}
The resource theory of multi-qudit magic is asymptotically irreversible under any class of operations $\OO$ that preserves the set of states with positive Wigner function. Specifically, if $\sigma \in \W \Rightarrow \Lambda(\sigma) \in \W$ for all $\Lambda \in \OO$, then
\begin{equation}\begin{aligned}
   r\!\left(\proj{H_+} \toO \proj{N}\right) < r\!\left(\proj{N} \toO  \proj{H_+}\right)^{-1}.
\end{aligned}\end{equation}
\end{theorem}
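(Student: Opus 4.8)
The plan is to control the two interconversion rates separately by the single-letter bounds of Section~\ref{subsec:singleletter}, and then to show that the product of the resulting bounds lies strictly below $1$. Before invoking any of those bounds, I would first record the reduction that makes the statement hold for \emph{every} $\W$-preserving class $\OO$ at once: any quantum channel $\Lambda$ with $\Lambda(\sigma)\in\W$ whenever $\sigma\in\W$ contracts every norm-based monotone induced by $\W$ --- in particular the base norm $\norm{\cdot}{\F_W}$ and the generalised-robustness norm $\norm{\cdot}{g,\F_W}$ --- and, being a channel, it also contracts the trace norm. Hence $\OO\subseteq\O_{\F_W}$ and $\OO\subseteq\O_{g,\F_W}$, so the bounds below, proved for the enlarged sets $\O_\mu$, upper bound the corresponding rates under $\OO$.

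For the forward direction I would bound $r(\proj{H_+}\toO\proj{N})$ using the distillation upper bound~\eqref{eq:handy_upper_distillable} (i.e.\ Corollary~\ref{cor:single_letter_bounds}) with the generalised-robustness norm, treating $\proj{H_+}$ as the input and $\proj{N}$ as the target. The inputs needed are the single-copy robustness $R^g_{\F_W}(\proj{H_+})=2-\sqrt3$ and the regularised dual parameter $\loginfd[\F_W]{\proj{N}}=\log\tfrac23$, the latter following from the multiplicativity $\norm{\proj{N}^{\otimes n}}{\F_W}^\circ=(2/3)^n$. These give $r(\proj{H_+}\toO\proj{N})\le \frac{\log(3-\sqrt3)}{\log(3/2)}$.

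For the reverse direction I would instead use the multiplicative-norm cost bound (Corollary~\ref{cor:single_letter_gamma_bound}) with $\norm{\cdot}{\gamma}=\norm{\cdot}{W}$, the Wigner trace norm: it satisfies $\norm{\cdot}{W}\le\norm{\cdot}{\F_W}$ (since every positive-Wigner state has unit Wigner trace norm) and has a multiplicative dual. Reading that corollary with $\phi=\proj{N}$ and $\rho=\proj{H_+}$ turns the lower bound on $r(\proj{N}\toO\proj{H_+})^{-1}$ into the upper bound $r(\proj{N}\toO\proj{H_+})\le \loginf[\F_W]{\proj{N}}/\NegW(\proj{H_+})$, where $\NegW(\proj{H_+})=\inf_{\norm{Z}{W}\le1}D^0_\h(\proj{H_+}\|Z)$ is the tempered mana; I would compute $\NegW(\proj{H_+})=\log\tfrac13(1+2\sqrt3)$ by solving its defining semidefinite program, deferred to Appendix~\ref{appendix:wigner}.

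The step I expect to be the real obstacle is estimating $\loginf[\F_W]{\proj{N}}$ sharply enough. The naive single-copy bound $\loginf[\F_W]{\proj{N}}\le\log\norm{\proj{N}}{\F_W}=\log2$ makes the product of the two rate bounds come out to roughly $1.02$, \emph{just} above $1$, so it fails to certify irreversibility. The remedy is to exploit the strict sub-multiplicativity of the (non-multiplicative) base norm at two copies, namely $\norm{\proj{N}^{\otimes2}}{\F_W}=\tfrac{11}{3}<4=\norm{\proj{N}}{\F_W}^2$ (again computed in Appendix~\ref{appendix:wigner}), which yields $\loginf[\F_W]{\proj{N}}\le\tfrac12\log\tfrac{11}{3}$ and hence $r(\proj{N}\toO\proj{H_+})\le \frac{\frac12\log(11/3)}{\log(1+2\sqrt3)-\log3}$. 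Multiplying the two bounds then gives
\[
r(\proj{H_+}\toO\proj{N})\,r(\proj{N}\toO\proj{H_+})\le \frac{\log\frac{11}{3}\,\log(3-\sqrt3)}{\log\frac94\,\log\frac{1+2\sqrt3}{3}}\le 0.96<1,
\]
which is exactly the asserted strict inequality $r(\proj{H_+}\toO\proj{N})<r(\proj{N}\toO\proj{H_+})^{-1}$, so irreversibility follows. The whole argument thus reduces to two small convex/semidefinite computations --- the tempered mana of $\proj{H_+}$ and the two-copy Wigner base norm of $\proj{N}$ --- which I would relegate to the appendix.
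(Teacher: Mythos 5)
Your proposal is correct and follows essentially the same route as the paper's own argument: the same forward bound via the generalised-robustness norm, the same reverse bound via Corollary~\ref{cor:single_letter_gamma_bound} with the tempered mana $\NegW(\proj{H_+})=\log\tfrac{1+2\sqrt3}{3}$, and the same crucial two-copy refinement $\norm{\proj{N}^{\otimes2}}{\F_W}=\tfrac{11}{3}<4$ to push the product of the rate bounds from $\approx1.02$ down to $\leq0.96$. The numerical inputs and the final inequality all match the paper.
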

\end{boxed}
The result applies not only to stabiliser protocols, but also to the more general classes of completely stabiliser-preserving~\cite{seddon_2019} and completely positive-Wigner-preserving maps~\cite{wang_2019-1}, both of which are strictly larger than stabiliser operations~\cite{heimendahl_2022}.

We note that the irreversibility shown here is stronger than the irreversibility found in entanglement theory: the manipulation of \emph{pure} entangled states is known to be reversible~\cite{bennett_1996-1}, and only for a class of rank-2 states could the irreversibility under non-entangling transformations be established~\cite{lami_2021-1}. Here in the theory of magic, on the contrary, we see that not even pure states can be reversibly manipulated. %in the theory of magic.

We also stress that previously known bounds, including those based on the regularised relative entropy of magic $D_{\STAB}^\infty$~\cite{veitch_2014}, are not strong enough to show the irreversibility revealed by our results. To see this, note that~\cite{wang_2020}
\begin{equation}\begin{aligned}
    D_{\STAB}^\infty (\proj{H_+}) = \log\left(3-\sqrt{3}\right), \qquad D_{\STAB}^\infty (\proj{N}) = \log\frac{3}{2},
\end{aligned}\end{equation}
from which we only get that~\cite{veitch_2014}
\begin{equation}\begin{aligned}\label{eq:relent_rate_bound}
r\!\left(\proj{N} \toOx{\max} \proj{H_+}\right) \leq \frac{\log\frac{3}{2}}{\log(3-\sqrt{3})} \approx 1.71.
\end{aligned}\end{equation}
This is weaker than our bound in~\eqref{eq:moredetails2} and, in particular, insufficient to establish Theorem~\ref{thm:magic_irreversibility} --- one can notice that \eqref{eq:relent_rate_bound} is just the inverse of Eq.~\eqref{eq:magic_wang_upperbound}.

%%%%%%%%%%%%%%%%%%%%%%%%%%%%%%%%%%%%%%%%%%%%%%%%%%%%%%%%%%%%%%%%%%%%
\subsection{The case of qubits}
%%%%%%%%%%%%%%%%%%%%%%%%%%%%%%%%%%%%%%%%%%%%%%%%%%%%%%%%%%%%%%%%%%%%

We have been unable to decisively show that irreversibility occurs also in the resource theory of magic for qubits. However, we present below some partial results and a conjecture whose validity would indeed rule out the reversibility of many-qubit magic.

The discrete Wigner function does not allow for a straightforward application to the qubit case that would recover the nice properties of the qudit resource theory based on $\W$. However, a function conceptually similar to the Wigner negativity is the so-called stabiliser norm, defined for an $n$-qubit operator $X$ as~\cite{campbell_2011}
\begin{equation}\begin{aligned}
\norm{X}{\mathcal{P}} \coloneqq \frac{1}{2^n} \sum_{P \in \mathcal{P}}  \left|\Tr(X\, P)\right| ,
\end{aligned}\end{equation}
where $\mathcal{P}$ denotes all $n$-qubit Pauli operators. Its dual norm can be obtained as
\begin{equation}\begin{aligned}
	\norm{X}{\mathcal{P}}^\circ = \max_{P \in \mathcal{P}} \left|\Tr(X\, P)\right|.
\end{aligned}\end{equation}
Both of the norms are not difficult to see to be multiplicative on tensor products. 

Importantly, $\sigma \in {\STAB} \Rightarrow \norm{\sigma}{\mathcal{P}} \leq 1$, which means that $	\norm{X}{\mathcal{P}} \leq \norm{X}{\STAB}$.
The stabiliser norm approach does not always yield a good approximation for the set of stabiliser states: the set of $n$-qubit states with $\norm{\sigma}{\mathcal{P}} \leq 1$ is much larger than the set $\STAB$ when $n>1$~\cite{rall_2019}. Nevertheless, we can use it to define the \deff{tempered stabiliser norm}
\begin{equation}\begin{aligned}
	P_\tau(\rho) \coloneqq \inf_{\norm{Z}{\mathcal{P}}\leq 1} D_\h^{0}(\rho \| Z) = \log \max \lset \< \rho , X\> \bar \norm{X}{\mathcal{P}}^\circ \leq 1, \; \norm{X}{\infty} = \< \rho ,X\> \rset
\end{aligned}\end{equation}
and use the multiplicativity of $\norm{\cdot}{\mathcal{P}}^\circ$ to apply Corollary~\ref{cor:single_letter_gamma_bound} and establish $P_\tau$ as a single-letter bound on transformation rates.

An important class of states in the characterisation of multi-qubit magic are the Hoggar states $\ket{\rm Hog}$ (see e.g.~\cite{andersson_2015,howard_2017,stacey_2019}). A Hoggar state is any 3-qubit state defined by taking a fiducial state, e.g.
\begin{equation}\begin{aligned}
\ket{\rm Hog_0} \propto (-1+2i, 1, 1, 1, 1, 1, 1, 1)^T
\end{aligned}\end{equation}
and defining $\ket{\rm Hog}$ as belonging to the orbit of $\ket{\rm Hog_0}$ under the Pauli group, i.e.\ $\ket{\rm Hog} = P \ket{\rm Hog_0}$ for some $P \in \mathcal{P}$.
These states enjoy a very strong symmetry: it holds that
\begin{equation}\begin{aligned}\label{eq:hoggar_sym}
	\left|\braket{ {\rm Hog} | P | {\rm Hog}}\right| = \frac13
\end{aligned}\end{equation}
for all non-trivial Paulis $P \in \mathcal{P} \setminus \{\id\}$. Importantly, each Hoggar state satisfies the property that its standard robustness equals its generalised robustness~\cite{howard_2017,takagi_2021}:\footnote{We note that in~\cite{howard_2017}, the name `robustness of magic' was used to refer to the base norm $\norm{\cdot}{\STAB}$ itself; here it stands for $R^s_{\STAB}(\rho) = \frac12 \left(\norm{\rho}{\STAB} - 1\right)$, for consistency with the terminology used in entanglement theory and other resources.}
\begin{equation}\begin{aligned}
	R^s_{\STAB} (\proj{\rm Hog}) + 1= R^g_{\STAB} (\proj{\rm Hog}) + 1= \frac{1}{\norm{\proj{\rm Hog}}{{\STAB}}^\circ} = \frac{12}{5}.
\end{aligned}\end{equation}
Since the type of symmetry in Eq.~\eqref{eq:hoggar_sym} imposes a similar, rather symmetric structure on the tensor products of Hoggar states, one could reasonably expect that this will also lead to an equality between the robustness measures for many copies of this state.

\begin{conjecture}\label{conj_hoggar}
The standard robustness equals the generalised robustness for many copies of Hoggar states:
\begin{equation}\begin{aligned}
	R^s_{\STAB} (\proj{\rm Hog}^{\otimes n}) = R^g_{\F_{\rm STAB}} (\proj{\rm Hog}^{\otimes n}) \quad \forall n.
\end{aligned}\end{equation}
\end{conjecture}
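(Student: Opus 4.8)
The plan is to establish the two inequalities separately. The bound $R^g_{\STAB}(\proj{\rm Hog}^{\otimes n}) \leq R^s_{\STAB}(\proj{\rm Hog}^{\otimes n})$ is automatic, since the generalised robustness optimises the admissible ``noise'' state over all of $\D(\H)$ rather than only over $\STAB$. The entire content therefore lies in the reverse inequality. As a first step I would fix the value of the generalised robustness. For a pure state the operator $W_n \coloneqq \frac{1}{\norm{\proj{\rm Hog}^{\otimes n}}{\STAB}^\circ}\,\proj{\rm Hog}^{\otimes n}$ is feasible in the dual of $R^g_{\STAB}$ (it is positive and satisfies $\< W_n, \sigma\> \leq 1$ for all $\sigma \in \STAB$), so it certifies $1 + R^g_{\STAB}(\proj{\rm Hog}^{\otimes n}) \geq 1/\norm{\proj{\rm Hog}^{\otimes n}}{\STAB}^\circ$. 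Using the known multiplicativity $\norm{\proj{\rm Hog}^{\otimes n}}{\STAB}^\circ = (5/12)^n$ this lower bound reads $(12/5)^n$, while the submultiplicativity $1 + R^g_{\STAB}(\rho \otimes \tau) \leq \big(1+R^g_{\STAB}(\rho)\big)\big(1+R^g_{\STAB}(\tau)\big)$ gives the matching upper bound $(12/5)^n$. Hence $R^g_{\STAB}(\proj{\rm Hog}^{\otimes n}) = (12/5)^n - 1$, and the conjecture becomes equivalent to the single upper bound $R^s_{\STAB}(\proj{\rm Hog}^{\otimes n}) \leq (12/5)^n - 1$.

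To prove this upper bound one cannot work on the dual side: a witness only ever lower bounds the base norm $\norm{\proj{\rm Hog}^{\otimes n}}{\STAB} = 1 + 2 R^s_{\STAB}$, and indeed $2W_n - \id$ is a feasible base-norm witness attaining exactly $2(12/5)^n - 1$, which merely reproduces $R^s \geq R^g$. Instead I would construct an explicit primal pseudomixture into stabiliser states. Writing $\lambda_n \coloneqq (12/5)^n - 1$, it suffices to exhibit a single $\sigma_+ \in \STAB$ of maximal overlap $\< \proj{\rm Hog}^{\otimes n}, \sigma_+ \> = (5/12)^n$ such that the operator $\sigma_- \coloneqq \lambda_n^{-1}\big[(1+\lambda_n)\sigma_+ - \proj{\rm Hog}^{\otimes n}\big]$ is positive semidefinite \emph{and} itself a stabiliser state. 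Indeed $\sigma_-$ has unit trace by construction, and the identity $\proj{\rm Hog}^{\otimes n} = (1+\lambda_n)\sigma_+ - \lambda_n \sigma_-$ then certifies $R^s_{\STAB}(\proj{\rm Hog}^{\otimes n}) \leq \lambda_n$. The supports of $\sigma_\pm$ are moreover pinned by complementary slackness with the conjectured optimal witness $2(12/5)^n \proj{\rm Hog}^{\otimes n} - \id$: the state $\sigma_+$ must saturate the maximal stabiliser overlap, whereas $\sigma_-$ must be orthogonal to $\proj{\rm Hog}^{\otimes n}$.

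To build $\sigma_\pm$ I would exploit the exceptional symmetry of the Hoggar fiducial, namely that its Pauli orbit forms a SIC with $\frac{1}{64}\sum_k \proj{{\rm Hog}_k} = \id/8 \in \STAB$, together with the single-copy decomposition underlying $R^s_{\STAB}(\proj{\rm Hog}) = R^g_{\STAB}(\proj{\rm Hog}) = 7/5$. The natural route is either an induction on $n$ or a twirl of a candidate decomposition over the group generated by the Clifford symmetries fixing $\proj{\rm Hog}^{\otimes n}$ and the permutations of the $n$ copies. Every such element is a Clifford unitary, hence preserves both $\STAB$ and $\proj{\rm Hog}^{\otimes n}$, so averaging maps any valid pseudomixture to a symmetric one and reduces the search for $\sigma_-$ to a low-dimensional semidefinite feasibility problem on the commutant.

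The main obstacle is precisely the demand that the noise state $\sigma_-$ be \emph{free}, rather than merely a density operator --- this is exactly what separates the standard from the generalised robustness. It is also where the naive tensor-power construction breaks down: expanding $\big(\tfrac{12}{5}\sigma_+^{(1)} - \tfrac{7}{5}\sigma_-^{(1)}\big)^{\otimes n}$ produces a noise operator that still carries residual $\proj{\rm Hog}$ factors on proper subsets of the copies, hence remains magical and only recovers the generalised value $(12/5)^n - 1$. Equivalently, standard robustness is non-multiplicative in general, so the equality can hold only by virtue of the specific SIC structure of the Hoggar state; controlling the cross-copy correlations tightly enough to keep $\sigma_-$ inside $\STAB$ for every $n$ is the genuinely hard step, and it is the reason the statement is recorded here as a conjecture rather than a theorem.
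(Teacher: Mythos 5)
This statement is recorded in the paper as an open conjecture; no proof of it is given there, so there is nothing to compare your argument against, and your proposal --- which explicitly stops short of a proof --- is consistent with that status. The preparatory reductions you carry out are correct and do add something beyond the bare statement: the inequality $R^g_{\STAB} \leq R^s_{\STAB}$ is indeed automatic, the dual witness $\proj{\rm Hog}^{\otimes n}/\norm{\proj{\rm Hog}^{\otimes n}}{\STAB}^\circ$ together with the multiplicativity $\norm{\proj{\rm Hog}^{\otimes n}}{\STAB}^\circ = (5/12)^n$ (valid because the Hoggar state lives on three qubits) and the submultiplicativity of $1+R^g_{\STAB}$ pin down $R^g_{\STAB}(\proj{\rm Hog}^{\otimes n}) = (12/5)^n - 1$ exactly, and the conjecture is therefore equivalent to the single upper bound $R^s_{\STAB}(\proj{\rm Hog}^{\otimes n}) \leq (12/5)^n - 1$. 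Your diagnosis of why this cannot be certified on the dual side, and why the tensor power of the one-copy decomposition only yields the much weaker bound coming from $\norm{\proj{\rm Hog}^{\otimes n}}{\STAB}\leq (19/5)^n$, is also sound (though the negative part of that expansion is built from residual $\sigma_\pm$ factors rather than residual Hoggar factors, as you write).

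The genuine gap is exactly the one you name: exhibiting, for every $n$, a pseudomixture $\proj{\rm Hog}^{\otimes n} = (12/5)^n\,\sigma_+ - \big((12/5)^n - 1\big)\sigma_-$ with \emph{both} $\sigma_\pm$ in $\STAB$. The symmetrisation and complementary-slackness constraints you list narrow the search but do not produce the decomposition, and standard robustness is not multiplicative in general, so no soft argument will close this. As it stands your proposal is a correct reformulation and a research programme, not a proof; if you can carry out the twirling reduction and solve the resulting feasibility problem in the commutant for general $n$ (or even verify it numerically for $n=2,3$ to gather evidence), that would be a genuine contribution beyond what the paper establishes.
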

Establishing the conjecture would then directly lead to the following.
\begin{corollary}
\textbf{If Conjecture~\ref{conj_hoggar} is true}, then the resource theory of many-qubit magic is asymptotically irreversible under stabiliser-preserving operations. Specifically, for any class of operations $\OO$ such that $\sigma \in \STAB \Rightarrow \Lambda(\sigma) \in \STAB \; \forall \Lambda \in \OO$, the single-qubit $\ket{T}$ state satisfies
\begin{equation}\begin{aligned}
	r\!\left(\proj{T} \toO \proj{\rm Hog}\right)  < r\!\left(\proj{\rm Hog} \toO \proj{T}\right)^{-1}.
\end{aligned}\end{equation}
\end{corollary}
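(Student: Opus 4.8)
The plan is to mirror the two-sided scheme already used for Theorem~\ref{thm:magic_irreversibility} and Proposition~\ref{prop:entanglement_irrev}: I would upper bound the distillation rate $r(\proj{T}\toO\proj{\rm Hog})$ and lower bound the inverse dilution rate $r(\proj{\rm Hog}\toO\proj{T})^{-1}$, arranging for both estimates to carry the \emph{same} denominator $\log\tfrac{12}{5}$ so that irreversibility collapses to a clean comparison of numerators. I first prove the bounds for the norm-contractive classes and then extend to an arbitrary $\STAB$-preserving $\OO$ at the very end: since any such channel contracts both the base norm $\norm{\cdot}{\STAB}$ and the generalised-robustness norm $\norm{\cdot}{g,\STAB}$, one has $\OO\subseteq\Omax\subseteq\OO_{g,\STAB}$ and $\OO\subseteq\Omax\subseteq\OO_{\STAB}$, whence $r(\proj{T}\toO\proj{\rm Hog})\le r(\proj{T}\toOx{g,\STAB}\proj{\rm Hog})$ and $r(\proj{\rm Hog}\toO\proj{T})^{-1}\ge r(\proj{\rm Hog}\toOx{\STAB}\proj{T})^{-1}$.

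For the distillation direction I would apply the positivity-refined bound~\eqref{eq:handy_upper_distillable} (Corollary~\ref{cor:upper_bound_dist_positive}) with the base norm, treating $\ket{T}$ as input and $\ket{\rm Hog}$ as target:
\begin{equation}
r(\proj{T}\toOx{g,\STAB}\proj{\rm Hog})\le\frac{\log\norm{\proj{T}}{g,\STAB}}{-\loginfd[\STAB]{\proj{\rm Hog}}}=\frac{\log(4-2\sqrt2)}{\log\frac{12}{5}},
\end{equation}
using $\norm{\proj{T}}{g,\STAB}=1+R^g_{\STAB}(\proj{T})=\bigl(\norm{\proj{T}}{\STAB}^\circ\bigr)^{-1}=4-2\sqrt2$ (the single-qubit $\ket T$ has maximal stabiliser fidelity $\tfrac{2+\sqrt2}{4}$), together with the multiplicativity of the dual base norm on Hoggar states~\cite{bravyi_2019,wang_2020}, which gives $-\loginfd[\STAB]{\proj{\rm Hog}}=-\log\tfrac{5}{12}=\log\tfrac{12}{5}$. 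This step is unconditional and requires no typicality argument.

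For the cost direction I would invoke the multiplicative-norm bound (Corollary~\ref{cor:single_letter_gamma_bound}, cf.~\eqref{eq:hand_lower_cost}) with the stabiliser norm $\norm{\cdot}{\mathcal P}\le\norm{\cdot}{\STAB}$ as the auxiliary multiplicative norm $\gamma$, giving $r(\proj{\rm Hog}\toOx{\STAB}\proj{T})^{-1}\ge P_\tau(\proj T)/\loginf[\STAB]{\proj{\rm Hog}}$. A direct single-qubit optimisation of the dual form~\eqref{eq:dual_form_dH} yields $P_\tau(\proj T)=\log\tfrac{1+\sqrt2}{2}$: writing the feasible $W$ in the Pauli basis and exploiting the $x$–$z$ symmetry of $\ket T$ reduces it to maximising $\tfrac12(c_0+\sqrt2\,c)$ subject to $|c_0|,|c|\le1$ with the self-consistency constraint $\norm{W}{\infty}=\langle W,\proj T\rangle$, optimised at $c_0=c=1$. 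The decisive input is the denominator: \textbf{assuming Conjecture~\ref{conj_hoggar}}, $R^s_{\STAB}(\proj{\rm Hog}^{\otimes n})=R^g_{\STAB}(\proj{\rm Hog}^{\otimes n})$, and since for pure states $1+R^g_{\STAB}(\psi)=\bigl(\norm{\psi}{\STAB}^\circ\bigr)^{-1}$, combined with multiplicativity of the dual base norm this gives $1+R^g_{\STAB}(\proj{\rm Hog}^{\otimes n})=(12/5)^n$; hence $\norm{\proj{\rm Hog}^{\otimes n}}{\STAB}=1+2R^s_{\STAB}(\proj{\rm Hog}^{\otimes n})=2(12/5)^n-1$ and $\loginf[\STAB]{\proj{\rm Hog}}=\log\tfrac{12}{5}$.

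With both bounds now sharing the denominator $\log\tfrac{12}{5}$, irreversibility reduces to $\log(4-2\sqrt2)<\log\tfrac{1+\sqrt2}{2}$, i.e.\ $4-2\sqrt2<\tfrac{1+\sqrt2}{2}$, i.e.\ $7<5\sqrt2$, which holds. I expect the main obstacle to be exactly this denominator: without Conjecture~\ref{conj_hoggar} one only has the single-copy estimate $\loginf[\STAB]{\proj{\rm Hog}}\le\log\norm{\proj{\rm Hog}}{\STAB}=\log\tfrac{19}{5}$, and the resulting lower bound $P_\tau(\proj T)/\log\tfrac{19}{5}\approx0.141$ falls \emph{below} the distillation upper bound $\approx0.181$, so the argument fails. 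The entire conditional result therefore hinges on upgrading the regularised mana-type quantity $\loginf[\STAB]{\proj{\rm Hog}}$ from $\log\tfrac{19}{5}$ to $\log\tfrac{12}{5}$ --- precisely the content of the conjecture --- while everything else is assembly of the one-shot and asymptotic bounds of Section~\ref{sec:asymptotic} with two elementary single-system computations.
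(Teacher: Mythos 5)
Your proposal is, in substance, the paper's own proof: the same upper bound $r(\proj{T}\to\proj{\rm Hog})\le \log(4-2\sqrt2)/\log\frac{12}{5}$ via the generalised-robustness norm and the multiplicativity of $\norm{\cdot}{\STAB}^\circ$ on few-qubit pure states, the same lower bound $r(\proj{\rm Hog}\to\proj{T})^{-1}\ge P_\tau(\proj{T})/\loginf[\STAB]{\proj{\rm Hog}}$ via Corollary~\ref{cor:single_letter_gamma_bound} with $\norm{\cdot}{\mathcal{P}}$, the same value $P_\tau(\proj T)=\log\frac{1+\sqrt2}{2}$, and the same final comparison ($7<5\sqrt2$ is exactly the paper's $0.84<1$). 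The one step I would push back on is your derivation of the denominator: you obtain $\loginf[\STAB]{\proj{\rm Hog}}\le\log\frac{12}{5}$ by asserting that $1+R^g_{\STAB}(\psi)=\bigl(\norm{\psi}{\STAB}^\circ\bigr)^{-1}$ holds \emph{for all pure states} and then invoking multiplicativity of the dual norm to get $1+R^g_{\STAB}(\proj{\rm Hog}^{\otimes n})=(12/5)^n$. That identity is not a general fact — duality only gives $1+R^g_\FF(\psi)\ge 1/\norm{\psi}{\FF}^\circ$ (in entanglement theory the inequality is typically strict), and it is only \emph{cited} for a single copy of $\ket{T}$ and $\ket{\rm Hog}$, not for $\proj{\rm Hog}^{\otimes n}$; moreover the equality, even if true, would give you a \emph{lower} bound on $1+R^g(\proj{\rm Hog}^{\otimes n})$ for free, which is the wrong direction. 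The paper instead uses the generic sub-multiplicativity $1+R^g(\rho^{\otimes n})\le(1+R^g(\rho))^n$, which together with the Conjecture yields $1+R^s_{\STAB}(\proj{\rm Hog}^{\otimes n})\le(12/5)^n$ and hence the needed $\loginf[\STAB]{\proj{\rm Hog}}\le\log\frac{12}{5}$. Replacing your appeal to the pure-state identity by this sub-multiplicativity argument repairs the step without changing anything else; with that fix the proof is correct and coincides with the paper's.
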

\begin{proof}
The assumption that $R^s_{\STAB} (\proj{\rm Hog}^{\otimes n}) = R^g_{\STAB} (\proj{\rm Hog}^{\otimes n})$ tells us that $R^s_{\STAB}+1$ is sub-multiplicative for the Hoggar state. This gives
\begin{equation}\begin{aligned}
 \loginf[\STAB]{\proj{\rm Hog}} \leq \log ( 1 + R^s_{\STAB}(\proj{\rm Hog})) = \log \frac{12}{5}
\end{aligned}\end{equation}
where we used the known value of $R^s_{\STAB}(\proj{\rm Hog})$~\cite{howard_2017}. 
Computing $P_\tau(\proj{T}) = \frac{1+\sqrt{2}}{2}$ and applying Corollary~\ref{cor:single_letter_gamma_bound} then gives
\begin{equation}\begin{aligned}
	r(\proj{\rm Hog} \toOx{\STAB} \proj{T}) \leq \frac{\log \frac{12}{5}}{\log (1+\sqrt{2}) - 1} \approx 4.65.
\end{aligned}\end{equation}
On the other hand, using the fact that for any pure state of up to three qubits it holds that $\norm{\phi^{\otimes n}}{\STAB}^\circ = \left(\norm{\phi}{\STAB}^\circ\right)^n$~\cite{bravyi_2019}, we can upper bound the reverse transformation rate using the norm $\norm{\cdot}{g,{\STAB}}$ (which equals a bound previously shown in Ref.~\cite{seddon_2021}):
\begin{equation}\begin{aligned}
	r(\proj{T} \toOx{\sup} \proj{\rm Hog}) \leq \frac{\log \left(R^g_{\STAB}(\proj{T})+1\right)}{-\log \norm{\proj{\rm Hog}}{\STAB}^\circ} = \frac{1 + \log(2-\sqrt{2})}{\log \frac{12}{5}} \approx 0.18,
\end{aligned}\end{equation}
where we used that $R^g_{\STAB}(\proj{T}) + 1 = 2(2-\sqrt{2})$~\cite{bravyi_2019,seddon_2021} and $\norm{\proj{\rm Hog}}{\STAB}^\circ = \frac{5}{12}$~\cite{takagi_2021}.
Altogether,
\begin{equation}\begin{aligned}
	r(\proj{T} \toOx{\STAB} \proj{\rm Hog}) \, r(\proj{\rm Hog} \toOx{\sup} \proj{T}) \leq \frac{1+\log (2-\sqrt{2})}{\log(1+\sqrt{2})-1} \approx 0.84.
\end{aligned}\end{equation}
This concludes the proof.
\end{proof}

Of course, Conjecture~\ref{conj_hoggar} does not appear to be a necessary requirement for the irreversibility of many-qubit magic, and one could also expect that a transformation other than $\proj{\rm Hog} \leftrightarrow \proj{T}$ could be used to establish the irreversibility of the theory. We only present Conjecture~\ref{conj_hoggar} as one possible way to approach this problem, motivated by the fact that the tempered stabiliser norm $P_\tau$ provides a strong bound for Hoggar state manipulation.

Even if the conjecture itself is not found to be true, we hope that our discussion in this section motivates further research into the asymptotic manipulation of non-stabiliser states, as well as, more generally, into what is perhaps the most intriguing question prompted by our framework: what is it that makes a particular resource theory irreversible?\footnote{We remark here that, following up on earlier conjectures~\cite{brandao_2008-1,brandao_2010}, several recent works showed that reversibility of quantum resources can be recovered in various relaxed settings~\cite{regula_2024-1,wang_2023,ganardi_2024,hayashi_stein,blurring}. However, we stress that this can only be accomplished under classes of protocols that are strictly larger than the free operations $\OO_{\max}$, and indeed one needs to go significantly beyond conventional free operations to have any hope of achieving entanglement reversibility in general~\cite{lami_2021-1}. It is therefore still of significant practical and conceptual interest to understand which resources can be reversibly manipulated without such concessions.}

\bigskip
\noindent \textbf{Acknowledgments.} BR acknowledges the support of the Japan Society for the Promotion of Science (JSPS) KAKENHI Grant No.\ 22KF0067.  LL acknowledges support from the Alexander von Humboldt Foundation.

\newcommand{\nocontentsline}[3]{}
\let\addcontentsline=\nocontentsline

\bibliographystyle{apsc}
\bibliography{main}

%%%%%%%%%%%%%%%%%%%%%%%%%%%%%%%%%%%%%%%%%%%%%%%%%%%%%%%%%%%%%%%%%%%%
%%%%%%%%%%%%%%%%%%%%%%%%%%%%%%%%%%%%%%%%%%%%%%%%%%%%%%%%%%%%%%%%%%%%
%%%%%%%%%%%%%%%%%%%%%%%%%%%%%%%%%%%%%%%%%%%%%%%%%%%%%%%%%%%%%%%%%%%%

\appendix

%%%%%%%%%%%%%%%%%%%%%%%%%%%%%%%%%%%%%%%%%%%%%%%%%%%%%%%%%%%%%%%%%%%%
\section{Details of Wigner function computations}\label{appendix:wigner}
%%%%%%%%%%%%%%%%%%%%%%%%%%%%%%%%%%%%%%%%%%%%%%%%%%%%%%%%%%%%%%%%%%%%

We first show Eq.~\eqref{eq:moredetails1}, namely
\begin{equation}\begin{aligned}
	\norm{\proj{N}^{\otimes 2}}{\F_W} = \frac{11}{3} < 4 = \norm{\proj{N}}{\F_W}^2.
\end{aligned}\end{equation}

For the single copy of the state $\proj{N}$, we can bound $\norm{\proj{N}}{\FF_W} \leq \norm{\proj{N}}{\STAB} = 2$, where the latter value was obtained in~\cite{takagi_2021}. The opposite inequality follows from the fact that $\norm{\proj{N}}{\FF_W} = 1 + 2 R^s_{\FF_W} (\proj{N}) \geq 1 + 2 R^g_{\FF_W} (\proj{N}) = 2$, where the last equality was shown in~\cite{wang_2020}.

Recall that, given the Heisenberg-Weyl operators $T_{(a_1,a_2)}$ of a single-qutrit Hilbert space, the bipartite operators take the form $T_{(a_1, a_2) \oplus (b_1,b_2)} = T_{(a_1,a_2)} \otimes T_{(b_1,b_2)}$. The bipartite Wigner representation is then defined analogously as $W_{a_1,a_2,b_1,b_2}(\rho) = \frac{1}{9} \Tr \rho A_{(a_1, a_2) \oplus (b_1,b_2)}$, where $A_{(a_1, a_2) \oplus (b_1,b_2)} = T_{(a_1, a_2) \oplus (b_1,b_2)} A_{(0,0)\oplus(0,0)} T^\dagger_{(a_1, a_2) \oplus (b_1,b_2)}$ with $A_{(0,0)\oplus(0,0)} = \frac{1}{9} \sum_{a_1,a_2,b_1,b_2=0}^{2} T_{(a_1, a_2) \oplus (b_1,b_2)}$.

Consider now the state $\proj{N}^{\otimes 2}$, whose Wigner representation $W_{a_1,a_2,b_1,b_2}(\proj{N}^{\otimes 2})$ takes the following values:

\renewcommand{\arraystretch}{1.2}
\begin{center}
\begin{tabular}{l|ccccccccc}
\diagbox{$(a_1,a_2)$}{$(b_1,b_2)$} & $(0,0)$ & $(0,1)$ & $(0,2)$ & $(1,0)$ & $(1,1)$ & $(1,2)$ & $(2,0)$ & $(2,1)$ & $(2,2)$ \\\midrule
$(0,0)$                                                         & $\frac{1}{36}$    & $-\frac{1}{18}$   & $\frac{1}{36}$    & $-\frac{1}{36}$   & $-\frac{1}{36}$   & $-\frac{1}{36}$   & $-\frac{1}{36}$   & $-\frac{1}{36}$   & $-\frac{1}{36}$   \\
$(0,1)$                                                         & $-\frac{1}{18}$   & $\frac{1}{9}$     & $-\frac{1}{18}$   & $\frac{1}{18}$    & $\frac{1}{18}$    & $\frac{1}{18}$    & $\frac{1}{18}$    & $\frac{1}{18}$    & $\frac{1}{18}$    \\
$(0,2)$                                                         & $\frac{1}{36}$    & $-\frac{1}{18}$   & $\frac{1}{36}$    & $-\frac{1}{36}$   & $-\frac{1}{36}$   & $-\frac{1}{36}$   & $-\frac{1}{36}$   & $-\frac{1}{36}$   & $-\frac{1}{36}$   \\
$(1,0)$                                                         & $-\frac{1}{36}$   & $\frac{1}{18}$    & $-\frac{1}{36}$   & $\frac{1}{36}$    & $\frac{1}{36}$    & $\frac{1}{36}$    & $\frac{1}{36}$    & $\frac{1}{36}$    & $\frac{1}{36}$    \\
$(1,1)$                                                         & $-\frac{1}{36}$   & $\frac{1}{18}$    & $-\frac{1}{36}$   & $\frac{1}{36}$    & $\frac{1}{36}$    & $\frac{1}{36}$    & $\frac{1}{36}$    & $\frac{1}{36}$    & $\frac{1}{36}$    \\
$(1,2)$                                                         & $-\frac{1}{36}$   & $\frac{1}{18}$    & $-\frac{1}{36}$   & $\frac{1}{36}$    & $\frac{1}{36}$    & $\frac{1}{36}$    & $\frac{1}{36}$    & $\frac{1}{36}$    & $\frac{1}{36}$    \\
$(2,0)$                                                         & $-\frac{1}{36}$   & $\frac{1}{18}$    & $-\frac{1}{36}$   & $\frac{1}{36}$    & $\frac{1}{36}$    & $\frac{1}{36}$    & $\frac{1}{36}$    & $\frac{1}{36}$    & $\frac{1}{36}$    \\
$(2,1)$                                                         & $-\frac{1}{36}$   & $\frac{1}{18}$    & $-\frac{1}{36}$   & $\frac{1}{36}$    & $\frac{1}{36}$    & $\frac{1}{36}$    & $\frac{1}{36}$    & $\frac{1}{36}$    & $\frac{1}{36}$    \\
$(2,2)$                                                         & $-\frac{1}{36}$   & $\frac{1}{18}$    & $-\frac{1}{36}$   & $\frac{1}{36}$    & $\frac{1}{36}$    & $\frac{1}{36}$    & $\frac{1}{36}$    & $\frac{1}{36}$    & $\frac{1}{36}$   
\end{tabular}
\end{center}

Define $X_+$ as the operator with the following Wigner representation:

\begin{center}
\begin{tabular}{l|ccccccccc}
\diagbox{$(a_1,a_2)$}{$(b_1,b_2)$} & $(0,0)$ & $(0,1)$ & $(0,2)$ & $(1,0)$ & $(1,1)$ & $(1,2)$ & $(2,0)$ & $(2,1)$ & $(2,2)$ \\\midrule
$(0,0)$                                                         & $\frac{1}{36}$    & $0$       & $\frac{1}{36}$    & $0$       & $0$       & $0$       & $0$       & $0$       & $0$       \\
$(0,1)$                                                         & $0$       & $\frac{2}{9}$     & $0$       & $\frac{1}{12}$    & $\frac{1}{12}$    & $\frac{1}{12}$    & $\frac{1}{12}$    & $\frac{1}{12}$    & $\frac{1}{12}$    \\
$(0,2)$                                                         & $\frac{1}{36}$     & $0$       & $\frac{1}{36}$     & $0$       & $0$       & $0$       & $0$       & $0$       & $0$       \\
$(1,0)$                                                         & $0$       & $\frac{1}{12}$    & $0$       & $\frac{1}{36}$    & $\frac{1}{36}$    & $\frac{1}{36}$    & $\frac{1}{36}$    & $\frac{1}{36}$    & $\frac{1}{36}$    \\
$(1,1)$                                                         & $0$       & $\frac{1}{12}$    & $0$       & $\frac{1}{36}$    & $\frac{1}{36}$    & $\frac{1}{36}$    & $\frac{1}{36}$    & $\frac{1}{36}$    & $\frac{1}{36}$    \\
$(1,2)$                                                         & $0$       & $\frac{1}{12}$    & $0$       & $\frac{1}{36}$    & $\frac{1}{36}$    & $\frac{1}{36}$    & $\frac{1}{36}$    & $\frac{1}{36}$    & $\frac{1}{36}$    \\
$(2,0)$                                                         & $0$       & $\frac{1}{12}$    & $0$       & $\frac{1}{36}$    & $\frac{1}{36}$    & $\frac{1}{36}$    & $\frac{1}{36}$    & $\frac{1}{36}$    & $\frac{1}{36}$    \\
$(2,1)$                                                         & $0$       & $\frac{1}{12}$    & $0$       & $\frac{1}{36}$    & $\frac{1}{36}$    & $\frac{1}{36}$    & $\frac{1}{36}$    & $\frac{1}{36}$    & $\frac{1}{36}$    \\
$(2,2)$                                                         & $0$       & $\frac{1}{12}$    & $0$       & $\frac{1}{36}$    & $\frac{1}{36}$    & $\frac{1}{36}$    & $\frac{1}{36}$    & $\frac{1}{36}$    & $\frac{1}{36}$   
\end{tabular}
\end{center}

and the operator $X_-$ with the following Wigner representation:

\begin{center}
\begin{tabular}{l|ccccccccc}
\diagbox{$(a_1,a_2)$}{$(b_1,b_2)$} & $(0,0)$ & $(0,1)$ & $(0,2)$ & $(1,0)$ & $(1,1)$ & $(1,2)$ & $(2,0)$ & $(2,1)$ & $(2,2)$ \\\midrule
$(0,0)$                                                         & $0$       & $\frac{1}{18}$    & $0$       & $\frac{1}{36}$    & $\frac{1}{36}$    & $\frac{1}{36}$    & $\frac{1}{36}$    & $\frac{1}{36}$    & $\frac{1}{36}$    \\
$(0,1)$                                                         & $\frac{1}{18}$    & $\frac{1}{9}$     & $\frac{1}{18}$    & $\frac{1}{36}$    & $\frac{1}{36}$    & $\frac{1}{36}$    & $\frac{1}{36}$    & $\frac{1}{36}$    & $\frac{1}{36}$    \\
$(0,2)$                                                         & $0$       & $\frac{1}{18}$    & $0$       & $\frac{1}{36}$    & $\frac{1}{36}$    & $\frac{1}{36}$    & $\frac{1}{36}$    & $\frac{1}{36}$    & $\frac{1}{36}$    \\
$(1,0)$                                                         & $\frac{1}{36}$    & $\frac{1}{36}$    & $\frac{1}{36}$    & $0$       & $0$       & $0$       & $0$       & $0$       & $0$       \\
$(1,1)$                                                         & $\frac{1}{36}$    & $\frac{1}{36}$    & $\frac{1}{36}$    & $0$       & $0$       & $0$       & $0$       & $0$       & $0$       \\
$(1,2)$                                                         & $\frac{1}{36}$    & $\frac{1}{36}$    & $\frac{1}{36}$    & $0$       & $0$       & $0$       & $0$       & $0$       & $0$       \\
$(2,0)$                                                         & $\frac{1}{36}$    & $\frac{1}{36}$    & $\frac{1}{36}$    & $0$       & $0$       & $0$       & $0$       & $0$       & $0$       \\
$(2,1)$                                                         & $\frac{1}{36}$    & $\frac{1}{36}$    & $\frac{1}{36}$    & $0$       & $0$       & $0$       & $0$       & $0$       & $0$       \\
$(2,2)$                                                         & $\frac{1}{36}$    & $\frac{1}{36}$    & $\frac{1}{36}$    & $0$       & $0$       & $0$       & $0$       & $0$       & $0$      
\end{tabular}
\end{center}

It then holds that $\proj{N}^{\otimes 2} = X_+ - X_-$ by construction. Furthermore, defining
\begin{equation}\begin{aligned}
\ket{v_1} &\coloneqq \frac{1}{\sqrt{10}} (1,0,1,-1,-2,-1,1,0,1)^T\\
\ket{v_2} &\coloneqq \frac{1}{2} (0,1,0,-1,0,-1,0,1,0)^T\\
\ket{v_3} &\coloneqq \frac{1}{3} (1,1,1,1,1,1,1,1,1)^T\\
\ket{v_4} &\coloneqq \frac{1}{\sqrt{6}} (-1,0,1,-1,0,1,-1,0,1)^T\\
\ket{v_5} &\coloneqq \frac{1}{2\sqrt{3}} (0,-1,-2,1,0,-1,2,1,0)^T
\end{aligned}\end{equation}
one can verify that
\begin{equation}\begin{aligned}
X_+ &= \proj{N}^{\otimes 2} + \frac{5}{12} \proj{v_1} + \frac{5}{12}\proj{v_2} + \frac13 \proj{v_3} + \frac{1}{2} \proj{v_4} + \frac{1}{12} \proj{v_5}\\
X_- &= \frac{5}{12} \proj{v_1} + \frac{5}{12}\proj{v_2} + \frac13 \proj{v_3} + \frac{1}{2} \proj{v_4} + \frac{1}{12} \proj{v_5},
\end{aligned}\end{equation}
which shows that $X_\pm$ are both positive semidefinite. As $X_\pm$ have a positive Wigner representation by construction, we get
\begin{equation}\begin{aligned}
	\norm{\proj{N}^{\otimes 2}}{\FF_W} \leq \Tr X_+ + \Tr X_- = \frac{11}{3}.
\end{aligned}\end{equation}
This value can be verified to be optimal by employing the dual form of $\norm{\cdot}{\FF_W}$, but we do not need this for our argument.

We now show how to bound the value of $W_\tau(\proj{H_+})$, which we used in Eqs.~\eqref{eq:onlygives} and~\eqref{eq:moredetails2}. Recall that
\begin{equation}\begin{aligned}
	W_\tau(\rho) = \log \max \lset \< \rho, X\> \bar \norm{X}{W}^\circ \leq 1, \; \norm{X}{\infty} = \Tr \rho X \rset,
\end{aligned}\end{equation}
where
\begin{equation}\begin{aligned}
	\norm{X}{W}^\circ = \max_{a_1, a_2} \, 3 \left|W_{a_1,a_2}(X)\right|.
\end{aligned}\end{equation}
Consider then the ansatz
\begin{equation}\begin{aligned}
	X = \frac{1+2\sqrt{3}}{3} \proj{H_+} - \frac{2\sqrt{3}-1}{3} \proj{H_-} - \frac{1}{3} \proj{H_i},
\end{aligned}\end{equation}
where $\ket{H_-}$ and $\ket{H_i}$ are the eigenvectors of the Hadamard gate corresponding to the eigenvalues $-1$ and $+i$, respectively.
Clearly, $\norm{X}{\infty} = \braket{H_+ | X | H_+} = \frac{1+2\sqrt{3}}{3}$. Computing the Wigner representation of $X$, we can see that it takes the form:
\begin{center}
\begin{tabular}{l|ccc}
\diagbox{$a_1$}{$a_2$\;} & $0$ & $1$ & $2$ \\\midrule
$0$ & $\frac{1}{3}$ & $\frac{1}{3}$ & $\frac{1}{3}$\\
$1$ & $\frac{1}{3}$ & $-\frac{1}{3}$ & $-\frac{1}{3}$\\
$2$ & $\frac{1}{3}$ & $-\frac{1}{3}$ & $-\frac{1}{3}$
\end{tabular}
\end{center}
Thus $\norm{X}{W}^\circ = 1$, and we conclude that
\begin{equation}\begin{aligned}
	W_\tau(\proj{H_+}) \geq \log \braket{H_+ | X | H_+} = \log \frac{1+2\sqrt{3}}{3}.
\end{aligned}\end{equation}
A numerical evaluation can be used to confirm that this value is in fact optimal.

\end{document}